\newcommand{\mathbbm}[1]{\text{\usefont{U}{bbm}{m}{n}#1}} 
\newtheorem{theorem}{Theorem}
\newcommand{\bra}[1]{\langle #1|}
\newcommand{\ket}[1]{|#1\rangle}
\newcommand{\braket}[2]{\langle #1|#2\rangle}
\newcommand{\ketbra}[1]{| #1\rangle \langle #1|}
\newcommand{\be}{\begin{equation}}
\newcommand{\ee}{\end{equation}}
\newcommand{\bea}{\begin{eqnarray}}
\newcommand{\eea}{\end{eqnarray}}
\newcommand{\mean}[1]{\ensuremath{\langle{#1}\rangle}}
\newcommand{\BB}{\ensuremath{\mathcal{B}}}
\newcommand{\kommentar}[1]{}
\renewcommand{\vr}{\ensuremath{\varrho}}
\newtheorem{lemma}[theorem]{Lemma}
\newcommand{\forget}[1]{}
\begin{document}

\title{
Extreme violation of local realism in quantum hypergraph states}

\date{\today}

\author{Mariami Gachechiladze}
\author{Costantino Budroni}
\author{Otfried G\"uhne}
\affiliation{Naturwissenschaftlich-Technische Fakult\"at, 
Universit\"at Siegen, Walter-Flex-Str. 3, 57068 Siegen, Germany}

\begin{abstract}
Hypergraph states form a family of multiparticle quantum states 
that generalizes the well-known concept of Greenberger-Horne-Zeilinger 
states, cluster states, and more broadly graph states. We study the 
nonlocal properties of quantum hypergraph states. We demonstrate that 
the correlations in hypergraph states can be used to derive various 
types of nonlocality proofs, including Hardy-type arguments and Bell 
inequalities for genuine multiparticle nonlocality. Moreover, we show 
that hypergraph states allow for an exponentially increasing violation 
of local realism which is robust against loss of particles. Our results 
suggest that certain classes of hypergraph states are novel resources 
for quantum metrology and measurement-based quantum computation.
\end{abstract}

\pacs{03.65.Ta, 03.65.Ud}
\maketitle

{\it Introduction.---} 
Multiparticle entanglement is central for discussions 
about the foundations of quantum mechanics, protocols 
in quantum information processing, and experiments in 
quantum optics. Its characterization has, however, turned 
out to be difficult. One problem hindering the exploration
of multiparticle entanglement is the exponentially increasing
dimension of the Hilbert space. This implies that making statements 
about general quantum states is difficult. So, one has to 
concentrate on families of multiparticle states with an 
easier-to-handle description. In fact, symmetries and other
kinds of simplifications seem to be essential for a state
to be a useful resource. Random states can often be shown
to be highly entangled, but useless for quantum information 
processing \cite{entangleduseful}.

An outstanding class of useful multiparticle quantum states is given by
the family of graph states \cite{hein}, which includes the 
Greenberger-Horne-Zeilinger (GHZ) states and the cluster states as 
prominent examples. Physically, these states have turned out to be relevant
resources for quantum metrology, quantum error correction, or 
measurement-based quantum computation \cite{hein}. Mathematically, these states 
are elegantly given by graphs, which describe the correlations 
and also a possible interaction structure leading to the graph state. 
In addition, graph states can be defined via a so-called stabilizer
formalism: A graph state is the unique eigenstate of a set of commuting
observables, which are local in the sense that they are tensor products
of Pauli measurements. These stabilizer observables are important for
easily computing correlations leading to violations of Bell 
inequalities \cite{Mermin, gthb}, as well as designing simple schemes to characterize
graph states experimentally \cite{gtreview}.

Recently, this family of states has been generalized to hypergraph states 
\cite{Kruszynska2009, Qu2013_encoding, Rossi2013, Otfried, chenlei, lyons}. 
These states have been recognized as special cases of the so-called locally 
maximally entangleable (LME) states \cite{Kruszynska2009}. Mathematically, 
they are described by hypergraphs, a generalization of graphs, where a single hyperedge can connect more than two vertices. They can also  be described by 
a stabilizer formalism, but this time, the stabilizing operators are not 
local. So far, hypergraph states have turned out to play a role for search algorithms in quantum computing \cite{scripta}, quantum fingerprinting protocols \cite{mora}, and they have been shown to be complex enough to serve as witnesses 
in  all QMA problems \cite{qma}. They have recently been 
investigated in condensed matter physics as ground states of spin models with
interesting topological properties \cite{Yoshida, Akimasa}.  In addition, 
equivalence classes and further entanglement properties of hypergraph 
states have been studied \cite{Otfried}.

\begin{figure}
\includegraphics[width=0.45\textwidth ]{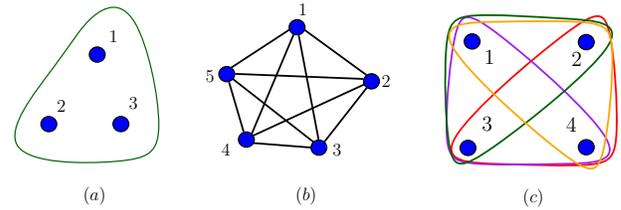}
\caption{Examples of hypergraphs.
(a) A simple hypergraph with three vertices and a single edge 
$e=\{1,2,3\}$ connecting all three vertices. The corresponding
hypergraph state $\ket{H_3}=C_{123}\ket{+}^{\otimes 3}=
(\ket {000}+\ket{001}+\dots +\ket{110}-\ket{111} )/{\sqrt{8}}$
is discussed in detail in the text.
(b) This hypergraph contains only two-edges, so it is an ordinary 
graph. The state corresponding to this fully connected graph is the
five-qubit GHZ state.
(c) The fully connected three-uniform hypergraph represents a state
that can be seen as a generalization of the GHZ state. 
}
\label{fig-hgbild}
\end{figure}

In this paper we show that hypergraph states violate local realism
in an extreme manner, but in a way that is robust against loss of particles.
We demonstrate that this leads to applications of these states in quantum 
metrology and quantum computation. We see that the stabilizer formalism 
describing hypergraph states, despite being nonlocal, can be used 
to derive Hardy-type nonlocality arguments \cite{Hardy92}, Bell 
inequalities for genuine multiparticle entanglement \cite{Svetlichny87},
or a violation of local realism with a strength exponentially increasing 
with the number of particles. Our approach starts precisely with the 
properties of the stabilizer, in order to identify the useful correlations 
provided by quantum mechanics. This is in contrast to previous approaches 
that were either too general, e.g. Bell inequalities for general multiparticle states \cite{popescurohrlich,wwzb}, or too restricted, considering only few 
specific examples of hypergraph states and leading to non robust criteria \cite{Otfried}.
The violation of local realism is the key to further applications 
in information processing: Indeed, it is well known 
that violation of a Bell inequality leads to advantages, 
in distributed computation scenarios \cite{brunnerreview, brukner}. 
In addition, we will explicitly show that certain classes of hypergraph 
states lead to Heisenberg scaling in quantum metrology and advantages in measurement-based quantum computation.



{\it Hypergraph states.---} A hypergraph
$H=(V,E)$ consists of a set of vertices $V=\{1,...,N\}$ and a set 
of hyperedges $E\subset 2^V$, with $2^V$ the power set of $V$. While for graphs edges connect only two vertices, 
hyperedges can connect more than two vertices; 
examples of hypergraphs are depicted in Fig.~\ref{fig-hgbild}. For any hypergraph we define the corresponding hypergraph state $\ket{H}$ as the $N$-qubit 
state 
\begin{equation}
\ket{H}=\prod_{e\in E} C_e\ket{+}^{\otimes N},
\label{eq-hg-creation}
\end{equation}
where $\ket{+}=(\ket{0}+\ket{1})/\sqrt{2}$, $e$ is a hyperedge and $C_e$ is a multi-qubit phase gate acting on the Hilbert space associated with the vertices $v\in e$,
given by the matrix $C_e =\mathbbm {1}  - 2\ket{1\dots 1}\bra{1\dots 1}$. 
The first nontrivial hypergraph state consists of $N=3$ qubits 
connected by a single hyperedge [see Fig.~\ref{fig-hgbild}(a)]. Hypergraph states 
have been recognized as special cases of LME states, generated via a fixed interaction phase of $\phi=\pi$ \cite{Kruszynska2009}.

Alternatively, we can define the hypergraph states using a stabilizer 
formalism \cite{Otfried}.  For each qubit $i$
we define the operator
\begin{equation}
g_i=X_i\bigotimes_{e\in E} C_{e\backslash \{i\}}.
\label{eq-hg-stabilizer}
\end{equation}
Here and in what follows, we denote by $X_i$ and $Z_i$ the Pauli matrices, acting on $i^{th}$ qubit.  The hypergraph state can be defined as the unique eigenstate for all of 
them, $g_i \ket{H}=\ket{H}$ with the eigenvalue $+1$.  Consequently, the 
hypergraph state is an eigenstate of the entire stabilizer, i.e., the 
commutative group formed by all the products of the $g_i$. It should 
be noted that the  $g_i$ are, in general, non-local operators, 
as they are not tensor-products of operators acting on single parties. 
We say that a hyperedge has \textit{cardinality} $k$, 
if it circumscribes $k$ vertices and a hypergraph is \textit{$k$-uniform}, 
if all edges are $k$-edges. Finally, note that different 
hypergraphs may lead to equivalent hypergraph states, in the sense
that the two states can be converted into one other by a local basis 
change. For small numbers of qubits, the resulting equivalence classes
have been identified \cite{Otfried}.


{\it Local correlations from the nonlocal stabilizer.---}
The key observation for the construction of our nonlocality arguments is
that the stabilizer of hypergraph states, despite being nonlocal, predicts
perfect correlations for some local measurements. In the following, we explain
this for the three-qubit hypergraph state $\ket{H_3}$, but the method is general.
The stabilizing operators for the three-qubit hypergraph state are
\begin{equation}\label{1}
g_1=X_1\otimes C_{23}, \quad g_2=X_2\otimes C_{13}, \quad g_3=X_3\otimes C_{12}.
\end{equation} 
We can expand the controlled phase gate $C_{ij}$ on two qubits, leading to 
\begin{equation}
\label{phasegate}
g_1=X_1\otimes (\ketbra{00}+\ketbra{01}+\ketbra{10}-\ketbra{11})
\end{equation}
and similar expressions for the other $g_i$. Since $g_1 \ket{H_3}=+\ket{H_3}$, 
the outcomes for $X$ measurements on the first qubit and $Z$ measurements 
on the second and third qubits are correlated: if one measures $"+"$ on 
the first qubit, then the other two parties cannot both measure $"-"$ in 
$Z$ direction, as this would produce $-1$ as the overall eigenvalue. So, we 
extract the first correlation from the stabilizer formalism:  
\begin{equation}
\label{A1}
P(+--|XZZ)=0.
\end{equation}
The l.h.s of Eq.~(\ref{A1}) denotes the probability of measuring 
$+--$ in $XZZ$ on the qubit $1,2$, and $3$, respectively.
Similarly, it follows that  if one measures $"-"$ in $X$ direction 
on the first qubit, then the other  parties, both have to measure 
$"-"$ in $Z$ direction. So we have:
\begin{equation}
\label{A}
P(-++|XZZ)+ P(-+-|XZZ)+P(--+|XZZ)=0,
\end{equation}
which implies, of course, that each of the probabilities is zero.
Since the three-qubit hypergraph state is symmetric, the same correlations 
for measuring $X$ on other qubits can be obtained by considering $g_2$ 
and $g_3$, leading to permutations of correlations  in Eq.~(\ref{A1} ,\ref{A}).

{\it The three-qubit hypergraph state $\ket{H_3}$.---}
We start with the discussion
of fully local hidden variable (HV) models. Such models assign 
for any value of the HV $\lambda$ results to all 
measurements of the parties in a local manner, meaning that the 
probabilities for a given HV factorize. If we denote by $r_i$ the 
result and by $s_i$ the measurement setting on the $i^{th}$ particle, 
respectively, then the probabilities coming from local models 
are of the form
\begin{align}
P & (r_1,r_2,r_3  |s_1,s_2,s_3)
=
\\
= &
\int d\lambda p(\lambda)
\chi^A(r_1|s_1,\lambda)
\chi^B(r_2|s_2,\lambda)
\chi^C(r_3|s_3,\lambda)
\nonumber.
\end{align}
For probabilities of this form, it is well known that it 
suffices to consider models which are, for a given $\lambda$,
deterministic. This means that  $\chi^i$ takes only the values
$0$ or $1$, and there is only a finite set of $\chi^i$ to 
consider.

\noindent
{\bf Observation 1.} 
{\it If a fully local hidden variable model satisfies the conditions from 
Eq.~(\ref{A1}, \ref{A})  and their symmetric correlations coming from the permutations, then it must fulfill}
\begin{equation}
 P(+--|XXX)+P(-+-|XXX)+P(--+|XXX) = 0.
\end{equation}
The proof of this statement is done by exhausting all possible local deterministic
assignments.
 
In contrast, for $\ket{H_3}$ we have   
\begin{align}
P(+--|XXX)=\frac{1}{16}
\label{eq-h3corr}
\end{align}
and the same holds for the permutations of the qubits. 
The above is a so-called Hardy argument \cite{Hardy92}, namely, a set of joint probabilities equal to $0$ or, equivalently, logical implications that together implies that some other probability is equal to zero. 

Our method shows how the correlations of the nonlocal stabilizer can be used 
for Hardy-type arguments. We recall that Hardy-type arguments have
been obtained for all permutation-symmetric states \cite{Wang, Abramsky}. 
However, they involved  different settings and have no direct connection 
with the stabilizer formalism, making a generalization complicated. In 
contrast, we will see that our measurements can even be used to prove 
genuine multiparticle nonlocality of the hypergraph state. 
First, we  translate the Hardy-type argument into a Bell inequality:

\noindent
{\bf Remark 2.}
{\it Putting together all the null terms derived from the  stabilizer formalism 
and subtracting the terms causing a Hardy-type argument, we obtain the Bell 
inequality
\begin{align}
\label{3correlations}
\langle \BB_3^{(1)} \rangle&= \big[  P(+--|XZZ)+  P(-++|XZZ)
\nonumber
\\
+&P(-+-|XZZ)+P(--+|XZZ)+ \mbox{ permutat.} \big]
\nonumber
\\
-&[ P(+ --|XXX)+ \mbox{ permutations }\big] \geq 0,
\end{align}
where the permutations include all distinct terms that are obtained
by permuting the qubits. The three-uniform hypergraph state violates 
the inequality (\ref{3correlations}) with the value of $\langle \BB_3^{(1)} \rangle=-3 / 16$.}

This Bell inequality follows from the Hardy argument:  If a deterministic
local model predicts one of the results with the minus signs, it 
also has to predict at least one of the results corresponding to the 
terms with a plus sign, otherwise it contradicts with the Hardy argument. In addition, all the terms with a minus sign 
are exclusive, so a deterministic LHV model can predict only 
one of them.

The Hardy-type argument and the Bell inequality can be generalized to a higher 
number of qubits, if we consider $N$-qubit hypergraphs with the single 
hyperedge having a cardinality $N$:

\noindent
{\bf Observation 3.}
{\it Consider the $N$-qubit hypergraph state with a single hyperedge
of cardinality $N$. Then, all the correlations coming from the 
stabilizer [as generalizations of Eqs.~(\ref{A1},\ref{A})] imply 
that for any possible set of results $\{r_i\}$ where one $r_{i_1}=+1$
and two $r_{i_2}= r_{i_3} = -1$ one has
\begin{equation}
 P(r_1, r_2, ..., r_N|X_1 X_2 ... X_N) = 0.
\end{equation}
For the hypergraph state, however, this probability equals 
$1/2^{(2N-2)}.$ This Hardy-type argument leads to a Bell 
inequality as in Eq.~(\ref{3correlations}) which is violated
by the state with a value of $-(2^N-N-2)/2^{(2N-2)}.$}

Clearly, the violation of the Bell inequality is not strong, as it does
not increase with the number of particles. Nevertheless, Observation~3 
shows that the nonlocal stabilizer formalism allows one to easily obtain nonlocality
proofs. In fact, one can directly derive similar arguments for other hypergraph states (e.g. states with one hyperedge of cardinality $N$ and one 
further arbitrary hyperedge), these results will be presented elsewhere. Note
that these states are not symmetric, so the results of Refs.~\cite{Wang, Abramsky}
do not apply. 

So far, we only considered fully local models, where for a 
given HV all the probabilities factorise. Now we go 
beyond this restricted type of models to the so-called hybrid models \cite{Svetlichny87}. 
We consider a  bipartition of the three particles, say $A|BC$, and 
consider a model of the  type
$
P  (r_1,r_2,r_3 |s_1,s_2,s_3)
=
\int d\lambda p(\lambda)
\chi^A(r_1|s_1,\lambda)
\chi^{BC}(r_2, r_3|s_2, s_3,\lambda).
$
Here, Alice is separated from the rest, but $\chi^{BC}$ may contain 
correlations, e.g., coming from an entangled state between $B$ and $C$.
In order to be physically reasonable, however, we still request 
$\chi^{BC}$ not to allow instantaneous signaling.

This kind of models, even if different bipartitions are mixed, cannot explain 
the correlations of the hypergraph state, meaning that the hypergraph state 
is genuine multiparticle nonlocal. First, one can see by direct inspection 
that the stabilizer conditions from Eqs.~(\ref{A1}, \ref{A})
are not compatible with the hypergraph correlations $P(---|XXX) = 1/16$ 
and  $P(---|ZZZ) = 1/8$. Contrary to the correlations in Eq.~(\ref{eq-h3corr})
these are symmetric, and allow the construction of a Bell-Svetlichny 
inequality \cite{Svetlichny87} valid for all the different bipartitions:

\noindent
{\bf Observation 4.}
{\it 
Putting all the terms from the hypergraph stabilizer formalism 
and the correlations $P(---|XXX)$ and $P(---|ZZZ)$ together, we obtain 
the following Bell-Svetlichny inequality for  genuine multiparticle  nonlocality,
\begin{align}
\label{3gen}
\langle\BB_3^{(2)}\rangle&=
 \big[P(+--|XZZ)+  P(-++|XZZ)
\nonumber
\\
+&P(-+-|XZZ)+P(--+|XZZ)+ \mbox{ permutat.} \big]
\nonumber
\\
+&P(- --|XXX) -   P(- - -|ZZZ) \geq 0,
\end{align}
which is violated by the state $\ket{H_3}$ with $\langle\BB_3^{(2)}\rangle=-1 \backslash 16$.}

The proof is done by an exhaustive assignments of nonsignaling 
and local models.

To investigate the noise tolerance of Ineq.~\eqref{3gen}, we 
consider states of the type
$
\varrho= (1-\varepsilon)\ket{H}\bra{H}+\varepsilon \mathbbm{1}/8
$
and ask how much noise can be added, while the inequality is still 
violated. The white noise tolerance of the Ineq.~(\ref{3gen})  
is  $\varepsilon = {1}/{13} \approx 7.69\%$  and  is optimal 
in the sense that for larger values of $\varepsilon$ a hybrid model 
can be found which explains all possible measurements of $X$ and $Z$ 
(within numerical precision). The existence of such a model can be 
shown by linear programming (see Appendix A \cite{appremark}). With the same 
method we can also prove that the state becomes fully local with respect 
to $X$ and $Z$ measurements for $\varepsilon \geq  2/3  \approx 66.6\% $.


{\it Three uniform hypergraph states.---} 
Let us extend our analysis to hypergraph states with a 
larger number of particles. Here, it is interesting to ask 
whether the violation of Bell inequalities increases exponentially 
with the number of parties. Such a behaviour has previously been 
observed only for GHZ states \cite{Mermin} and some cluster states 
\cite{gthb}. 

GHZ states are described by fully connected graphs 
(see Fig.~\ref{fig-hgbild}), i.e., fully connected 
two-uniform hypergraph states. It is thus natural to start 
with fully connected three-uniform hypergraph states. First, 
we observe that for such states on $N$ qubits and 
for even $m$ with $1<m<N$
\begin{equation}\label{lemma3}
\langle \underset{m}{\underbrace{X\dots X}}Z\dots Z\rangle =\begin{cases}
\begin{array}[t]{cc}
+\frac{1}{2} & \mbox{\ensuremath{\mbox{if \ensuremath{m=2} mod \ensuremath{4}}}},\\
-\frac{1}{2} & \mbox{\ensuremath{\mbox{if \ensuremath{m=0} mod \ensuremath{4}}}}.
\end{array}\end{cases}
\end{equation}
Moreover, if $m=N$, then the correlations are given by 
\begin{equation}\label{lemma33}
\langle \underset{N}{\underbrace{XX\dots XX}}\rangle =\begin{cases}
\begin{array}[t]{cc}
0 & \mbox{\ensuremath{\mbox{if \ensuremath{N=0} mod \ensuremath{4}}}},\\
1  & \mbox{\ensuremath{\mbox{if \ensuremath{N=2} mod \ensuremath{4}}}}.
\end{array}\end{cases}
\end{equation}
Finally, we always have 
$\langle{{ZZ\dots ZZ}}\rangle =0$ (see Appendix B for details \cite{appremark}).
 
We then consider the following Bell operator 
\begin{align}
\label{bell2}
\BB_N & =  - \big[AAA\dots AA\big] +  
\big[BBA\dots A   +  \;\mbox{permutat.}\big] 
- \nonumber\\
  - & \big[BBBBA\dots A  + \;\mbox{permutat.}\big] +
 \big[\dots\big]  - \dots 
\end{align}
Note that this Bell operator is similar to  $\langle \BB_N^M \rangle $ 
of the original Mermin inequality \cite{Mermin}, but it differs in the 
number of $B$ (always even) considered. Using the correlations computed 
above, we can state:

\noindent
{\bf Observation 5.}
{\it 
If we fix in the Bell operator $\BB_N$ in Eq.~(\ref{bell2}) the measurements 
to be $A=Z$ and $B=X$, then the $N$-qubit fully-connected three-uniform 
hypergraph state violates the classical bound, by an amount that grows 
exponentially with number of qubits, namely
\begin{align}
\left\langle \BB_N \right\rangle_C &\leq 2^{\left\lfloor N/2\right\rfloor} \quad  \mbox{ for local  HV models, and}
\nonumber
\\
\left\langle \BB_N \right\rangle_Q &\geq 2^{N-2}-\frac{1}{2} \quad  \mbox{ for the hypergraph state.}
\end{align}
}
The proof is given in Appendix C \cite{appremark}.



{\it Four-uniform hypergraph states.---}
Finally, let us consider four-uniform complete hypergraph states. 
For them, the correlations of measurements
as in Eq.~(\ref{lemma3}) are not so simple: They are not constant, and depend
on $m$ as well as on $N$. Nevertheless, they can be explicitly computed, and
detailed formulas are given in the Appendix D \cite{appremark}. 
{From} these correlations, we can state:

\noindent
{\bf Observation 6.}
{\it The $N$-qubit fully-connected four-uniform hypergraph state violates
 local realism by an amount that grows exponentially with number
of qubits. More precisely, one can find a Mermin-like Bell operator
$\BB_N$ such that
\begin{align}
\frac{\left\langle \BB_N \right\rangle_Q}
{\left\langle \BB_N \right\rangle_C}
\stackrel{N\rightarrow \infty}{\sim}   
\frac{\Big(1+\frac{1}{\sqrt{2}}\Big)^{N-1} }{\sqrt{2}^{N+3}}
\approx 
\frac{1.20711^{N}}{2\sqrt{2}+2}.
\end{align}}
A detailed discussion is provided in Appendix E \cite{appremark}.


{\it Robustness.---}
So far, we have shown that three- and four-uniform hypergraph states
 violate  local realism comparable to GHZ states. 
A striking difference is, however, that the entanglement
and Bell inequality violation of hypergraph states is robust under 
particle loss. This is in stark contrast to GHZ states, which become fully
separable if a particle is lost. We can state:

\noindent
{\bf Observation 7.}
{\it The $N$-qubit fully-connected four-uniform hypergraph 
state preserves the violation of the local realism even after 
loss of one particle. More precisely, for $N=8k+4$ we have
${\left\langle \BB_{N-1}\right\rangle_Q}/
{\left\langle \BB_N \right\rangle_Q}
\stackrel{N\rightarrow \infty}{\sim} {1}/({\sqrt{2}+1}).$
This means that the reduced state shows the same exponential scaling
of the Bell inequality violation as the original state.
}

For the detailed discussions see Appendix F \cite{appremark}.

For three-uniform complete hypergraph states we can prove that
the reduced states are highly entangled, as they violate inequalities
testing for separability \cite{Roy} exponentially. This violation 
decreases with the number of traced out qubits, but persists
even if several qubits are lost. This suggests that 
this class of hypergraph states is also more robust than 
GHZ states, details can be found in Appendix G \cite{appremark}. Despite the structural differences, this property resembles  of the W state, which is itself less entangled but more robust than the GHZ state \citep{Buzek}. In addition, this may allow  the lower detection efficiency in the experiments.

{\it Discussion and conclusion.---}
A first application of our results is quantum 
metrology. In the standard scheme of quantum  metrology one 
measures an observable $M_\theta$ and tries to determine the 
parameter $\theta$ which describes a phase in some 
basis \cite{giovanetti, weibo}. If one takes product states, 
one obtains a signal $\mean{M_\theta} \sim \cos{(\theta)}$ on 
a single particle, repeating it on $N$ particles allows to determine 
$\theta$ with an accuracy $\delta \theta \sim 1/\sqrt{N}$, the so-called 
standard quantum limit. Using an $N$-qubit GHZ state, however, one observes $\mean{(M_\theta)^{\otimes N}} \sim \cos{(N\theta)}$ and this phase 
super-resolution  allows to reach the Heisenberg limit 
$\delta \theta \sim 1/{N}$. For a general state $\vr$, 
it has been shown that the visibility of the phase super-resolution
is given by the expectation value of the Mermin-type inequality, 
$V=Tr(\BB_N \vr)/2^{N-1}$ \cite{weibo}. So, since the three-uniform
hypergraph states violate this inequality with a value $\mean{\BB_N}_Q 
\sim 2^{(N-2)}$ the visibility is $V \sim 1/2$, independently of the number
of particles. This means that these states can be used for Heisenberg-limited
metrology, and from our results they can be expected to have the advantage 
of being more robust to noise and particle losses.

The second  application of exponential violation of Bell
inequalities is \textit{nonadaptive measurement based quantum
computation with linear side-processing}  ($NMQC_{\oplus}$) \cite{Hoban11}. 
$NMQC_{\oplus}$ is a non-universal model of quantum computation where linear classical
side-processing is combined with quantum measurements in a nonadaptive way, i.e., the choice of settings is independent of previous outcomes.
In Ref.~\cite{Hoban11} the authors connect the expectation value of a full-correlation Bell expression \cite{wwzb} with the success probability of computing a Boolean function, specified as a function of the inequality coefficients, via $NMQC_{\oplus}$. In particular, the exponential violation of generalized Svetlichny inequalities \cite{Collins02} (equal to Mermin inequalities for even $N$), corresponds to a constant success probability $P_{\rm succ}$ of computing the pairwise $\mathrm{AND}$ on $N$ bits extracted from a uniform distribution, whereas in the classical case $P_{\rm succ}-1/2$ decrease exponentially with $N$.
As a consequence, the exponential violation of the full-correlation Bell expression $\mathcal{B}_N$ can be directly related to an exponential advantage for computation tasks in the $NMQC_{\oplus}$ framework. Moreover, in several cases, e.g., $4$-uniform hypergraph states of $N = 6\ {\rm mod}\ 8$ qubits, also the Svetlichny inequality is violated exponentially, providing an advantage for computation of the pairwise $\mathrm{AND}$ discussed in Ref.~\cite{Hoban11}.   

In summary, we have shown that hypergraph states violate local realism in many ways. 
This suggests that they are interesting resources for quantum information
processing, moreover, this makes the observation of
hypergraph states a promising task for experimentalists. In our work, 
we focused only on some classes of hypergraph states, but for future 
research, it would be desirable to identify classes of hypergraph states 
which allow for an all-versus-nothing violation of local realism or which
are strongly genuine multiparticle nonlocal.


We thank D.~Nagaj, F.~Steinhoff, M.~Wie\'sniak, and B.~Yoshida for 
discussions. This work has been supported by the EU 
(Marie Curie CIG 293993/ENFOQI), 
the FQXi Fund (Silicon Valley Community Foundation),  the DFG and the ERC.

\onecolumngrid

\section{Appendix A. Genuine multiparticle nonlocality with linear programming}\label{App:LP}

Fixed the number of measurement settings and outcomes, probabilities arising from a hybrid local-nonsignalling model, as the one described in the main text for the splitting $A|BC$, form a polytope whose extremal points are given by combination of deterministic local assignments for the party $A$ and extremal nonsignalling assignments, i.e., local deterministic and PR-boxes \cite{brunnerreview}, for the parties $BC$. In order to detect genuine multiparticle nonlocality, one has to consider all combinations of probabilities arising from the other local-nonsignalling splitting, namely $C|AB$ and $B|AC$. Geometrically, this corresponds to take the convex hull of the three polytopes associated with the three different splitting. Let us denote such polytopes as $\mathcal{P}_{A|BC}, \mathcal{P}_{C|AB} ,\mathcal{P}_{B|AC}$ and their convex hull as $\mathcal{P}_{L-NS}$. By definition of convex hull, every vector $\mathbf{p}\in \mathcal{P}_{L-NS}$ can be written as a convex combination of three vectors $\mathbf{p}_{
A|BC}, 
\mathbf{p}_{C|AB}$ and $\mathbf{p}_{B|AC}$, which, in turn, can be written as a convex combination of the vertices of the corresponding polytope.

To check whether a given point $\mathbf{p}$ belongs to $\mathcal{P}_{L-NS}$ it is, therefore, sufficient the description in terms of the extremal points of  $\mathcal{P}_{A|BC}, \mathcal{P}_{C|AB} ,\mathcal{P}_{B|AC}$. Let us denote them as $\{\mathbf{v}_i\}$. The membership problem can then be formulated as a linear program (LP) \cite{brunnerreview}

\begin{equation}\begin{split}\label{e:lp}
\text{maximize: } &\mathbf{\lambda} \cdot \mathbf{p} - C \\
\text{subject to: }&
 \mathbf{\lambda} \cdot \mathbf{v}_i - C  \leq 0\text{ , for all }\mathbf{v}_i \\
 & \mathbf{\lambda} \cdot \mathbf{p} - C \leq 1. 
\end{split}\end{equation}

The variable of the LP are $\{\mathbf{\lambda},C\}$, where $\mathbf{\lambda}$ represents the coefficient of a Bell-Svetlichny inequality, detecting genuine multiparticle nonlocality, and $C$ the corresponding local-nonsignalling bound. The LP optimizes the coefficients $\mathbf{\lambda}$ to obtain the maximal value (at most $C+1$) for the quantum probabilities, while keeping the local-nonsignaling bound $C$. As a consequence, the vector $\mathbf{p}$ can be written as a convex combination of $\{\mathbf{v}_i\}$ if and only if the optimal value of the LP is 0.

The noise tolerance for $\ket{H_3}$ can then be computed by mixing it with white noise, i.e., $\ket{H_3}\bra{H_3}\mapsto
\varrho= (1-\varepsilon)\ket{H}\bra{H}+\varepsilon \mathbbm{1}/8$, and compute for which values of $\varepsilon$ the LP \eqref{e:lp} gives optimal value $0$. Standard numerical techniques for LP give that up to $\varepsilon\approx {1}/{13} \approx 7.69\%$ the probabilities for $X$ and $Z$ measurements cannot be explained by a hybrid local-nonsignalling model.

\section{APPENDIX B: Correlations for three-uniform hypergraph states}

\subsection{B.1. Preliminary calculations}

Before starting with the actual calculations, we need to settle couple 
of identities and a look-up table,  which we will refer to throughout 
the main proofs.

The first and probably the most important identity is a commutation relation
between multi-qubit phase gates and Pauli X matrices \cite{Otfried},
\begin{equation}{\label{identity}}
C_e\big(\bigotimes_{i \in K} X_k \big)=(\bigotimes_{i \in K} X_k \big)\big(\prod_{f\in \mathcal{P}(K)}C_{e\backslash\{f\}}\big).
\end{equation}
Here, $\mathcal{P}(K)$ denotes the power set of the index set $K$.
Note that the product of the $C_{e\backslash\{f\}}$ may include
the term $C_{\emptyset}$, which is defined to be $-\openone$
and leads to a global sign.

Furthermore, it turns out to be useful to recall some basic facts
about binomial coefficients, as the appear frequently in the following
calculations.

\begin{lemma}
The following equalities hold:
\begin{equation}\label{identity1}
Re\Big[(1+i)^n\Big]=\sum_{k=0,4,\dots}^n\binom{n}{k}-\binom{n}{k+2},
\end{equation}
\begin{equation}\label{identity2}
Im\Big[(1+i)^n\Big]=\sum_{k=0,4,\dots}^n\binom{n}{k+1}-\binom{n}{k+3}.
\end{equation}
\end{lemma}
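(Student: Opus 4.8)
The plan is to expand $(1+i)^n$ by the binomial theorem and then sort the resulting terms according to the periodicity of the powers of $i$. Starting from
\begin{equation}
(1+i)^n=\sum_{k=0}^n\binom{n}{k}\,i^k,
\end{equation}
I would exploit the fact that $i^k$ is periodic in $k$ with period four, taking the values $i^k=1,i,-1,-i$ for $k\equiv 0,1,2,3\pmod 4$, respectively. Thus a term $\binom{n}{k}i^k$ contributes to the real part precisely when $k$ is even and to the imaginary part precisely when $k$ is odd, and within each parity class the sign alternates with period four.

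For the real part, only the even $k$ survive: those with $k\equiv 0\pmod4$ enter with $i^k=+1$ and those with $k\equiv 2\pmod4$ with $i^k=-1$. Pairing each index $k\equiv 0\pmod4$ with its successor $k+2\equiv 2\pmod4$ then reproduces exactly the grouped sum $\sum_{k=0,4,\dots}^n[\binom{n}{k}-\binom{n}{k+2}]$, which is Eq.~(\ref{identity1}). For the imaginary part, only the odd $k$ contribute; factoring out the common $i$, the indices $k\equiv 1\pmod4$ carry sign $+1$ and $k\equiv 3\pmod4$ carry sign $-1$, and the analogous pairing of $k+1$ with $k+3$ yields Eq.~(\ref{identity2}). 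In both cases the claimed identities are nothing more than a reindexing of the binomial expansion sorted by residue class modulo four.

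The only point requiring a little care is the bookkeeping at the ends of the summation range: terms whose lower index exceeds $n$ (for instance $\binom{n}{k+2}$ when $k+2>n$) must be read as zero under the standard convention $\binom{n}{j}=0$ for $j>n$, so that the grouped sums terminate correctly and agree with the plain expansion for every value of $n\bmod 4$. There is no genuine analytic obstacle; the statement is a finite combinatorial identity, and I would regard the main (minor) task as merely verifying that the pairing of residue classes and the index cutoffs line up with the notation $\sum_{k=0,4,\dots}^n$. I would also note, as context for the later correlation calculations, that these expressions coincide with $2^{n/2}\cos(n\pi/4)$ and $2^{n/2}\sin(n\pi/4)$ via the polar form $(1+i)^n=2^{n/2}e^{in\pi/4}$, which is what ultimately makes the grouped binomial sums convenient to evaluate in closed form.
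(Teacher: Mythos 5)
Your proof is correct and follows essentially the same route as the paper's: expand $(1+i)^n$ via the binomial theorem, use the period-four cycle of $i^k$ to group terms by residue class modulo $4$, and read off the real and imaginary parts. The paper's version is just a one-line display of exactly this grouping, so your additional remarks on endpoint conventions and the polar form are fine but not needed.
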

\begin{proof}
Here we  derive (\ref{identity1}) and  (\ref{identity2})  together:
\begin{equation}
s:=(1+i)^n=\sum^n_{k=0} \binom{n}{k}i^k =\sum^n_{k=0,4,\dots} \binom{n}{k}+ i\binom{n}{k+1} -\binom{n}{k+2}-i\binom{n}{k+3}.
\end{equation}
It is easy to spot that $Re[s]$ and $Im[s]$ indeed leads to the identities (\ref{identity1}) and  (\ref{identity2}) respectively.
\end{proof}

The following look-up table represents the values of (\ref{identity1}) and  (\ref{identity2})  for different $n$. These values can be derived from the basic properties of complex numbers:

\begin{center}
\begin{tabular}{|c|c|c|c|c|c|}
\hline 
\# & $n$ & $Re\Big[(1+i)^{n}\Big]$ & $Im\Big[(1+i)^{n}\Big]$ & $Re\Big[(1+i)^{n}\Big]$$+Im\Big[(1+i)^{n}\Big]$ & $Re\Big[(1+i)^{n}\Big]$$-Im\Big[(1+i)^{n}\Big]$\tabularnewline
\hline 
\hline 
1. & $n=0$ mod 8 & $+2^{\frac{n}{2}}$ & $0$ & $+2^{\frac{n}{2}}$ & $+2^{\frac{n}{2}}$\tabularnewline
\hline 
2. & $n=1$ mod 8 & $+2^{\frac{n-1}{2}}$ & $+2^{\frac{n-1}{2}}$ & $+2^{\frac{n+1}{2}}$ & $0$\tabularnewline
\hline 
3. & $n=2$ mod 8 & $0$ & $+2^{\frac{n}{2}}$ & $+2^{\frac{n}{2}}$ & $-2^{\frac{n}{2}}$\tabularnewline
\hline 
4. & $n=3$ mod 8 & $-2^{\frac{n-1}{2}}$ & $+2^{\frac{n-1}{2}}$ & $0$ & $-2^{\frac{n+1}{2}}$\tabularnewline
\hline 
5. & $n=4$ mod 8 & $-2^{\frac{n}{2}}$ & $0$ & $-2^{\frac{n}{2}}$ & $-2^{\frac{n}{2}}$\tabularnewline
\hline 
6. & $n=5$ mod 8 & $-2^{\frac{n-1}{2}}$ & $-2^{\frac{n-1}{2}}$ & $-2^{\frac{n+1}{2}}$ & $0$\tabularnewline
\hline 
7. & $n=6$ mod 8 & $0$ & $-2^{\frac{n}{2}}$ & $-2^{\frac{n}{2}}$ & $+2^{\frac{n}{2}}$\tabularnewline
\hline 
8. & $n=7$ mod 8 & $+2^{\frac{n-1}{2}}$ & $-2^{\frac{n-1}{2}}$ & $0$ & $+2^{\frac{n+1}{2}}$\tabularnewline
\hline 
\end{tabular}\\
$\quad$\\

\textbf{Table 0.} Look-up table for the values of (\ref{identity1}) and  (\ref{identity2}) .
\end{center}

\subsection{B.2. Correlations for X and Z measurements
on fully-connected three-uniform hypergraph states}

\begin{lemma} \label{lemma3full} Consider an arbitrary number of qubits $N$ 
and the three-uniform fully-connected hypergraph (HG) states. Then, if
$m$ is even with $1<m<N$ the following equality holds 
\begin{equation}
\langle \underset{m}{\underbrace{X\dots X}}Z\dots Z\rangle =\begin{cases}
\begin{array}[t]{cc}
+\frac{1}{2} & \mbox{\ensuremath{\mbox{if \ensuremath{m=2} mod \ensuremath{4}}}},\\
-\frac{1}{2} & \mbox{\ensuremath{\mbox{if \ensuremath{m=0} mod \ensuremath{4}}}}.
\end{array}\end{cases}
\end{equation}
\end{lemma}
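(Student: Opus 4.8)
The plan is to reduce the correlation to an explicit double binomial sum and then evaluate it with a roots-of-unity filter. First I would write the state in the computational basis: since the phase gates $C_e$ commute and each contributes $-1$ exactly when all three qubits of its triple are in state $\ket{1}$, a basis vector $\ket{x}$ of Hamming weight $w=|x|$ acquires the sign $(-1)^{\binom{w}{3}}$, because $\binom{w}{3}$ counts the all-one triples. Hence $\ket{H}=2^{-N/2}\sum_x(-1)^{\binom{|x|}{3}}\ket{x}$. Writing $M=X^{\otimes m}\otimes Z^{\otimes(N-m)}$, using that $X$ flips a bit and $Z$ gives $(-1)^{x_i}$, and grouping basis states by the number $a$ of ones on the $X$-block and $b$ of ones on the $Z$-block, I obtain
\begin{equation}
\langle H|M|H\rangle=\frac{1}{2^{N}}\sum_{a=0}^{m}\sum_{b=0}^{N-m}\binom{m}{a}\binom{N-m}{b}(-1)^{\binom{a+b}{3}+\binom{m-a+b}{3}+b},
\end{equation}
since flipping the $X$-block turns $a$ into $m-a$ while leaving $b$ untouched.

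Next I would exploit that the sign $\epsilon(n):=(-1)^{\binom{n}{3}}$ is $4$-periodic, equal to $-1$ precisely when $n\equiv 3\pmod 4$ (by a direct check, or Lucas' theorem), which is exactly the parity pattern behind the identities (\ref{identity1}), (\ref{identity2}) and Table~0. I would insert its discrete Fourier expansion $\epsilon(n)=\sum_{j=0}^{3}c_j\,i^{jn}$ with $c_0=c_2=\tfrac12$, $c_1=-\tfrac{i}{2}$, $c_3=\tfrac{i}{2}$. The virtue of this is that each inner sum then collapses by the binomial theorem: the $a$-sum gives $(1+i^{j-k})^{m}$ and the $b$-sum gives $(1+i^{j+k+2})^{N-m}$, leaving the compact expression
\begin{equation}
\langle H|M|H\rangle=\frac{1}{2^{N}}\sum_{j,k=0}^{3}c_jc_k\,i^{km}\,(1+i^{j-k})^{m}\,(1+i^{j+k+2})^{N-m}.
\end{equation}

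The decisive simplification is that $1+i^{2}=0$: every term with $j-k\equiv 2\pmod 4$ is annihilated by the first factor (here $m\geq 2$ is used) and every term with $j+k\equiv 0\pmod 4$ by the second (here $N-m\geq 1$ is used, which is why the hypotheses $1<m<N$ enter, and why the case $m=N$ must be treated separately in Eq.~(\ref{lemma33})). Only ten of the sixteen terms survive, and I would sort them into the five groups sharing a common value $2^{N}$, $(1\pm i)^{N}$, or $(1\pm i)^{m}(1\mp i)^{N-m}$. The key observation is that within each of the four nondiagonal groups the two members carry opposite coefficients and cancel, so the whole sum reduces to the two diagonal terms $(j,k)\in\{(1,1),(3,3)\}$, both proportional to $2^{N}$. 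Finally, using that $m$ is even so that $i^{m}=(-1)^{m/2}$ is real, these combine to $\langle H|M|H\rangle=-\tfrac12(-1)^{m/2}$, which is $+\tfrac12$ for $m\equiv 2\pmod 4$ and $-\tfrac12$ for $m\equiv 0\pmod 4$.

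The main obstacle is purely the bookkeeping of the $4\times 4$ sum: correctly determining which terms are killed by the two vanishing factors, and verifying the pairwise cancellation of the nondiagonal survivors. The parity hypothesis on $m$ is essential, since it makes $i^{m}$ real and hence the answer exactly $\pm\tfrac12$, and the strict inequalities $1<m<N$ are precisely what guarantee that both $1+i^{2}=0$ factors can act; dropping $m<N$ reinstates the $(1\pm i)^{N}$ terms, whose evaluation via Table~0 produces the different behaviour recorded in Eq.~(\ref{lemma33}).
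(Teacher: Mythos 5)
Your proof is correct, but it takes a genuinely different route from the paper's. The paper never writes the state's amplitudes explicitly: it commutes the phase gates through the Pauli operators using the identity $C_e X = X \prod_f C_{e\setminus f}$, counts the parity of the multiplicity of each surviving reduced gate ($C_{\circledast\bigtriangleup}$, $C_{\circledast}$, $C_{\bigtriangleup}$, \dots) separately for $m\equiv 0$ and $m\equiv 2 \pmod 4$, and then evaluates the resulting trace of a product of diagonal gates by a case analysis on the parities of $\alpha$ and $\beta$ (the Hamming weights on the $X$- and $Z$-blocks). You instead observe directly that the amplitude of $\ket{x}$ is $(-1)^{\binom{|x|}{3}}$, reduce the correlation to the double binomial sum $\frac{1}{2^N}\sum_{a,b}\binom{m}{a}\binom{N-m}{b}(-1)^{\binom{a+b}{3}+\binom{m-a+b}{3}+b}$, and evaluate it with a roots-of-unity filter for the $4$-periodic sign $(-1)^{\binom{n}{3}}$; I have checked the surviving ten terms of your $4\times 4$ sum and the four pairwise cancellations, and the two diagonal terms indeed give $-\tfrac12(-1)^{m/2}$ (a sanity check on $N=3$, $m=2$ gives $+1/2$ as required). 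Both arguments ultimately rest on the same $(1\pm i)^n$ arithmetic --- the paper's Lemma~1 and Table~0 are essentially your Fourier coefficients in disguise --- but yours buys a uniform treatment with no case split until the final step $i^{m}=(-1)^{m/2}$, makes completely transparent where the hypotheses $1<m<N$ enter (the two vanishing factors $1+i^2=0$), and explains in the same breath why $m=N$ behaves differently as in Eq.~(\ref{lemma33}); the paper's gate-counting method is heavier here but is the one that generalizes mechanically to the four-uniform and reduced-state computations of Appendices~D--G, where the explicit amplitude sign would be $(-1)^{\binom{w}{4}}$ and the filter would need period~8.
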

\begin{proof}  We can write:

\begin{equation}\label{appendixAeq1}
K:=\bra{HG} X\dots XZ\dots Z\ket{HG}=\bra{+}^{\otimes N}\bigg(\prod_{e\in E} C_e\bigg) X\dots XZ\dots Z\bigg( \prod_{e\in E} C_e\bigg)  \ket{+}^{\otimes N}.
\end{equation}
We  can group all the controlled phase gates on the right hand side of the expression (\ref{appendixAeq1}). Note that the operators $C_e$ and $X_i$ do not commute, but we can use the identity  ($\ref{identity}$). While regrouping we count the multiplicity of  each phase gate. If each phase gate appears  even times, we get an identity as $C^2=\mathbbm{1}$, if not, we keep these phase gates with the multiplicity one for the further calculations.

For the purposes which will become apparent shortly, we denote the parties which measure in $X$ direction by $\circledast$ and ones in $Z$ direction by $\bigtriangleup$, in a way that, for example, if an arbitrary phase gate acts on $XXXZZ$, it is represented as $\circledast\circledast\circledast\bigtriangleup\bigtriangleup$. 
 Without  loss of generality, we fix one phase gate $C_e$ and consider all the possible scenarios of $\circledast$ and $\bigtriangleup$ it can be acting on. Since we work on three-uniform HG states, every phase gate acts on bashes of different three party systems. These parties can be either of type $\circledast$ or $\bigtriangleup$ and we have to consider all possible scenarios.  Since we are working with  symmetric states, we can sort the parties such that we have  $m$ $\circledast$'s  followed by $(N-m)$ $\bigtriangleup$'s:
 \begin{equation} \label{appendixAeq2}
 \underset{m}{\underbrace{\circledast\dots \circledast}} \underset{N-m}{\underbrace{\bigtriangleup\dots\bigtriangleup}} 
\end{equation}  
 and then  we can enumerate all the scenarios of one phase gate acting on  (\ref{appendixAeq2}): \\
 
 \begin{enumerate}
 \item $C_eZZZ $  corresponds to $\bigtriangleup\bigtriangleup\bigtriangleup \quad\quad\quad\quad$ $3. \; C_eXXZ$ corresponds to $\circledast\circledast\bigtriangleup$\\
 \item $C_eXZZ$ corresponds to  $\circledast\bigtriangleup\bigtriangleup\quad\quad\quad\quad$  $4.\; C_eXXX$ corresponds to $\circledast\circledast\circledast$\\
\end{enumerate}  

We consider each case separately:\\

1.  $C_eZZZ=ZZZC_e$ as $C_e$ and $Z$ commute.  $C_e$ moves on the right side with the multiplicity one. To save us writing in the future, we will denote the multiplicity of the phase gate moving on the right side by $\# e$. In this particular case it is $ \#  \bigtriangleup\bigtriangleup\bigtriangleup =1$.  However, on the right side of the equation (\ref{appendixAeq1})  we have a product of all three-party phase gates. Therefore, we get $C_e$ with the multiplicity of two and $C_e^2=\mathbbm{1}$. Note that, as we have chosen an arbitrary three-qubit phase gate, the same result holds for every such phase gate. So, all three-qubit phase gates coming from the case 1,  cancel out. We will see that with the same reasoning all three qubit phase gates cancel out (give an identity). \\

2. For $C_eXZZ$, we use the   identity   ($\ref{identity}$):
\begin{equation} \label{appendixAeq3}
C_eXZZ=XC_eC_{\{e\backslash X\}}ZZ.
\end{equation}
The three-qubit phase gate $C_e$,  from (\ref{appendixAeq3}),   appears with the multiplicity one $(\# \circledast\bigtriangleup\bigtriangleup=1)$ and like in the case 1, it gives an identity when being multiplied by the same three-qubit phase gate at the right side of the expression in (\ref{appendixAeq1}).  It is more tricky to calculate the multiplicity of $C_{\{e\backslash X\}}$ ($\# \bigtriangleup\bigtriangleup$ as the $\circledast$ part (or equivalently, $X$ part) is removed from the set of vertices $e$.). For this we need to fix $\bigtriangleup\bigtriangleup$ and count all the scenarios when an arbitrary $C_{e}$ is reduced to $\bigtriangleup\bigtriangleup$. As we are working with the symmetric case, such scenario repeats $\binom{m}{1}=m$ times, where $m$ is the number of parties measuring in $X$ direction.We shortly denote this as $\# \bigtriangleup\bigtriangleup =\binom{m}{1}=m$. So, as $m$ is an even number,   $(C_{\{e\backslash X\}})^m=(C_{\bigtriangleup\bigtriangleup})^m=\mathbbm{1}$.\\
Note that the gate $C_{\bigtriangleup\bigtriangleup}$ can only be generated from in case 2.\\

3. For $C_eXXZ$, we use the   identity   ($\ref{identity}$):
\begin{equation} \label{appendixAeq4}
C_eXXZ=XXC_eC_{\{e\backslash XX\}} \Big[\prod_{\forall X} C_{\{e\backslash X\}}\Big] Z.
\end{equation}
The three-qubit phase gate $C_e$,  from (\ref{appendixAeq3}),   appears with the multiplicity one $(\# \circledast\circledast\bigtriangleup=1)$; therefore,  like  in the two previous cases,  it cancels out on the right side of the expression. A multiplicity of $C_{\{e\backslash XX\}}$ is calculated by fixing a concrete $\bigtriangleup$ and counting all possible appearance of arbitrary $\circledast\circledast$. As the number of parties measuring in direction is $X$ is $m$, this means that it is all combination of two parties with $X$ measurements out of total $m$ parties. So, 
\begin{equation}
 \# \bigtriangleup=\binom{m}{2}=\frac{m(m-1)}{2}=\begin{cases}
\begin{array}[t]{ccc}
\mbox{even, if} & m=0 \; mod\; 4 & \Rightarrow C_{\triangle}\mbox{ cancels out.}\\
\mbox{odd, if } & m=2 \; mod \; 4 & \Rightarrow C_{\triangle}\mbox{ remains.}
\end{array}\end{cases}
\end{equation}
The last one from this case is the multiplicity of $C_{\{e\backslash X\}}$ or $\# \circledast\bigtriangleup$. Here we fix one qubit from $m$ ($X$ direction) and one from $N-m$  ($Z$ direction) and count the number of such occurrences, when the third qubit is an arbitrary one of the type  $\circledast$, which is exactly $\binom{m-1}{1}$. Therefore, 
\begin{equation}
\#\circledast\bigtriangleup=\binom{m-1}{1}=m-1,\mbox{which is odd},\Rightarrow\; C_{\circledast\bigtriangleup}\mbox{remains}.
\end{equation}\\

4. For $C_eXXX$, we use the   identity   ($\ref{identity}$):
\begin{equation} \label{appendixAeq5}
C_eXXX=XXXC_e  \Big[\prod_{\forall X} C_{\{e\backslash X\}}\Big]  \Big[\prod_{\forall X} C_{\{e\backslash XX\}}\Big]C_{\{\}}.
\end{equation}
$C_e$ occurs once and it gives an identity with the other one from the right side like in the previous cases. The multiplicity of $C_{\{e\backslash X\}}$ is $\# \circledast\circledast$. Here we fix two parties in $X$ direction and count the occurrence of this scenario by altering the third party, from the remaining $m-2$, in $X$ direction. Therefore,
\begin{equation}
\#\circledast\circledast=\binom{m-2}{1}=m-2,\mbox{which is even},\Rightarrow\; C_{\circledast\circledast}\mbox{ cancels out}.
\end{equation}
Similarly for $C_{\{e\backslash XX\}}$,  we fix one party in $X$ direction and count all possibilities of choosing two parties out of remaining $m-1$. Therefore,
\begin{equation}
 \# \circledast=\binom{m-1}{2}=\frac{(m-1)(m-2)}{2}=\begin{cases}
\begin{array}[t]{ccc}
\mbox{odd, if} & m=0 \; mod\; 4 & \Rightarrow\;  C_\circledast \mbox{ remains.}\\
\mbox{even if, } & m=2\; mod\; 4 & \Rightarrow\;   C_\circledast \mbox{ cancels out.}
\end{array}\end{cases}
\end{equation}
At last, we consider $C_{\{\}}$. This gate determines the global sign of the expectation value and it appears only  when $C_e$ acts on systems which are all measured in $X$ direction.  Therefore, 
\begin{equation}{\label{3unifGL}}
\#\{\}=\binom{m}{3}=\frac{m(m-1)(m-2)}{2 \cdot 3},\mbox{which is even}\Rightarrow\; \mbox{ the  global sign is positive }.
\end{equation}\\

To go on, we need to consider two cases \#1: $m=0$ mod 4 and  \#2: $m=2$ mod 4  and calculate the expectation value separately for both: \\

\textbf{Case \#1:}  When $m=0 \; mod\;  4$, we write out all remaining phase gates and continue the derivation from  equation (\ref{appendixAeq1}):
\begin{equation}
\langle K \rangle :=\bra{+}^{\otimes N} X\dots X  Z\dots Z  \prod_{\forall \circledast, \forall\bigtriangleup } C_{\circledast\bigtriangleup} C_{\circledast}\ket{+}^{\otimes N}.
\end{equation}\\

Using the fact that $X$ is an eigenstate of  $\bra{+}$, we can get rid of all $X$s and then we can write $C_{\bigtriangleup}$ instead of $Z$:
 \begin{align}\label{appendixAeq6}
 \begin{split}
\langle K \rangle= & \bra{+}^{\otimes N}\prod_{\forall \circledast, \forall\bigtriangleup } C_{\circledast\bigtriangleup} C_{\circledast}C_{\bigtriangleup}\ket{+}^{\otimes N}=\frac{1}{\sqrt{2}^N}\sum_{i=00\dots 00}^{11\dots 11}\bra{i}\prod_{\forall \circledast, \forall\bigtriangleup } C_{\circledast\bigtriangleup} C_{\circledast}C_{\bigtriangleup}  \frac{1}{\sqrt{2}^N}\sum_{j=00\dots 00}^{11\dots 11}\ket{j}\\
= & \frac{1}{2^N}\bigg[\bra{00\dots 00}\prod_{\forall \circledast, \forall\bigtriangleup } C_{\circledast\bigtriangleup} C_{\circledast}C_{\bigtriangleup}\ket{00\dots 00}+\dots +\bra{11\dots 11}\prod_{\forall \circledast, \forall\bigtriangleup } C_{\circledast\bigtriangleup} C_{\circledast}C_{\bigtriangleup}\ket{11\dots 11}\bigg]\\
= &\frac{1}{2^N}Tr\bigg[\prod_{\forall \circledast, \forall\bigtriangleup } C_{\circledast\bigtriangleup} C_{\circledast}C_{\bigtriangleup}\bigg].
\end{split}
\end{align}
In (\ref{appendixAeq6}), to get line two from the line one, note that $\prod_{\forall \circledast, \forall\bigtriangleup } C_{\circledast\bigtriangleup} C_{\circledast}C_{\bigtriangleup}$ is a diagonal matrix. \\

 To evaluate the trace of the given diagonal matrix, we need to find the difference between the  number of $+1$ and  $-1$ on the diagonal.  We write every row in the computational basis by enumerating it with the binary notation. For each  row, we denote by $\alpha$ the number of 1's in binary notation appearing in the first $m$ columns and by $\beta$, the same on the rest. For example, for $N=7$ , and $m=4$, the basis element $\ket{1101110}$ leads to  $\alpha$=3 and $\beta=2$.  Considering the phase gates in the equation (\ref{appendixAeq6}), the expression $(-1)^s$ defines whether in the given row the diagonal element is $+1$ or  $-1$, where : 
\begin{equation}
s:=\binom{\alpha}{1}\binom{\beta}{1}+\binom{\alpha}{1}+\binom{\beta}{1}=\alpha \beta +\alpha+\beta	.
\end{equation}
In $s$, $\binom{\alpha}{1}\binom{\beta}{1}$  denotes how many $C_{\circledast\bigtriangleup} $ acts on the row. Also, $\binom{\alpha}{1}$ determines the number of $C_{\circledast}$ and  $\binom{\beta}{1}$, number of $C_{\bigtriangleup} $. Every time when the phase gate acts, it changes the sign of the diagonal element on the row. Therefore, we need to determine the number s:\\

To see whether $s$ is even or odd, we have to  consider the following cases exhaustively:\\
$\quad$\\
\begin{tabular}{lllll}
\textbf{1. } & $\alpha$ is even \& $\beta$ is even & $(-1)^{s}=+1$ &  &  \multirow{2}{*}{$\Rightarrow$ These two cases sum up to zero.}\tabularnewline
\textbf{2. } & $\alpha$ is even \& $\beta$ is odd  & $(-1)^{s}=-1$ &  & \tabularnewline
\end{tabular}\\
$\quad$\\
\begin{tabular}{lllll}
\textbf{3. } & $\alpha$ is odd \& $\beta$ is even & $(-1)^{s}=-1$ &  & \multirow{2}{*}{$\Rightarrow$ These two cases contribute with the negative sign.}\tabularnewline
\textbf{4.} & $\alpha$ is odd \& $\beta$ is odd & $(-1)^{s}=-1$ &  & \tabularnewline
\end{tabular}\\

From the cases 3 and 4, one can directly calculate the trace:
 \begin{equation}\label{appendixAeq7}
\langle K \rangle=\frac{1}{2^N}\bigg[-\sum_{\alpha=1,3,\dots }^m\binom{m}{\alpha}\sum_{\beta=0}^{N-m}\binom{N-m}{\beta}\bigg]=-\frac{2^{m-1}2^{N-m}}{2^N}=-\frac{1}{2}.
\end{equation}
So, we get that if $m$ is divisible by 4, 
\begin{equation}
\langle \underset{m}{\underbrace{X\dots X}}Z\dots Z\rangle =-\frac{1}{2}.
\end{equation}

\textbf{Case \#2:}  We use the identical approach: when $m=2 \; mod \; 4 $,  we write out all remaining phase gates and continue the derivation from  equation (\ref{appendixAeq1}):
\begin{equation}\label{appendixAeq8}
\langle K \rangle=\bra{+}^{\otimes N} X\dots X  Z\dots Z  \prod_{\forall \circledast, \forall\bigtriangleup } C_{\circledast\bigtriangleup} C_{\bigtriangleup}\ket{+}^{\otimes N}.
\end{equation}\\
Again we use the fact that $X$ is an eigenstate of $\bra{+}$ and $Z=C_{\bigtriangleup}$. As in (\ref{appendixAeq8}), there is already one  $(C_{\bigtriangleup})$, they cancel. Therefore, we are left with: 
\begin{equation}\label{appendixAeq8}
\langle K \rangle=\bra{+}^{\otimes N} \prod_{\forall \circledast, \forall\bigtriangleup } C_{\circledast\bigtriangleup} \ket{+}^{\otimes N}=\frac{1}{2^N}Tr\bigg[\prod_{\forall \circledast, \forall\bigtriangleup } C_{\circledast\bigtriangleup} \bigg].
\end{equation}\\
We need to define the sign of the diagonal element by $(-1)^s$, where 
\begin{equation}
s=\binom{\alpha}{1}\binom{\beta}{1}=\alpha\beta.
\end{equation}

\begin{tabular}{lllll}
\textbf{1. } & $\alpha$ is even & $(-1)^{s}=+1$ &  & $\Rightarrow$ This case contributes with the positive sign in the
trace \tabularnewline
\end{tabular}

$ $

\begin{tabular}{lllll}
\textbf{2. } & $\alpha$ is odd \& $\beta$ is even & $(-1)^{s}=+1$ &  & \multirow{2}{*}{$\Rightarrow$ These two give zero contribution together.}\tabularnewline
\textbf{3.} & $\alpha$ is odd \& $\beta$ is odd & $(-1)^{s}=-1$ &  & \tabularnewline
\end{tabular}

As the case 2 and 3 add up to zero, we only  consider the case 1: 
\begin{equation}
\langle K \rangle=\frac{1}{2^N}\sum_{\alpha=0,2,\dots }^m\binom{m}{\alpha}\sum_{\beta=0}^{N-m}\binom{N-m}{\beta}=\frac{2^{m-1}2^{N-m}}{2^N}=\frac{1}{2}.
\end{equation}
So, we get that if $m$ is NOT divisible by 4, 
\begin{equation}
\langle \underset{m}{\underbrace{X\dots X}}Z\dots Z\rangle =\frac{1}{2}.
\end{equation}
This completes the proof.
\end{proof}

\subsection{B.3. Correlations for X measurements
on fully-connected three-uniform hypergraph states}

\begin{lemma} If every party makes a measurement in $X$ direction, then the expectations value is
\begin{equation}
\langle \underset{N}{\underbrace{XX\dots XX}}\rangle =\begin{cases}
\begin{array}[t]{cc}
0 & \mbox{\ensuremath{\mbox{if \ensuremath{N=0} mod \ensuremath{4}}}},\\
1  & \mbox{\ensuremath{\mbox{if \ensuremath{N=2} mod \ensuremath{4}}}}.
\end{array}\end{cases}
\end{equation}
\end{lemma}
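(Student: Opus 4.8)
The plan is to mirror the strategy of the proof of Lemma~\ref{lemma3full}, specialized to the case $m=N$, i.e.\ when every qubit is measured in the $X$ direction. First I would start from
\begin{equation}
K=\bra{HG} X\dots X\ket{HG}=\bra{+}^{\otimes N}\Big(\prod_{e\in E} C_e\Big) X\dots X\Big(\prod_{e\in E} C_e\Big)\ket{+}^{\otimes N},
\end{equation}
and, exactly as before, push all phase gates to the right through the string of $X$ operators using the commutation identity~(\ref{identity}), recording the multiplicity with which each residual gate appears.

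The key simplification is that with $m=N$ there are no $\bigtriangleup$ (that is, $Z$) parties, so every three-qubit phase gate $C_e$ acts on three $\circledast$ vertices and therefore falls \emph{exclusively} into case~$4$ of Lemma~\ref{lemma3full}, namely the expansion of $C_e XXX$. Consequently I can reuse the multiplicity counts already established there verbatim: every three-qubit gate $C_e$ appears with multiplicity one and cancels against its copy on the right since $C_e^2=\eins$; the two-qubit gates $C_{\circledast\circledast}$ appear with multiplicity $\binom{N-2}{1}=N-2$, which is even for even $N$ and hence cancel; and the global-sign gate $C_{\{\}}$ appears with multiplicity $\binom{N}{3}$, which is even (indeed divisible by $4$) in both residue classes $N=0,2 \bmod 4$, giving a positive overall sign. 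The only gates whose survival depends on $N$ are the single-qubit gates $C_{\circledast}$, appearing with multiplicity
\begin{equation}
\#\circledast=\binom{N-1}{2}=\frac{(N-1)(N-2)}{2}.
\end{equation}

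The decisive step is then the parity of this single binomial. For $N=2 \bmod 4$ one has $N-2\equiv 0 \bmod 4$, so $\binom{N-1}{2}$ is even and every single-qubit gate cancels as well; all phase gates are gone and, using $X\ket{+}=\ket{+}$, one is left with $\bra{+}^{\otimes N}\ket{+}^{\otimes N}=1$. For $N=0 \bmod 4$ one has $N-2\equiv 2 \bmod 4$, so $\binom{N-1}{2}$ is odd and exactly one single-qubit gate $C_{\{i\}}$ survives on each qubit. Recalling that the single-vertex phase gate is $C_{\{i\}}=\eins-2\ketbra{1}=Z$, the product of survivors is $Z^{\otimes N}$, whence
\begin{equation}
K=\bra{+}^{\otimes N} Z^{\otimes N}\ket{+}^{\otimes N}=\prod_{i=1}^{N}\bra{+}Z\ket{+}=0,
\end{equation}
since $\bra{+}Z\ket{+}=0$. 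This reproduces the two claimed values.

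I expect the only real obstacle to be the careful parity bookkeeping: one must check that the multiplicity counts of Lemma~\ref{lemma3full} carry over unchanged when $m=N$, and in particular that the two-qubit and global-sign contributions are genuinely even in both relevant residue classes, so that the entire result is governed by the single parity $\binom{N-1}{2} \bmod 2$. The small conceptual point that makes the $N=0 \bmod 4$ case collapse is the observation that the leftover single-vertex phase gates are precisely the Pauli operators $Z$, turning the expression into the trivially vanishing overlap $\bra{+}Z\ket{+}$.
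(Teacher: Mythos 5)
Your proposal is correct and follows essentially the same route as the paper's own proof: pushing the phase gates through the $X$ string via the commutation identity, observing that the three- and two-qubit gates and the global sign always cancel or are positive, and reducing everything to the parity of $\binom{N-1}{2}$, with the surviving single-vertex gates $C_{\{i\}}=Z$ killing the overlap when $N=0\bmod 4$. The only difference is cosmetic: you make explicit the identification $C_{\{i\}}=Z$ and the vanishing of $\bra{+}Z\ket{+}$, which the paper leaves implicit.
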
 
\begin{proof} In this proof we employ the  notation introduced in details in the proof of the lemma \ref{lemma3full}.
\begin{equation} \label{allX3unif}
\langle K \rangle:=\bra{H_3^N} XX \dots XX\ket{H_3^N}=\bra{+}^{\otimes N} \Big[ \prod_{e\in E} C_e \Big] XX\dots XX \Big[ \prod_{e\in E} C_e\Big]  \ket{+}^{\otimes N}.
\end{equation}
We use the identity (\ref{identity}), to regroup the phase gates on the right hand side of the expression (\ref{allX3unif}). Therefore, we count the multiplicity of the remaining phase gates:\\

\begin{tabular}{crlcc}
 & \#$\circledast\circledast\circledast$ & Each $C_{e},$ where $|e|=3$, occurs once and cancels with the one on right hand side.  &  & \tabularnewline
 & \#$\circledast\circledast$ & $=\binom{N-2}{1}$ is even  $\Rightarrow C_{\circledast\circledast}$ cancels& .  & \tabularnewline
 & \#$\circledast$ & $=\binom{N-1}{2}=\frac{(N-1)(N-2)}{2}$ is $\begin{cases}
\begin{array}[t]{cc}
\mbox{odd, if }N=0\ mod\ 4 & \Rightarrow C_{\circledast}\mbox{ remains.}\\
\mbox{even, if \ensuremath{N=2\ }\ensuremath{mod\ 4}} &\Rightarrow C_{\circledast}\mbox{ cancels.}
\end{array}\end{cases}$ &  & \tabularnewline
 & \#$\{\}$ & $=\binom{N}{3}=\frac{N(N-1)(N-2)}{2\cdot3}\ $ is even  $\Rightarrow$ global sign $GS$ is positive. &  & \tabularnewline
\end{tabular}\\

Therefore, we need to consider two cases to continue the derivation of the expression (\ref{allX3unif}): \\

\textbf{Case \#1:} If $N=0\ mod \; 4$, then
\begin{equation}
\langle K \rangle =\bra{+}^{\otimes N}XX\dots XX \prod_{\circledast\in E} C_{\circledast} \ket{+}^{\otimes N}= \bra{+}^{\otimes N}\prod_{\circledast\in E} C_{\circledast} \ket{+}^{\otimes N} =0.
\end{equation}

\textbf{Case \#2:} If $N=2\ mod \; 4$, then
\begin{equation}
\langle K \rangle =  \bra{+}^{\otimes N}XX\dots XX \ket{+}^{\otimes N}=\Big[\braket{+}{+}\Big] ^N=1.
\end{equation}
\end{proof}

\subsection{B.4. Correlations for Z measurements
on fully-connected three-uniform hypergraph states}

\begin{lemma}
If every party makes a measurement in $Z$ direction, the expectation value is zero. Therefore, we need to  show that 
\begin{equation}
\langle K \rangle :=\langle ZZ\dots ZZ\rangle=0
\end{equation}
\end{lemma}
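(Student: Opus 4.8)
The plan is to exploit the one structural feature that sets this case apart from Lemma~\ref{lemma3full}: since \emph{every} party measures $Z$, no Pauli-$X$ operator ever appears, so the non-commutation identity~(\ref{identity}) — the engine of the previous proofs, which splits a phase gate into lower-cardinality ones — is never invoked. I would start from the exact analogue of Eq.~(\ref{appendixAeq1}),
\begin{equation}
\langle K \rangle = \bra{+}^{\otimes N}\Big(\prod_{e\in E} C_e\Big)\, \underset{N}{\underbrace{Z\cdots Z}}\, \Big(\prod_{e\in E} C_e\Big)\ket{+}^{\otimes N},
\end{equation}
and make the single observation that each $C_e = \eins - 2\ketbra{1\dots1}$ is diagonal in the computational basis, exactly as $Z^{\otimes N}$ is; hence all of these operators mutually commute.

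The key step is then immediate. I would commute $Z^{\otimes N}$ to the right past the second block of phase gates so that the two identical products $\prod_{e\in E}C_e$ stand adjacent. Since the $C_e$ also commute among themselves and satisfy $C_e^2=\eins$, their product squares to the identity, and the sandwich collapses to
\begin{equation}
\langle K \rangle = \bra{+}^{\otimes N}\, \underset{N}{\underbrace{Z\cdots Z}}\, \ket{+}^{\otimes N} = \big(\bra{+}Z\ket{+}\big)^N = 0,
\end{equation}
because $Z\ket{+}=\ket{-}$ and $\braket{+}{-}=0$. As an independent cross-check I would expand $\ket{H}=2^{-N/2}\sum_x \varphi(x)\ket{x}$ in the computational basis, where each amplitude $\varphi(x)\in\{+1,-1\}$ is the total phase accumulated from the $C_e$. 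Then $Z^{\otimes N}\ket{x}=(-1)^{w(x)}\ket{x}$ with $w(x)$ the Hamming weight, and the amplitude phases enter squared, $\varphi(x)^2=1$, so they cancel identically, leaving $\langle K\rangle = 2^{-N}\sum_x(-1)^{w(x)} = 2^{-N}(1-1)^N = 0$ for every $N\geq 1$.

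I do not expect a genuine obstacle here; this is the easiest of the correlation lemmas precisely because the all-$Z$ setting avoids identity~(\ref{identity}) entirely. The only point worth verifying carefully is that no residual global sign can survive: in contrast to Lemma~\ref{lemma3full}, the phase gates are never reduced to lower cardinality, so no $C_\emptyset$ term is ever generated and the sign bookkeeping of Eq.~(\ref{3unifGL}) simply does not arise. This confirms the clean cancellation and completes the argument.
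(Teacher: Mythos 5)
Your argument is correct and is essentially identical to the paper's own proof: both exploit that the diagonal gates $C_e$ commute with $Z^{\otimes N}$ and square to the identity, collapsing the expression to $\bra{+}^{\otimes N}Z^{\otimes N}\ket{+}^{\otimes N}=0$. The computational-basis cross-check is a harmless extra confirmation of the same fact.
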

\begin{proof}
\begin{equation}
\langle K \rangle =  \bra{+}^{\otimes N}\Big[\prod_{e \in E}C_e\Big]ZZ\dots ZZ\Big[\prod_{e \in E}C_e\Big] \ket{+}^{\otimes N}=\bra{+}^{\otimes N}ZZ\dots ZZ \ket{+}^{\otimes N}=\pm\frac{1}{2^N}	Tr(Z\dots ZZ)=0.
\end{equation}
\end{proof}

\section{Appendix C: Proof of Observation 5}

\noindent
{\bf Observation 5.}
{\it 
If we fix in the Bell operator $\BB_N$ in Eq.~(\ref{bell2}) the measurements 
to be $A=Z$ and $B=X$, then the $N$-qubit fully-connected three-uniform 
hypergraph state violates the classical bound, by an amount that grows 
exponentially with number of qubits, namely
\begin{align}
\left\langle \BB_N \right\rangle_C &\leq 2^{\left\lfloor N/2\right\rfloor} \quad  \mbox{ for local  HV models, and}
\nonumber
\\
\left\langle \BB_N \right\rangle_Q &\geq 2^{N-2}-\frac{1}{2} \quad  \mbox{ for the hypergraph state.}
\end{align}
}

The classical bound can be computed using the bound for Mermin inequality, i.e., $\mean{\BB^M_N}_C\leq 2^{\lfloor N/2 \rfloor}$. If $N$ is odd, our Bell inequality is exactly the Mermin inequality. If $N$ is even, 
$\mean{\BB_N}_C = \mean{A \cdot  \BB^M_{N-1} +B \cdot \tilde\BB^M_{N-1}}_C \leq 2 \mean{\BB^M_{N-1}}_C $, where $\tilde\BB$ denotes the inequality where $A$ and $B$ are exchanged, and the claim follows. The quantum value is computed directly. In particular, for odd $N$, we have
$
\left\langle \BB_N \right\rangle_Q=\sum_{k \;even}^N\binom{N}{k}\frac{1}{2}=2^{N-2}. 
$
If $N=0\ {\rm mod}\ 4$, then
$
\left\langle \BB_N \right\rangle_Q=\sum_{k\ even}^{N-1}\binom{N}{k}\frac{1}{2}-\left\langle X\dots X \right\rangle=2^{N-2}-{1}/{2}
$
and for $N=2\ mod\ 4$, we have
$
\left\langle \BB_N \right\rangle_Q=\sum_{k\ even}^{N-1}\binom{N}{k}\frac{1}{2}-\left\langle X\dots X \right\rangle=2^{N-2}+{1}/{2}.
$

Notice that, since Eq.~\eqref{bell2} in the main text is a full-correlation Bell inequality,
it can be written as a sum of probabilities minus a constant equal to 
the number of terms. With the proper normalization, such probabilities 
exactly correspond to the success probability of computing a Boolean 
function (cf. conclusive discussion and Ref.~\cite{Hoban11}).

\section{Appendix D: Correlations for four-uniform hypergraph states}


\begin{lemma}\label{lemma4full}
The following statements hold for  N-qubit, four-uniform hypergraph states:
\begin{enumerate}
\item  For the case $N=8k-2 \mbox{ or  }  \;  N=8k-1,\; \mbox{ or  } 8k $, we have:\\
$(i)$
\begin{equation}\label{third}
\langle \underset{m}{\underbrace{X\dots X}}Z\dots Z\rangle =\begin{cases}
\begin{array}[t]{cc}
+\frac{2^{\left\lfloor N/ 2\right\rfloor  -m}+1}{2^{\left\lfloor N/ 2\right\rfloor  - \left\lfloor m/ 2\right\rfloor }}  & \mbox{\ensuremath{\mbox{if \ensuremath{(m-1)=0} mod \ensuremath{4}}}},\\
-\frac{2^{\left\lfloor N/ 2\right\rfloor  -m}+1}{2^{\left\lfloor N/ 2\right\rfloor  - \left\lfloor m/ 2\right\rfloor }}  & \mbox{\ensuremath{\mbox{if \ensuremath{(m-1)=2} mod \ensuremath{4}}}}.
\end{array}\end{cases}
\end{equation}
$(ii)$ For $N=8k-1$, we have: 
\begin{equation}
\langle \underset{N}{\underbrace{XX\dots XX}}\rangle =-1. 
\end{equation} 
For $N=8k-2$ or $N=8k$, these correlations will not be needed.\\

\item For $N=4k+1$, we have: 
\begin{equation}
\langle \underset{m}{\underbrace{X\dots X}}Z\dots Z\rangle =\begin{cases}
\begin{array}[t]{cc}
+\frac{1}{2^{\left\lceil m/2\right\rceil }}, & \mbox{if \ensuremath{(m-1)=0 \ mod\ 4}},\\
-\frac{1}{2^{\left\lceil m/2\right\rceil }}, & \ \mbox{if \ensuremath{(m-1)=2 \ mod\ 4}},\\
\frac{1}{2^{\left\lfloor N/2\right\rfloor }}, & \mbox{if }\mbox{\ensuremath{m=N}}.
\end{array}\end{cases}
\end{equation}\\

\item For  $N=8k+2,\;   \mbox{ or  } 8k+4 $, we have for even $m$:\\
(i)
\begin{equation}
\langle \underset{m}{\underbrace{X\dots X}}Z\dots Z\rangle =\begin{cases}
\begin{array}[t]{cc}
+\frac{2^{m/2-1}}{2^{N/2 }} & \mbox{if \ensuremath{(N-m)=0\  mod\  4 }},\\
-\frac{2^{m/2-1}}{2^{N/2 }} & \mbox{if \ensuremath{(N-m)=2 \ mod\  4 }}.
\end{array}\end{cases}
\end{equation}\\
(ii) 
\begin{equation}
 \langle XX\dots XX\rangle =\frac{2^{\frac{N}{2}-1}+1}{2^{\frac{N}{2}}}.
\end{equation} \\
\item For $N=8k+3$,  $\langle \underset{m}{\underbrace{X\dots X}}\underset{N-m-1}{\underbrace{Z\dots Z}}\mathbbm1\rangle$ for even $m$ gives the same  exact result as the part 3, so we have:
\begin{equation}
\langle \underset{m}{\underbrace{X\dots X}}\underset{N-m-1}{\underbrace{Z\dots Z}}\mathbbm1\rangle=\begin{cases}
\begin{array}[t]{cc}
+\frac{2^{m/2-1}}{2^{M/2 }} & \mbox{if \ensuremath{(M-m)=0\  mod\  4 }},\\
-\frac{2^{m/2-1}}{2^{M/2 }} & \mbox{if \ensuremath{(M-m)=2 \ mod\  4 }},
\end{array}\end{cases}
\end{equation}
where $M=N-1.$
\end{enumerate}
\end{lemma}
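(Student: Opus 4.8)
The plan is to prove all four cases by the single computational strategy already used for the three-uniform states in Lemma~\ref{lemma3full}, now adapted to four-edges. For a fixed measurement pattern, writing $O$ for the tensor product of the $X$'s and $Z$'s (and, for part~4, the one $\mathbbm{1}$), I would start from
\begin{equation}
K=\bra{+}^{\otimes N}\Big(\prod_{e\in E}C_e\Big)\,O\,\Big(\prod_{e\in E}C_e\Big)\ket{+}^{\otimes N},
\end{equation}
and push the right-hand factor $\prod_e C_e$ through the $X$'s on the $m$ measured qubits using the commutation identity~\eqref{identity}; the $Z$'s and the $\mathbbm{1}$ commute with every $C_e$ and play no role here. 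Each four-edge $C_e$ thereby spawns reduced gates $C_{e\setminus f}$ of cardinality $3,2,1,0$, while the full-edge terms annihilate against the left-hand copies since $C_e^2=\mathbbm{1}$. The surviving gates are exactly those whose total multiplicity -- a binomial coefficient in the number $m$ of $X$-qubits, of the flavour $\binom{m}{1},\binom{m-1}{1},\binom{m-1}{2},\binom{m}{3}$ encountered in the three-uniform count -- is odd, and the empty gate $C_{\emptyset}=-\mathbbm{1}$ fixes a global sign.

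Next I would evaluate the resulting expectation. Using $X\ket{+}=\ket{+}$ to erase the $X$'s and $Z=C_{\{i\}}$ to rewrite each $Z$, the expression collapses to $2^{-N}\,\mathrm{Tr}\big[\prod_g C_g\big]$ of a diagonal matrix exactly as in Eq.~\eqref{appendixAeq6}. Parameterizing each diagonal entry by $\alpha$ (number of ones among the $X$-qubits) and $\beta$ (number among the $Z$-qubits), the sign is $(-1)^{s}$ with $s$ a fixed polynomial in $\alpha,\beta$ read off from the surviving gates. Because four-edges produce gates of cardinality up to three, $s$ now carries a $\binom{\alpha}{3}$-type term, so the $\alpha$-summation is no longer a clean $\pm\tfrac12$ but an alternating binomial sum; this is precisely where the identities~\eqref{identity1}--\eqref{identity2} and the look-up Table~0 enter, supplying the $\mathrm{Re}/\mathrm{Im}$ values of $(1+i)^{n}$. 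Case-splitting on $N\bmod 8$ and on $m\bmod 4$ then reproduces the explicit formulas of parts~1--3.

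For part~4 the cleanest route is to read the trailing $\mathbbm{1}$ as a partial trace over qubit $N$. Splitting the hyperedges into those avoiding and those containing qubit $N$ and expanding $\ket{+}_N$ gives
\begin{equation}
\ket{H_4^N}=\tfrac{1}{\sqrt2}\big(\ket{H_4^{M}}\ket{0}_N+\ket{G}\ket{1}_N\big),\quad M=N-1=8k+2,
\end{equation}
where $\ket{G}=\big(\prod_{|f|=3}C_f\big)\ket{H_4^{M}}$ is the complete $\{3,4\}$-uniform hypergraph state on $M$ qubits. Tracing out qubit $N$ kills the cross terms, so $K=\tfrac12\langle H_4^{M}|O|H_4^{M}\rangle+\tfrac12\langle G|O|G\rangle$, and the first term is exactly the part~3 value. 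It then remains to show the second term equals it as well, i.e. that appending all three-edges leaves this correlation unchanged for even $m$. I would establish this by repeating the multiplicity count for $\ket{G}$ and verifying that the extra reduced gates produced by the three-edges either have even multiplicity or enter $s$ only through terms that vanish under the $N=8k+3$, even-$m$ parity constraints, so that the diagonal sign pattern -- hence the trace -- coincides with the pure four-uniform one.

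The main obstacle throughout is the bookkeeping of multiplicities and parities rather than any single conceptual hurdle. In parts~1--3 the delicate point is matching each residue class $N\bmod 8$ to the correct entry of Table~0 while keeping the $C_{\emptyset}$ global sign straight, since an off-by-one between $\binom{m-1}{2}$ and $\binom{m}{2}$ flips an entire case. In part~4 the genuinely hard step is the three-edge cancellation in $\langle G|O|G\rangle$: the three- and four-edges generate overlapping reduced gates on the same vertex sets, and one must confirm their combined multiplicity remains even (or contributes trivially) exactly under the stated congruences. I expect this cancellation to be the technically most demanding part, and I would guard against error by checking it on a small explicit instance such as $N=11$.
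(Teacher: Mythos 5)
Your proposal follows essentially the same route as the paper: the commutation identity, parity counting of the reduced-gate multiplicities, evaluation of the resulting diagonal trace via the $(1+i)^n$ binomial identities and the mod-8/mod-4 case split for parts 1--3, and for part 4 the same conditioning of qubit $N$ on $\ket{0}/\ket{1}$, which yields the pure four-uniform state and the complete $\{3,4\}$-uniform state on $M=N-1$ qubits. Your recombination $K=\tfrac12\langle H_4^M|O|H_4^M\rangle+\tfrac12\langle G|O|G\rangle$ with target $\langle G|O|G\rangle=\langle H_4^M|O|H_4^M\rangle$ is exactly what the paper's detailed computation establishes (its own closing recombination is written with inconsistent signs), and you rightly flag the $\{3,4\}$-uniform trace as requiring a fresh multiplicity count -- the three-edge reduced gates genuinely alter the surviving gate set and the diagonal sign polynomial rather than cancelling trivially -- so the approach is sound and matches the paper's.
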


\begin{proof} Each part of the theorem needs a separate consideration. note that the notation and machinery that is employed in the proof is based on the proof of the Lemma \ref{lemma3full}. Therefore, we advise the reader to become familiar with that one first. \\

\textbf{ Part 1. }   We consider the cases when $N=8k-2,  \; N=8k-1\; \mbox{ or  } 8k $ and odd $m$  together.  \\
We prove $(i)$ first:\\
\begin{equation}\label{part1_1}
\langle G_1 \rangle = \bra{H_4^N} \underset{m}{\underbrace{X\dots X}}Z\dots Z\ket{H_4^N}  =\bra{+}^{\otimes N}\Big(\prod_{e\in E}C_e \Big)XX\dots XZ\dots Z \Big(\prod_{e\in E}C_e \Big)  \ket{+}^{\otimes N}.
\end{equation}
We need to use the identity (\ref{identity}) to regroup all the phase gates on the right hand side of the expression (\ref{part1_1}). If each phase gate occurs even number of times, they give an identity, otherwise, they are used in the further calculations. We consider each case separately in the following table:

\begin{tabular}{rrllll}
 &  & \multicolumn{4}{l}{}\tabularnewline
\textbf{1.} & \#$\bigtriangleup\bigtriangleup\bigtriangleup$ & \multicolumn{4}{l}{$=\binom{m}{1}$ is odd $\Rightarrow C_{\bigtriangleup\bigtriangleup\bigtriangleup}$
remains. }\tabularnewline
 &  &  &  &  & \tabularnewline
\textbf{2.} & \#$\bigtriangleup\bigtriangleup$ & \multicolumn{4}{l}{$=\binom{m}{2}=\frac{m(m-1)}{2}$ is $\begin{cases}
\begin{array}[t]{cc}
\mbox{even, if (\ensuremath{m-1)=0}  mod 4 } & \Rightarrow  C_{\bigtriangleup\bigtriangleup}\mbox{ cancels.}\\
\mbox{odd, if (\ensuremath{m-1)=2}  mod 4 } & \Rightarrow  C_{\bigtriangleup\bigtriangleup}\mbox{ remains.}
\end{array}\end{cases}$ }\tabularnewline
 & \#$\circledast\bigtriangleup\bigtriangleup$ & \multicolumn{4}{l}{$=\binom{m-1}{1}$ is even $\Rightarrow\; C_{\circledast\bigtriangleup\bigtriangleup}$ cancels.}\tabularnewline
 &  &  &  &  & \tabularnewline
\textbf{3.} & \#$\bigtriangleup$ & \multicolumn{4}{l}{$=\binom{m}{3}=\frac{m(m-1)(m-2)}{2\cdot3}$ is $\begin{cases}
\begin{array}[t]{cc}
\mbox{even,  if (\ensuremath{m-1)=0}  mod 4 } &\Rightarrow C_{\bigtriangleup}\mbox{ cancels.}\\
\mbox{odd,  if (\ensuremath{m-1)=2}  mod 4 } & \Rightarrow C_{\bigtriangleup}\mbox{ remains.}
\end{array}\end{cases}$}\tabularnewline
 & \#$\circledast\circledast\bigtriangleup$ & \multicolumn{4}{l}{$=\binom{m-2}{1}$ is odd $\Rightarrow$ $C_{\circledast\circledast\bigtriangleup}$
remains.}\tabularnewline
 & \#$\circledast\bigtriangleup$ & \multicolumn{4}{l}{$=\binom{m-1}{2}=\frac{(m-1)(m-2)}{2}$ is $\begin{cases}
\begin{array}[t]{cc}
\mbox{even,   if (\ensuremath{m-1)=0}  mod 4 } &  \Rightarrow  C_{\circledast\bigtriangleup}\mbox{ cancels.}\\
\mbox{odd,  if (\ensuremath{m-1)=2}  mod 4 }  & \Rightarrow  C_{\circledast\bigtriangleup}\mbox{ remains.}
\end{array}\end{cases}$}\tabularnewline
 &  &  &  &  & \tabularnewline
\textbf{4.} 
 & \#$\circledast\circledast\circledast$ & \multicolumn{4}{l}{$=\binom{m-3}{1}$  is even $\Rightarrow\; C_{\circledast\circledast\circledast}$ cancels.}\tabularnewline
 & \#$\circledast\circledast$ & \multicolumn{4}{l}{$=\binom{m-2}{2}=\frac{(m-2)(m-3)}{2}$ is $\begin{cases}
\begin{array}[t]{cc}
\mbox{odd,   if (\ensuremath{m-1)=0}  mod 4 } & \Rightarrow C_{\circledast\circledast}\mbox{ remains.}\\
\mbox{even, if (\ensuremath{m-1)=2}  mod 4 } & \Rightarrow C_{\circledast\circledast}\mbox{ cancels.}
\end{array}\end{cases}$}\tabularnewline
 & \#$\circledast$ & \multicolumn{4}{l}{$=\binom{m-1}{3}=\frac{(m-1)(m-2)(m-3)}{2\cdot3}$ is even $\Rightarrow\; C_{\circledast}$ cancels. }\tabularnewline
 & \#$\{\}$ & \multicolumn{4}{l}{$=\binom{m}{4}=\frac{(m-1)(m-2)(m-3)(m-4)}{2\cdot3\cdot4}\ $ affects
global sign ($GL$) and will be discussed separately.}\tabularnewline
\end{tabular}
\begin{center}
\textbf{Table 1.}  Counting phase gates for a four-uniform case. $m$ is odd.
\end{center}

$\quad $\\
\textbf{Remark:} All four-qubit phase gates move with the multiplicity one to the right side and therefore, cancel out with the same phase gate on the right. The detailed reasoning was discussed in proof of the Lemma (\ref{lemma3full}) in the Appendix A. So, we skipped such scenarios in the Table 1.  

Now we consider  two cases of $(m-1)$ separately and for each case we fix  the global sign ($GL$) defined in the Table 1.\\

\textbf{Case \# 1:}  $(m-1)=0$ mod  4: 
\begin{equation}\label{part1_2}
\langle G_1 \rangle  =\pm\bra{+}^{\otimes N}  XX\dots XZ\dots Z \prod_{\forall \circledast, \forall\bigtriangleup } C_{\bigtriangleup \bigtriangleup \bigtriangleup} C_{\circledast \circledast \bigtriangleup}C_{\circledast\circledast} \ket{+}^{\otimes N}=\pm \frac{1}{2^N} Tr( \prod_{\forall \circledast, \forall\bigtriangleup } C_{\bigtriangleup \bigtriangleup \bigtriangleup} C_{\circledast \circledast \bigtriangleup}C_{\circledast\circledast}C_{\bigtriangleup}).
\end{equation}
\textbf{Remark:} We write the '$\pm$'  sign as we have not fixed the global sign yet.\\

To evaluate the trace of the given diagonal matrix, we need to find the difference between the  number of $+1$'s and  $-1$'s on the diagonal.  We write every row in the computational basis by enumerating it with the binary notation. Due to the symmetry of the problem, we assign the first $m$ columns to $X$ measurement  ($\circledast$) and the rest to, $Z$($\bigtriangleup$). For each  row, we denote by $\alpha$ the number of 1's in binary notation appearing in the first $m$ column and by $\beta$, the same on the rest. This notation is also adopted. See the proof of Lemma \ref{lemma3full} for more detailed explanation.\\

Considering the phase gates in  (\ref{part1_2}), the expression $(-1)^s$ defines whether in the given row the diagonal element is $+1$ or  $-1$, where : 
\begin{equation}
s:=\binom{\beta}{ 3}+\binom{\alpha}{2}\binom{\beta}{1}+\binom{\alpha}{2}+\binom{\beta}{1}=\frac{\beta (\beta-1)(\beta-2)}{2\cdot 3}+\frac{\alpha(\alpha -1)}{2}(\beta+1)+\beta.
\end{equation}
The sign of the diagonal element is determined at follows: 

\begin{tabular}{lllc}
\textbf{1. } & $\alpha$ is even \& $\beta$ is even:  & \multicolumn{2}{l}{if $\alpha=0$ mod 4 $\Rightarrow$ $(-1)^{s}=+1$}\tabularnewline
 &  & \multicolumn{2}{l}{if $\alpha=2 $ mod 4 $\Rightarrow$ $(-1)^{s}=-1$}\tabularnewline
 &  &  & \tabularnewline
\textbf{2. } & $\alpha$ is odd \& $\beta$ is even:  & \multicolumn{2}{l}{if $(\alpha-1)=0$ mod 4 $\Rightarrow$ $(-1)^{s}=+1$}\tabularnewline
 &  & \multicolumn{2}{l}{if $(\alpha-1)=2$ mod 4 $\Rightarrow$ $(-1)^{s}=-1$}\tabularnewline
 &  &  & \tabularnewline
\textbf{3.} &  (Any $\alpha$)  \& $\beta$ is odd:  & \multicolumn{2}{l}{if $(\beta-1)=0$ mod 4 $\Rightarrow$ $(-1)^{s}=-1$}\tabularnewline
 &  & \multicolumn{2}{l}{if $(\beta-1)=2$ mod 4 $\Rightarrow$ $(-1)^{s}=+1$}\tabularnewline
\end{tabular}\\

Having established the $\pm 1$ values for each row, we can sum them up to find the trace in (\ref{part1_2}). Here we use the identities (\ref{identity1}) and (\ref{identity2})  and afterwards the look-up Table 0 to insert the numerical values where necessary:

\begin{align} \label{part1_3}
\begin{split}
\langle G_1 \rangle  = &\pm\frac{1}{2^N}\Bigg[\sum^{N-m}_{\beta=0,2,4\dots }  \binom{N-m}{\beta}\bigg[\sum_{\alpha =0,4,\dots}^{m}\binom{m}{\alpha}+\binom{m}{\alpha +1}-\binom{m}{\alpha +2}-\binom{m}{\alpha +3}\bigg] \\
+&\sum_{\beta =1,5,\dots}  \bigg[-\binom{N-m}{\beta}\sum_\alpha \binom{m}{\alpha}+\binom{N-m}{\beta+2}\sum_\alpha \binom{m}{\alpha}\bigg]\Bigg]\\
= &\pm\frac{1}{2^N}\Bigg[\bigg[Re\Big[(1+i)^m \Big] +Im\Big[(1+i)^m\Big]\bigg]2^{N-m-1} + 2^m \sum^{N-m}_{\beta =1,5,\dots}  \bigg[-\binom{N-m}{\beta}+\binom{N-m}{\beta+2}\bigg]\Bigg]\\
=&\pm\frac{1}{2^N}\Bigg[\bigg[Re \Big[(1+i)^m \Big] +Im\Big[(1+i)^m\Big]\bigg]2^{N-m-1} - 2^m  Im\Big[(1+i)^{N-m}\Big]\Bigg] \equiv\pm\frac{1}{2^N} E.
\end{split}
\end{align}
We have to consider $N=8k-1$ and  $N=8k$ or $N=8k-2$ separately to continue the derivation of (\ref{part1_3}):\\

1.  For $N=8k-1$, using the values from the Table 0:
\begin{equation}
\langle G_1 \rangle=\pm \frac{1}{2^N} \Bigg[2^{\frac{m+1}{2}}2^{N-m-1}- 2^m  Im\Big[(1+i)^{N-m}\Big]\Bigg] =\pm\frac{2^{\frac{m+1}{2}}2^{N-m-1}+2^m2^{\frac{N-m}{2}}}{2^N}= \pm \frac{2^{\left\lfloor N/ 2\right\rfloor  -m}+1}{2^{\left\lfloor N/ 2\right\rfloor  - \left\lfloor m/ 2\right\rfloor }}.
\end{equation} \\

2.  For $N=8k$ or$N=8k-2$ , using the values from the Table 0:
\begin{equation}
\langle G_1 \rangle=\pm \frac{1}{2^N} \Bigg[2^{\frac{m+1}{2}}2^{N-m-1}- 2^m  Im\Big[(1+i)^{N-m}\Big]\Bigg] =\pm\frac{2^{\frac{m+1}{2}}2^{N-m-1}+2^m2^{\frac{N-m-1}{2}}}{2^N}= \pm \frac{2^{\left\lfloor N/ 2\right\rfloor  -m}+1}{2^{\left\lfloor N/ 2\right\rfloor  - \left\lfloor m/ 2\right\rfloor }}.
\end{equation}

Therefore,
\begin{equation}\label{part1_4}
 \langle G_1 \rangle = \pm \frac{2^{\left\lfloor N/ 2\right\rfloor  -m}+1}{2^{\left\lfloor N/ 2\right\rfloor  - \left\lfloor m/ 2\right\rfloor }}.
\end{equation}
Concerning the sign in (\ref{part1_4}), it is affected by  the product of two components: one from the case \textbf{4} from Table 1: $GL$ and the other by $E$ from  (\ref{part1_3}).  If $(m-1)=0$  mod 8,   the equation $E$ has a positive sign and also $GL=+1$. And  if $(m-1)=4$ mod 8, $E$  has a negative sign and $GL=-1$.  Therefore, in both cases  or equivalently, for $(m-1) =0\; mod\; 4$,
\begin{equation}
\langle G_1\rangle =+\frac{2^{\left\lfloor N/ 2\right\rfloor  -m}+1}{2^{\left\lfloor N/ 2\right\rfloor  - \left\lfloor m/ 2\right\rfloor }}.
\end{equation}

\textbf{Case \# 2:}  $(m-1)=2$ mod  4: 
\begin{align}\label{par2_1}
\begin{split}
\langle G_1 \rangle & =\pm\bra{+}^{\otimes N}  XX\dots XZ\dots Z \prod_{\forall \circledast, \forall\bigtriangleup } C_{\bigtriangleup \bigtriangleup \bigtriangleup} C_{\bigtriangleup \bigtriangleup } C_{\circledast \circledast \bigtriangleup}C_{\circledast\bigtriangleup}C_{\bigtriangleup } \ket{+}^{\otimes N}\\
& =\pm \frac{1}{2^N} Tr( \prod_{\forall \circledast, \forall\bigtriangleup }  C_{\bigtriangleup \bigtriangleup \bigtriangleup} C_{\bigtriangleup \bigtriangleup } C_{\circledast \circledast \bigtriangleup}C_{\circledast\bigtriangleup}).
\end{split}
\end{align}
In this case, we apply the same technique to determine the sign: 

\begin{equation}
s:=\binom{\beta}{ 3}+\binom{\beta}{ 2}+\binom{\alpha}{2}\binom{\beta}{1}+\binom{\alpha}{1}\binom{\beta}{1}=\frac{\beta (\beta-1)(\beta-2)}{2\cdot 3}+\frac{\beta(\beta-1)}{2}+\frac{\alpha(\alpha -1)}{2}\beta+\alpha\beta.
\end{equation}
The sign of $s$, is determined at follows: \\

\begin{tabular}{lllc}
\textbf{1. } & $\beta$ is even \&  any $\alpha$  & \multicolumn{2}{l}{if $\beta=0$ mod 4 $\Rightarrow$ $(-1)^{s}=+1$}\tabularnewline
 &  & \multicolumn{2}{l}{if $\beta=2$ mod 4 $\Rightarrow$ $(-1)^{s}=-1$}\tabularnewline
 &  &  & \tabularnewline
\textbf{2. } & $\beta$ is odd \& $\alpha$ is even:  & \multicolumn{2}{l}{if $\alpha=0$ mod 4 $\Rightarrow$ $(-1)^{s}=+1$}\tabularnewline
 &  & \multicolumn{2}{l}{if $\alpha=2$ mod 4 $\Rightarrow$ $(-1)^{s}=-1$}\tabularnewline
 &  &  & \tabularnewline
\textbf{3.} & $\beta$ is odd \& $\alpha$ is odd:  & \multicolumn{2}{l}{if $(\alpha-1)=0$ mod 4 $\Rightarrow$ $(-1)^{s}=-1$}\tabularnewline
 &  & \multicolumn{2}{l}{if $(\alpha-1)=2$ mod 4 $\Rightarrow$ $(-1)^{s}=+1$}\tabularnewline
\end{tabular}\\

\begin{align} \label{part2_2}
\begin{split}
\langle G_1\rangle  =& \pm\frac{1}{2^N} \Bigg[\sum_{\beta=0,4\dots } \bigg[\binom{N-m}{\beta}\sum_{\alpha =0}^{m}\binom{m}{\alpha}-\binom{N-m}{\beta +2}\sum_{\alpha =0}^{m}\binom{m}{\alpha}\bigg] \\
+ &\sum_{\beta=1,3,\dots } \binom{N-m}{\beta}\bigg[\sum_{\alpha =0,4,\dots}^{m}\binom{m}{\alpha}-\binom{m}{\alpha +1}-\binom{m}{\alpha +2}+\binom{m}{\alpha +3}\bigg]\Bigg]\\
= & \pm \frac{1}{2^N}\Bigg[2^m \sum_{\beta =0,4,\dots}  \bigg[\binom{N-m}{\beta}-\binom{N-m}{\beta+2}\bigg]+2^{N-m-1} \bigg[Re\Big[(1+i)^m \Big] -Im\Big[(1+i)^m\Big]\bigg]\Bigg]\\
= &  \pm \frac{1}{2^N}\Bigg[2^m  Re\Big[(1 +i)^{N-m}\Big] +2^{N-m-1} \bigg[Re\Big[(1+i)^m \Big] -Im\Big[(1+i)^m\Big]\bigg]\Bigg]\equiv \frac{1}{2^N}E.
\end{split} 
\end{align}

We have to consider $N=8k-1$ and  $N=8k$ or $N=8k-2$ separately to continue the derivation of (\ref{part1_3}):\\

1.  For $N=8k-1$, using the values from the Table 0:
\begin{equation}
\langle G_1 \rangle=\pm \frac{1}{2^N} \Bigg[ 2^m  Re\Big[(1+i)^{N-m}\Big]+2^{\frac{m+1}{2}}2^{N-m-1}\Bigg] =\pm\frac{2^m2^{\frac{N-m}{2}}+2^{\frac{m+1}{2}}2^{N-m-1}}{2^N}= \pm \frac{2^{\left\lfloor N/ 2\right\rfloor  -m}+1}{2^{\left\lfloor N/ 2\right\rfloor  - \left\lfloor m/ 2\right\rfloor }}.
\end{equation}
2.  For $N=8k$ or$N=8k-2$ , using the values from the Table 0:
\begin{equation}
\langle G_1 \rangle=\pm \frac{1}{2^N} \Bigg[ 2^m  Re\Big[(1+i)^{N-m}\Big]+2^{\frac{m+1}{2}}2^{N-m-1}\Bigg] =\pm\frac{2^m2^{\frac{N-m-1}{2}}+2^{\frac{m+1}{2}}2^{N-m-1}}{2^N}= = \pm \frac{2^{\left\lfloor N/ 2\right\rfloor  -m}+1}{2^{\left\lfloor N/ 2\right\rfloor  - \left\lfloor m/ 2\right\rfloor }}.
\end{equation}

Therefore,
\begin{equation}\label{part2_3}
 \langle G_1 \rangle = \pm \frac{2^{\left\lfloor N/ 2\right\rfloor  -m}+1}{2^{\left\lfloor N/ 2\right\rfloor  - \left\lfloor m/ 2\right\rfloor }}.
\end{equation}
Concerning the sign  in (\ref{part2_3}), it is affected by  the product of two components: one from the case \textbf{4} from Table 1: $GL$ and the other by $E$ in (\ref{part2_2}).\\
Hence, if $(m-3)=0$ mod 8,   $E$  has a negative sign and $GL=+1$. And  if $(m-3)=4$ mod 8,  $E$ has a positive sign and $GL=-1$.  Therefore, in both cases  or equivalently, for  $(m-1) =2\; mod\; 4$,
\begin{equation}\label{part2_4}
 \langle G_1 \rangle = - \frac{2^{\left\lfloor N/ 2\right\rfloor  -m}+1}{2^{\left\lfloor N/ 2\right\rfloor  - \left\lfloor m/ 2\right\rfloor }}.
\end{equation}
This completes the proof of part 1 (i).\\

\textbf{Part 1} $(ii)$ For $N=8k-1$,  show that
\begin{equation}
\langle G_2\rangle=\langle \underset{N}{\underbrace{XX\dots XX}}\rangle =-1.
\end{equation} 
Here as well we use the identity (\ref{identity}) to count the multiplicity of remaining phase  gates. Since all the measurements are in $X$ direction, we need to make a new table with the same notations as in the previous case:\\

\begin{tabular}{lrl}
 &  & \tabularnewline
$1.$& $\#\circledast\circledast\circledast\circledast$ &  every gate  occurs only once $\Rightarrow$  every $C_{e}$ cancels
with the $C_{e}$ on the right hand side.\tabularnewline
 &  & \tabularnewline
$2.$ & $\#\circledast\circledast\circledast$ & $\binom{8k-4}{1}$ is even  $\Rightarrow \; C_{\circledast\circledast\circledast}$ cancels.\tabularnewline
 &  & \tabularnewline
$3.$ & $\#\circledast\circledast$ &$ \binom{8k-3}{2}$ is even   $\Rightarrow \; C_{\circledast\circledast}$ cancels.\tabularnewline
 &  & \tabularnewline
$4.$ & $\#\circledast$ &  $\binom{8k-2}{3}$ is even  $\Rightarrow C_{\circledast}$ cancels.\tabularnewline
 &  & \tabularnewline
$5.$ & $\#\{\}$ &  $\binom{8k-1}{4}$ is odd $\Rightarrow$ we get a global negative sign. \tabularnewline
\end{tabular}
\begin{center}
\textbf{Table 2.} Counting phase gates for a four-uniform HG when each system is measured in $X$ direction.
\end{center}

$\quad $\\
Therefore, 
\begin{equation}
\langle G_2 \rangle=-\frac{1}{2^N}Tr\big(\mathbbm{1}\big)=-1.
\end{equation}
This finishes the proof of part 1.\\

\textbf{Part 2:} We show that, for $N=4k+1$: 
\begin{equation}
\langle G_3\rangle=\langle \underset{m}{\underbrace{X\dots X}}Z\dots Z\rangle =\begin{cases}
\begin{array}[t]{cc}
+\frac{1}{2^{\left\lceil m/2\right\rceil }}, & \mbox{if \ensuremath{(m-1)=0 \ mod\ 4}},\\
-\frac{1}{2^{\left\lceil m/2\right\rceil }}, & \ \mbox{if \ensuremath{(m-1)=2 \ mod\ 4}},\\
\frac{1}{2^{\left\lfloor N/2\right\rfloor }}, & \mbox{if }\mbox{\ensuremath{m=N}}.
\end{array}\end{cases}
\end{equation}

Since the number of systems measured in the $X$ direction is the same in this part as it was in part 1, we can use the results demonstrated in the Table 1. Therefore, we use  equation  (\ref{part1_3}) when $m-1=0$ mod 4  and (\ref{part2_2}), for $m-1=2$ mod 4. \\

\textbf{Case \# 1:}  $(m-1)=0$ mod  4: 
\begin{equation}\label{part3_1}
\langle G_3 \rangle=\pm\frac{1}{2^N}\Bigg[\bigg[Re \Big[(1+i)^m \Big] +Im\Big[(1+i)^m\Big]\bigg]2^{N-m-1} - 2^m  Im\Big[(1+i)^{N-m}\Big]\Bigg]=\pm\frac{1}{2^N} E.
\end{equation}
As $N=4k+1$,   we have that $\; N-m=4k+1-m=4k-(m-1)$,  which is divisible by $4$. Therefore  $Im\Big[(1+i)^{N-m}\Big]=0$. and  equation (\ref{part3_1}) reduces to:
\begin{equation}\label{part3_2}
\langle G_3 \rangle=\pm\frac{2^{N-m-1}}{2^N}\bigg[Re \Big[(1+i)^m \Big] +Im\Big[(1+i)^m\Big]\bigg]=\pm \frac{1}{2^{\left\lceil m/2\right\rceil }}.
\end{equation}
We need to fix the global sign $GL$ from the Table 1.  For this, we consider two cases. First, if $(m-1)=0$ mod 8, then $GL=+$ and  so is the sign $E$  in equation (\ref{part3_1}):
\begin{equation}
\langle G_3 \rangle=+\frac{1}{2^{\left\lceil m/2\right\rceil }}.
\end{equation}
Second, if  $(m-1)=4$ mod 8, then $GL=-$ and so is the sign of$E$ in  equation (\ref{part3_1}):
\begin{equation}
\langle G_3 \rangle=+\frac{1}{2^{\left\lceil m/2\right\rceil }}.
\end{equation}\\

\textbf{Case \# 2:}  For $(m-1)=2$ mod  4: 
\begin{equation}\label{part3_3}
\langle G_3 \rangle=  \pm \frac{1}{2^N}\Bigg[2^m  Re\Big[(1 +i)^{N-m}\Big] +2^{N-m-1} \bigg[Re\Big[(1+i)^m \Big] -Im\Big[(1+i)^m\Big]\bigg]\Bigg].
\end{equation}
 As $N=4k+1$,  we have $ N-m=4k+1-m=4k-(m-1)$, which is not divisible by $4$ but is an even number. Therefore,  $Re\Big[(1+i)^{N-m}\Big]=0$. So,  the equation (\ref{part3_3}) reduces to:
 \begin{equation}\label{part3_4}
\langle G_3 \rangle=  \pm \frac{2^{N-m-1}}{2^N} \bigg[Re\Big[(1+i)^m \Big] -Im\Big[(1+i)^m\Big]\bigg]\equiv \pm \frac{2^{N-m-1}}{2^N}  E.
\end{equation}
We need to fix the global sign $GL$ from the Table 1.  For this, we consider two cases. First, if $(m-3)=0$ mod 8, then the global sign is positive but the sign of $E$ in (\ref{part3_4}) is negative. Therefore, 
\begin{equation}
\langle G_3 \rangle=-\frac{1}{2^{\left\lceil m/2\right\rceil }}.
\end{equation}
Second, if $(m-3)=4$ mod 8, then the global sign is negative but the sign of $E$ in (\ref{part3_4})  is positive. Therefore, 
\begin{equation}
\langle G_3 \rangle=-\frac{1}{2^{\left\lceil m/2\right\rceil }}.
\end{equation}

\textbf{Case \# 3:}   $m=N$ resembles part 1 $(ii)$. 
The only difference comes in with the number of qubits we are currently working with:

\begin{tabular}{lrl}
 &  & \tabularnewline
$1$ & $\#\circledast\circledast\circledast\circledast$ & e every gate  occurs only once $\Rightarrow$ every $C_{e}$ cancels
with the $C_{e}$ on the right hand side.\tabularnewline
 &  & \tabularnewline
$2.$ & $\#\circledast\circledast\circledast$ & $\binom{4k-2}{1}$ is  even  $\Rightarrow \; C_{\circledast\circledast\circledast}$ cancels.\tabularnewline
 &  & \tabularnewline
$3.$ & $\#\circledast\circledast$ &  $\binom{4k-1}{2}$  is odd  $\Rightarrow C_{\circledast\circledast}$ remains. \tabularnewline
 &  & \tabularnewline
$4.$ & $\#\circledast$ & $\binom{4k}{3}$ is even   $\Rightarrow C_{\circledast}$
cancels.\tabularnewline
 &  & \tabularnewline
$5.$ & $\#\{\}$ & the global sign depends on $k$, as $\binom{4k+1}{4}=\frac{(4k+1)4k(4k-1)(4k-2)}{2\cdot 3\cdot 4}$  \tabularnewline
\end{tabular}\\
\begin{center}
\textbf{Table 3}.  Counting phase gates for a four-uniform HG when each system is measured in $X$ direction.
\end{center}

$\quad $\\
Back to the expectation value,
\begin{equation}
\langle G_3 \rangle=   \pm \frac{1}{2^N}Tr\bigg[\prod_{\forall \circledast}C_{\circledast\circledast}\bigg].
\end{equation}
So, we  have to count the difference between the amount of $+1$'s and $-1$'s on the diagonal. As we use exactly the same techniques before,  we will skip the detailed explanation.  The sign on the diagonal is:
\begin{equation}
(-1)^{\binom{\alpha}{2}}=(-1)^{\frac{\alpha(\alpha-1)}{2}}.
\end{equation} 
and it is straightforward to evaluate it for each value of $\alpha$. \\
\begin{equation}\label{aaa}
\langle G_3 \rangle=\pm\frac{1}{2^N}\Bigg [\sum_{\alpha =0,4,\dots}^{N}\binom{N}{\alpha}+\binom{N}{\alpha +1}-\binom{N}{\alpha +2}-\binom{N}{\alpha +3}\Bigg ] =\pm\frac{1}{2^N}\bigg[Re \Big[(1+i)^N \Big] +Im\Big[(1+i)^N\Big]\bigg]\equiv \pm\frac{1}{2^N}E.
\end{equation}\\

Keeping in mind that $N=4k+1$, the global sign from the Table 3 is positive for even $k$ and negative for odd. The the sign of $E$ in (\ref{aaa}) is positive if $k$ is even and negative, otherwise. Therefore,
\begin{equation}
\langle G_3 \rangle=\frac{1}{2^{\left\lfloor N/2\right\rfloor }}.
\end{equation}

This completes the proof of part 2.

\textbf{Part 3: }  We start with (ii).  We show that for  $N=8k+2,\;   \mbox{ or  } 8k+4 $:\\
(ii)
\begin{equation}
\langle G_4 \rangle= \langle XX\dots XX\rangle =\frac{2^{\frac{N}{2}-1}+1}{2^{\frac{N}{2}}}.
\end{equation} 

 Although the result seems identical, unfortunately, each case needs a separate treatment. The technique is similar to the previous proofs, though. We just  mind the number of qubits  we are working with:\\

For $N=8k+2$ we find the remaining phase gates  as follows: \\

\begin{tabular}{lrl}
 &  & \tabularnewline
$1$ & $\#\circledast\circledast\circledast\circledast$ & each gate  only once; thus,  every $C_{e}$ cancels
with the $C_{e}$ on the right hand side.\tabularnewline
 &  & \tabularnewline
$2.$ & $\#\circledast\circledast\circledast$ & $\binom{8k-1}{1}$ is odd  $\Rightarrow \; C_{\circledast\circledast\circledast}$ remains.\tabularnewline
 &  & \tabularnewline
$3.$ & $\#\circledast\circledast$ & $\binom{8k}{2}$ is  even  $\Rightarrow \; C_{\circledast\circledast}$ cancels.\tabularnewline
 &  & \tabularnewline
$4.$ & $\#\circledast$ &  $\binom{8k+1}{3}$ is even $\Rightarrow C_{\circledast }$ cancels.\tabularnewline
 &  & \tabularnewline
$5.$ & $\#\{\}$ &  $\binom{8k+2}{4}$ is even times $\Rightarrow$ we get a global positive sign. \tabularnewline
 &  & \tabularnewline
\end{tabular}
\begin{center}
\textbf{Table 4.} Counting phase gates for a four-uniform HG when each system is measured in $X$ direction.
\end{center}

$\quad $\\
Therefore,
\begin{equation}
\langle G_4 \rangle=\frac{1}{2^N}Tr\Big[
\prod_{\forall \circledast}C_{\circledast\circledast\circledast}\Big].
\end{equation}

We use $(-1)^s$ to define the sign of the diagonal element and $s=\binom{\alpha}{3}$. So, after considering all possible values of $\alpha$, it is directly obtained that 
\begin{align}
\begin{split}
\langle G_4 \rangle= \frac{1}{2^N} Tr(C_{\circledast\circledast\circledast}) & =\frac{1}{2^N}\bigg[\sum_{\alpha=0,2,\dots}^N\binom{N}{\alpha} + \sum_{\alpha=1,5,\dots}\Big[ \binom{N}{\alpha}-\binom{N}{\alpha+2}\Big] \bigg] \\ 
& =\frac{1}{2^N}\bigg[2^{N-1} +Im\big[(1+i)^N \big]\bigg]  =\frac{2^{\frac{N}{2}-1}+1}{2^{\frac{N}{2}}}.
\end{split}
\end{align}\\

For $N=8k+4$ we find the remaining phase gates  as follows: \\

\begin{tabular}{lrl}
 &  & \tabularnewline
$1.$ & $\#\circledast\circledast\circledast\circledast$ & each gate occurs only once $\Rightarrow$  every $C_{e}$ cancels with the $C_{e}$ on the right hand side.\tabularnewline
 &  & \tabularnewline
$2.$ & $\#\circledast\circledast\circledast$ & $ \binom{8k+1}{1}$ is  odd  $\Rightarrow \;  C_{\circledast\circledast\circledast}$ remains.\tabularnewline
 &  & \tabularnewline
$3.$ & $\#\circledast\circledast$ &  $\binom{8k+2}{2}$ is odd  $\Rightarrow\;  C_{\circledast\circledast}$ remains.\tabularnewline
 &  & \tabularnewline
$4. $ & $\#\circledast$ &  $\binom{8k+3}{3}$ is odd  $\Rightarrow \; =C_{\circledast}$ remains.\tabularnewline
 &  & \tabularnewline
$5. $ & $\#\{\}$ &  $\binom{8k+4}{4}$ is odd  $\Rightarrow$ we get a global negative  sign, $GL=-1$. \tabularnewline
 &  & \tabularnewline
\end{tabular}
\nopagebreak[9]
\begin{center}
\nopagebreak[9]
\textbf{Table 5.} Counting phase gates for a four-uniform case in all $X$ direction.
\end{center}


Therefore,
\begin{equation}
\langle G_4\rangle  =-\frac{1}{2^N} Tr\Big[\prod_{\forall \circledast}C_{\circledast\circledast\circledast}C_{\circledast\circledast}C_{\circledast}\Big].
\vspace{-0.3cm}
\end{equation}
We use $(-1)^s$ to define the sign of the diagonal element and $s=\binom{\alpha}{3}+\binom{\alpha}{2}+\binom{\alpha}{1}$. So, after considering all possible values of $\alpha$, it is directly obtained that 
\begin{align}
\begin{split}
\langle G_4\rangle  =-\frac{1}{2^N} Tr(C_{\circledast\circledast\circledast}C_{\circledast\circledast}C_{\circledast})& =-\frac{1}{2^N}\bigg[\sum_{\alpha=0,4,\dots}^N\Big[\binom{N}{\alpha}-\binom{N}{\alpha+2}\Big] -\sum_{\alpha=1,3,\dots}^N \binom{N}{\alpha}\bigg]\\
&  =-\frac{1}{2^N}\bigg[-2^{N-1} +Re\big[(1+i)^N \big]\bigg]  =\frac{2^{N-1}+2^{N/2}}{2^N} =\frac{2^{\frac{N}{2}-1}+1}{2^{\frac{N}{2}}}.
\end{split}
\end{align}
This finishes the proof of part $(i)$.

(ii)  We need to show that
\begin{equation}
\langle G_4\rangle =\langle \underset{m}{\underbrace{X\dots X}}Z\dots Z\rangle =\begin{cases}
\begin{array}[t]{cc}
+\frac{2^{m/2-1}}{2^{N/2 }} & \mbox{if \ensuremath{(N-m)=0\  mod\  4 }},\\
-\frac{2^{m/2-1}}{2^{N/2 }} & \mbox{if \ensuremath{(N-m)=2 \ mod\  4 }}.
\end{array}\end{cases}
\end{equation}
Note that in this case $m$ is an even number. Therefore, we have to derive again from the scratch  how phase gates can be moved to the right hand side of the expression and for this we use the identity ($\ref{identity}$).

\begin{tabular}{rrllll}
 &  & \multicolumn{4}{l}{}\tabularnewline
\textbf{1.} & \#$\bigtriangleup\bigtriangleup\bigtriangleup$ & \multicolumn{4}{l}{$=\binom{m}{1}$ is even $\Rightarrow\;  C_{\bigtriangleup\bigtriangleup\bigtriangleup}$ cancels.}\tabularnewline
 &  &  &  &  & \tabularnewline
\textbf{2.} & \#$\bigtriangleup\bigtriangleup$ & \multicolumn{4}{l}{$=\binom{m}{2}=\frac{m(m-1)}{2}$ is $\begin{cases}
\begin{array}[t]{cc}
\mbox{even, if \ensuremath{m=0}  mod 4} & \Rightarrow  C_{\bigtriangleup\bigtriangleup}\mbox{ cancels.}\\
\mbox{odd, if \ensuremath{m=2} mod 4} & \Rightarrow  C_{\bigtriangleup\bigtriangleup}\mbox{ remains.}
\end{array}\end{cases}$ }\tabularnewline
 & \#$\circledast\bigtriangleup\bigtriangleup$ & \multicolumn{4}{l}{$=\binom{m-1}{1}$ is odd $\Rightarrow C_{\circledast\bigtriangleup\bigtriangleup}$ remains.}\tabularnewline
 &  &  &  &  & \tabularnewline
\textbf{3.} & \#$\bigtriangleup$ & \multicolumn{4}{l}{$=\binom{m}{3}=\frac{m(m-1)(m-2)}{2\cdot3}$ is even $\Rightarrow\; C_{\bigtriangleup}$ cancels.}\tabularnewline
 & \#$\circledast\circledast\bigtriangleup$ & \multicolumn{4}{l}{$=\binom{m-2}{1}$ is even $\Rightarrow\; C_{\circledast\circledast\bigtriangleup} $ cancels.}\tabularnewline
 & \#$\circledast\bigtriangleup$ & \multicolumn{4}{l}{$=\binom{m-1}{2}=\frac{(m-1)(m-2)}{2}$ is $\begin{cases}
\begin{array}[t]{cc}
\mbox{odd, if \ensuremath{m=0}  mod 4} & \Rightarrow  C_{\circledast\bigtriangleup}\mbox{ remains.}\\
\mbox{even,  if \ensuremath{m=2}  mod 4} & \Rightarrow  C_{\circledast\bigtriangleup}\mbox{ cancels.}
\end{array}\end{cases}$}\tabularnewline
 &  &  &  &  & \tabularnewline
\textbf{4.} & \#$\circledast\circledast\circledast$ & \multicolumn{4}{l}{$=\binom{m-3}{1}$  is odd $\Rightarrow \; C_{\circledast\circledast\circledast}$ remains. }\tabularnewline
 & \#$\circledast\circledast$ & \multicolumn{4}{l}{$=\binom{m-2}{2}=\frac{(m-2)(m-3)}{2}$ is $\begin{cases}
\begin{array}[t]{cc}
\mbox{odd, if \ensuremath{m=0}  mod 4} & \Rightarrow C_{\circledast\circledast}\mbox{ remains.}\\
\mbox{even, if \ensuremath{m=2}  mod 4} &  \Rightarrow C_{\circledast\circledast}\mbox{ cancels.}
\end{array}\end{cases}$}\tabularnewline
 & \#$\circledast$ & \multicolumn{4}{l}{$=\binom{m-1}{3}=\frac{(m-1)(m-2)(m-3)}{2\cdot3}$ is $\begin{cases}
\begin{array}[t]{cc}
\mbox{odd,  if \ensuremath{m=0}  mod 4} & \Rightarrow C_{\circledast}\mbox{ remains.}\\
\mbox{even,  if \ensuremath{m=2}  mod 4} & \Rightarrow C_{\circledast}\mbox{ cancels.}\end{array}\end{cases}$ }\tabularnewline
 & \#$\{\}$ & \multicolumn{4}{l}{$=\binom{m}{4}=\frac{m(m-1)(m-2)(m-3)}{2\cdot3\cdot4}\ $ affects
the global sign ($GL$) and will be discussed separately.}\tabularnewline
\end{tabular}
\nopagebreak[9]
\begin{center}
\textbf{Table 6.} Counting phase gates for a four-uniform HG state, for even $m$.
\end{center}

$\;$\\
\textbf{Remark:} Similarly to previous proofs the four-qubit phase gates cancel out. Therefore, we directly skip the discussion about them.

We need to consider two cases, when $m=0$ mod 4 and $m=2$ mod 4 for  each  $N=8k+2$ and $8k+4$ separately:\\

\textbf{Case \# 1:} If  $m=0$ mod  4:\\
\begin{align}\label{part4_1}
\begin{split}
\langle G_4 \rangle  & =\pm\bra{+}^{\otimes N}  XX\dots XZ\dots Z \prod_{\forall \circledast, \forall\bigtriangleup } C_{\circledast \bigtriangleup \bigtriangleup}C_{\circledast\bigtriangleup}  C_{\circledast \circledast \circledast } C_{\circledast \circledast }C_{\circledast} \ket{+}^{\otimes N}\\
& =\pm \frac{1}{2^N} Tr\Big[ \prod_{\forall \circledast, \forall\bigtriangleup } C_{\circledast \bigtriangleup \bigtriangleup}C_{\circledast\bigtriangleup}  C_{\circledast \circledast \circledast } C_{\circledast \circledast }C_{\circledast}C_{\bigtriangleup}  \Big].
\end{split}
\end{align}

We use $(-1)^s$ to define the sign of the diagonal element and $s=\binom{\alpha}{1}\binom{\beta}{2}+\binom{\alpha}{1}\binom{\beta}{1}+\binom{\alpha}{3}+\binom{\alpha}{2}+\binom{\alpha}{1}+\binom{\beta}{1}$.  If $s$ is even, the value on the diagonal is $+1$ and $-1$, otherwise. We consider all possible values of $\alpha$ and $\beta$:\\

\begin{tabular}{lllc}
\textbf{1. } & $\alpha$ is even \& $\beta$ is even & \multicolumn{2}{l}{if $\alpha=0$ mod 4 $\Rightarrow$ $(-1)^{s}=+1$}\tabularnewline
 &  & \multicolumn{2}{l}{if $\alpha=2$ mod 4 $\Rightarrow$ $(-1)^{s}=-1$}\tabularnewline
 &  &  & \tabularnewline
\textbf{2. } & $\alpha$ is even \& $\beta$ is odd:  & \multicolumn{2}{l}{if $\alpha=0$ mod 4 $\Rightarrow$ $(-1)^{s}=-1$}\tabularnewline
 &  & \multicolumn{2}{l}{if $\alpha=2$ mod 4 $\Rightarrow$ $(-1)^{s}=+1$}\tabularnewline
\end{tabular}\\

From here one can easily spot that for even $\alpha$, there is equal number of $+1$ and $-	1$ on the diagonal. So, they do not contribute in the calculations. We now consider the odd $\alpha$:\\

\begin{tabular}{lllc}
\textbf{3. } & $\alpha$ is odd \& $\beta$ is even & \multicolumn{2}{l}{if $\beta=0$ mod 4 $\Rightarrow$ $(-1)^{s}=-1$}\tabularnewline
 &  & \multicolumn{2}{l}{if $\beta=2$ mod 4 $\Rightarrow$ $(-1)^{s}=+1$}\tabularnewline
 &  &  & \tabularnewline
\textbf{4. } & $\alpha$ is odd \& $\beta$ is odd:  & \multicolumn{2}{l}{if $(\beta-1)=0$ mod 4 $\Rightarrow$ $(-1)^{s}=-1$}\tabularnewline
 &  & \multicolumn{2}{l}{if $(\beta-1)=2$ mod 4 $\Rightarrow$ $(-1)^{s}=+1$}\tabularnewline
\end{tabular}\\

We now continue calculation of the trace from ($\ref{part4_1}$): 
\begin{align}{\label{part4_2}}
\begin{split}
\langle G_4 \rangle & =\pm \frac{1}{2^N} \sum_{\alpha=1,3,5\dots }  \binom{m}{\alpha} \bigg[\sum_{\beta =0,4,\dots}^{N-m}-\binom{N-m}{\beta}-\binom{N-m}{\beta +1}+\binom{N-m}{\beta +2}+\binom{N-m}{\beta +3}\bigg] \\
& = \pm \frac{2^{m-1}}{2^N} \bigg[-Re \Big[(1+i)^{N-m} \Big] - Im\Big[(1+i)^{N-m}\Big]\bigg]=\pm \frac{2^{m-1}}{2^N} \bigg( \mp2^{\frac{N-m}{2}} \bigg).
\end{split}
\end{align}

We have to take care of the sign which appears from the product of the sign of the sum of real and imaginary part in ($\ref{part4_2}$) and global sign $(GL)$, which we defined while deriving the remaining phase gates.  If $m$ is divisible by $8$, $GL=+1$ and since we are in $N=8k+2$ case,  $(N-m=8k+2-m)-2$ mod $8$ and therefore: 
\begin{equation}
\langle G_4 \rangle= \frac{2^{m-1}}{2^N} \bigg( -2^{\frac{N-m}{2}} \bigg) = -\frac{2^{m/2-1}}{2^{N/2 }}.
\end{equation}
If $m$ is not divisible by $8$,  the global sign $GL=-1$, and the sum of real and imaginary part also contribute with a negative sign. Thus,
\begin{equation}
\langle G_4 \rangle=  -\frac{2^{m-1}}{2^N} \bigg( -(-2^{\frac{N-m}{2}}) \bigg) = -\frac{2^{m/2-1}}{2^{N/2 }}.
\end{equation}\\

Since the $N=8k+4$ case is identical, we only have to mind the sign of the sum of the real and imaginary part. Here as well we consider two cases: if $m$ is divisible by $8$, then the global sign $GL=+1$, and the sign of the sum of real and imaginary part is $"-"$. Therefore,
\begin{equation}
\langle G_4 \rangle=\frac{2^{m-1}}{2^N} \bigg( -(-2^{\frac{N-m}{2}}) \bigg) = +\frac{2^{m/2-1}}{2^{N/2 }}.
\end{equation}
And if $m$ is not divisible by $8$, $GL=-1$, and the sign of real and imaginary part is $"+"$. Therefore,
\begin{equation}
\langle G_4 \rangle= -\frac{2^{m-1}}{2^N} \bigg( -(+2^{\frac{N-m}{2}}) \bigg) = +\frac{2^{m/2-1}}{2^{N/2 }}.
\end{equation}\\

\textbf{Case \# 2:} If  $m=2$ mod  4:
\begin{align}\label{part2ii}
\begin{split}
\langle G_4\rangle & =\pm\bra{+}^{\otimes N}  XX\dots XZ\dots Z \prod_{\forall \circledast, \forall\bigtriangleup } C_{\circledast \bigtriangleup \bigtriangleup} C_{\circledast \circledast \circledast } C_{\bigtriangleup \bigtriangleup } \ket{+}^{\otimes N}\\
& =\pm \frac{1}{2^N} Tr( \prod_{\forall \circledast, \forall\bigtriangleup } C_{\circledast \bigtriangleup \bigtriangleup} C_{\circledast \circledast \circledast } C_{\bigtriangleup \bigtriangleup } C_{ \bigtriangleup }   ).
\end{split}
\end{align}\\
We use $(-1)^s$ to define the sign of the diagonal element and $s=\binom{\alpha}{1}\binom{\beta}{2}+\binom{\alpha}{3}+\binom{\beta}{2}+\binom{\beta}{1}$.  If $s$ is even, the value on the diagonal is $+1$ and $-1$, otherwise. We consider all possible values of $\alpha$ and $\beta$:\\

Considering the terms from odd $\alpha$:\\

\begin{tabular}{lllc}
\textbf{1. } & $\alpha$ is odd \& $\beta$ is even & \multicolumn{2}{l}{if $(\alpha-1)=0$ mod 4 $\Rightarrow$ $(-1)^{s}=+1$}\tabularnewline
 &  & \multicolumn{2}{l}{if $(\alpha-1)=2$ mod 4 $\Rightarrow$ $(-1)^{s}=-1$}\tabularnewline
 &  &  & \tabularnewline
\textbf{2. } & $\alpha$ is odd \& $\beta$ is odd:  & \multicolumn{2}{l}{if $(\alpha-1)=0$ mod 4 $\Rightarrow$ $(-1)^{s}=-1$}\tabularnewline
 &  & \multicolumn{2}{l}{if $(\alpha-1)=2$ mod 4 $\Rightarrow$ $(-1)^{s}=+1$}\tabularnewline
\end{tabular}

It is easy to see that these cases adds up to $0$. \\

\begin{tabular}{lllc}
\textbf{3. } & $\alpha$ is even \& $\beta$ is even & \multicolumn{2}{l}{if $\beta=0$ mod 4 $\Rightarrow$ $(-1)^{s}=+1$}\tabularnewline
 &  & \multicolumn{2}{l}{if $\beta=2$ mod 4 $\Rightarrow$ $(-1)^{s}=-1$}\tabularnewline
 &  &  & \tabularnewline
\textbf{4. } & $\alpha$ is even \& $\beta$ is odd:  & \multicolumn{2}{l}{if $(\beta-1)=0$ mod 4 $\Rightarrow$ $(-1)^{s}=-1$}\tabularnewline
 &  & \multicolumn{2}{l}{if $(\beta-1)=2$ mod 4 $\Rightarrow$ $(-1)^{s}=+1$}\tabularnewline
\end{tabular}\\

Therefore, 
\begin{align}
\begin{split}
\langle G_4\rangle=\pm\frac{1}{2^N}\sum_{\alpha=0,2,4\dots }  \binom{m}{\alpha} \bigg[\sum_{\beta =0,4,\dots}^{N-m}\binom{N-m}{\beta} & -\binom{N-m}{\beta +1}-\binom{N-m}{\beta +2}+\binom{N-m}{\beta +3}\bigg]  \\
=\pm 2^{m-1}  \bigg[Re \Big[(1+i)^{N-m} & \Big] - Im\Big[(1+i)^{N-m}\Big]\bigg]=\pm\frac{2^{m/2-1}}{2^{N/2 }}.
\end{split}
\end{align}

To fix the sign, we need to first consider $N=8k+2$,  and $(m-2)=0$  mod $8$. Then the global sign $GL=+1$ and type of $N-m$ also yields a positive sign. But if $(m-2)=4$ mod  $8$, global sign in negative and the $N-m$ also yields the negative sign. So, \\
\begin{equation}
\langle G_4 \rangle=\frac{2^{m/2-1}}{2^{N/2 }}.
\end{equation}
 $N=8k+4$ case is identical to $N=8k+2$, therefore we will just state the result. For $N=8k+4$
 \begin{equation}
\langle G_4 \rangle=-\frac{2^{m/2-1}}{2^{N/2 }}.
\end{equation}
To sum up,
\begin{equation}
\langle G_4\rangle  =\begin{cases}
\begin{array}[t]{cc}
+\frac{2^{m/2-1}}{2^{N/2 }} & \mbox{if \ensuremath{(N-m)=0\  mod\  4 }},\\
-\frac{2^{m/2-1}}{2^{N/2 }} & \mbox{if \ensuremath{(N-m)=2 \ mod\  4 }}.
\end{array}\end{cases}
\end{equation}\\
This finishes the proof of part 3.\\

\textbf{Part 4: }  We show that  for $N=8k+3$,  $\langle \underset{m}{\underbrace{X\dots X}}\underset{N-m-1}{\underbrace{Z\dots Z}\mathbbm1}\rangle$ for even $m$ gives the same exact result as  the part 3 (i).\\

We tackle  the problem as follows: We make a measurement on one of the qubits in $Z$ direction and  depending on the measurement outcome, we obtain the new $\ket{H_{4_{new}}^{M}}$  state, where $M:=N-1$. Then we consider the expectation values for the all possible measurement outcomes and $\ket{H_{4_{new}}^{M}}$. From that we conclude the statement in the  part 4. \\

Initial HG state $\ket{H_4^N}$ can be written in the following form as well:
\begin{equation}
\ket{H_4^N}=\prod_e C_e\ket{+}^{\otimes N}=\prod_{e',e''}C_{e'}C_{e''}\ket{+}^{\otimes N}=\prod_{e',e''}\big[\mathbbm{1} _{e'\backslash N}\ket{0}_N\bra{0}_N + C_{e'\backslash N}\ket{1}_N\bra{1}_N \big]C_{e''}\ket{+}^{\otimes N},
\end{equation}
where $e'$ represents the gates containing the last qubit, $N$ and $e''$ represents the ones which do not contain $N^{th}$ qubit. And $e=e'+e''$. Then if one makes a measurement in $Z$ basis on the last qubit and obtains outcome $+$, 
\begin{align}
\begin{split}
\ket{H_{4_{new}}^{M+}}=\braket{0_N}{H_4^N}&=\prod_{e',e''}\big[\bra{0}_N\mathbbm{1} _{e'\backslash N}\ket{0}_N\bra{0}_N+\bra{0}_NC_{e'\backslash N}\ket{1}_N\bra{1}_N \big]C_{e''}\ket{+}^{\otimes N}\\
&=\prod_{e',e''}\mathbbm{1} _{e'\backslash N}\bra{0}_NC_{e''}\ket{+}^{\otimes N}=\bra{0}_N\prod_{e',e''}\mathbbm{1} _{e'\backslash N}C_{e''}\ket{+}^{\otimes M}\big(\ket{0}_N+\ket{1}_N\big)\\
&=\prod_{e',e''}\mathbbm{1} _{e'\backslash N}C_{e''}\ket{+}^{\otimes M}=\prod_{e''}C_{e''}\ket{+}^{\otimes M}.
\end{split}.
\end{align}
So, $+$ outcome after measuring in $Z$ direction leaves us with $\ket{H_{4_{new}}^{M+}}$, which is precisely four uniform $M$-qubit HG state. Now, let us see what is the remaining state if one gets $-$ as an outcome result:
\begin{align}
\begin{split}
\ket{H_{4_{new}}^{M-}}=\braket{1_N}{H_4^N}&=\prod_{e',e''}\big[\bra{1}_N\mathbbm{1} _{e'\backslash N}\ket{0}_N\bra{0}_N+\bra{1}_NC_{e'\backslash N}\ket{1}_N\bra{1}_N \big]C_{e''}\ket{+}^{\otimes N}\\
&=\prod_{e',e''}C_{e'\backslash N}\bra{1}_N C_{e''}\ket{+}^{\otimes N}= \bra{1}_N  \prod_{e',e''}C_{e'\backslash N}C_{e''}\ket{+}^{\otimes M}\big(\ket{0}_N+\ket{1}_N\big)\\
&=\prod_{e',e''}C_{e'\backslash N}C_{e''}\ket{+}^{\otimes M}.
\end{split}
\end{align}
So, $-$ outcome after measuring in $Z$ direction leaves us with $\ket{H_{4_{new}}^{M-}}$, which is precisely a symmetric $M$-qubit HG state with all possible edges of cardinality four and three. We will call such HG state  a three- and four-uniform HG state.\\

Therefore, problem boils down to showing that,\\
$(i)$ If the measurement outcome is $+$, we get the $M=8k+2$ four-uniform HG state and the correlations are given in part 3.\\
$(ii)$ If the measurement outcome is $-$, we get $M=8k+2$  three- and fouruniform HG state and the following holds:
\begin{equation}
\langle G_5^-\rangle=\bra{H_{4_{new}}^{M-}} \underset{m}{\underbrace{X\dots X}}Z\dots Z\ket{H_{4_{new}}^{M-}}=\begin{cases}
\begin{array}[t]{cc}
-\frac{2^{m/2-1}}{2^{M/2 }} & \mbox{if \ensuremath{(M-m)=0\  mod\  4 }},\\
+\frac{2^{m/2-1}}{2^{M/2 }} & \mbox{if \ensuremath{(M-m)=2 \ mod\  4 }}.
\end{array}\end{cases}
\end{equation}

$ (i)$  $\ket{H_{4_{new}}^{M+}}$, where $M=8k+2$ was already considered in part 3.\\

 (ii). For $\ket{H_{4_{new}}^{M-}}$,
\begin{equation}
\langle G_5^-\rangle=\bra{+}^{\otimes M}\Big[\prod_{e',e''\in E} C_{e' }C_{e''}\Big]\underset{m}{\underbrace{X\dots X}}Z\dots Z\Big[\prod_{e',e''\in E} C_{e }C_{e''}\Big]\ket{+}^{\otimes M}.
\end{equation}
Before, we treated three- and four-uniform cases separately. Now, we just need to put them together. \\

\textbf{Case \# 1:} If  $m=0$ mod  4:\\
Then from equations (\ref{appendixAeq6}) and  (\ref{part4_1}), we can directly write down that
\begin{equation}
\langle G_5^-\rangle=\frac{1}{2^M}Tr\bigg[\prod_{\forall \circledast, \forall\bigtriangleup } C_{\circledast \bigtriangleup \bigtriangleup} C_{\circledast \circledast \circledast } C_{\circledast \circledast } C_{\bigtriangleup}\bigg]. 
\end{equation}

We check the sign of each term on the diagonal by $(-1)^s$, where $s=\binom{\alpha}{1}\binom{\beta}{2}+\binom{\alpha}{3}+\binom{\alpha}{2}+\binom{\beta}{1}$. For this we need to consider each value of $\alpha$ and $\beta$ separately.\\

\begin{tabular}{lllll}
\multirow{2}{*}{\textbf{1. }} & \multirow{2}{*}{$\alpha$ is even \& $\beta$ is even} & $\alpha=0$ mod 4 $\Rightarrow$ $+1$ &  & \multirow{5}{*}{$\Rightarrow$ These two give zero contribution together.}\tabularnewline
 &  & $\alpha=2$ mod 4 $\Rightarrow$ $-1$ &  & \tabularnewline
 &  &  &  & \tabularnewline
\multirow{2}{*}{\textbf{2.}} & \multirow{2}{*}{$\alpha$ is even \& $\beta$ is odd} & $\alpha=0$ mod 4 $\Rightarrow$ $-1$ &  & \tabularnewline
 &  & $\alpha=2$ mod 4 $\Rightarrow$ $+1$ &  & \tabularnewline
\end{tabular}\\
\\

\begin{tabular}{lllll}
\multirow{2}{*}{\textbf{3. }} & \multirow{2}{*}{$\alpha$ is odd \& $\beta$ is even} & $\beta=0$ mod 4 $\Rightarrow$ $+1$ &  & \multirow{5}{*}{}\tabularnewline
 &  & $\beta=2$ mod 4 $\Rightarrow$ $-1$ &  & \tabularnewline
 &  &  &  & \tabularnewline
\multirow{2}{*}{\textbf{4.}} & \multirow{2}{*}{$\alpha$ is odd \& $\beta$ is odd} & $\beta-1=0$ mod 4 $\Rightarrow$ $-1$&  & \tabularnewline
 &  & $\beta-1=2$ mod 4 $\Rightarrow$ $+1$ &  & \tabularnewline
\end{tabular}\\

\begin{align}\label{part5_1}
\begin{split}
\langle G_5^- \rangle & =\pm\frac{1}{2^{M}}\sum_{\alpha\ odd}^{m}\binom{m}{\alpha}\bigg[\sum_{\beta=0,4}^{M-m}\binom{M-m}{\beta}-\binom{M-m}{\beta+1}-\binom{M-m}{\beta+2}+\binom{M-m}{\beta+3}\bigg]\\
 &=\pm\frac{2^{m-1}}{2^{M}}\bigg[Re(1+i)^{M-m}-Im(1+i)^{M-m}\bigg]=\pm\frac{2^{\frac{m}{2}-1}}{2^{M/2}}.
\end{split}
\end{align}

If $m=0$ mod 8,  real and imaginary part in  (\ref{part5_1}) has a negative sign and the global sign coming from  Table 6, $GL$ is positive. Note from equation (\ref{3unifGL}) that  three uniform gate moving does not introduce any global signs. And if $m=4$ mod 8,  real and imaginary part in  (\ref{part5_1}) has a positive sign and the global sign coming from  Table 6, $GL$ is negative.Therefore, 
\begin{equation}
\langle G_5^+ \rangle =-\frac{2^{\frac{m}{2}-1}}{2^{M/2}}.
\end{equation}

\textbf{Case \# 2:} If  $m=2$ mod  4:\\
Then from equations (\ref{appendixAeq8}) and  (\ref{part2ii}), we can directly write down that
 \begin{equation}
 \langle G_5^-\rangle=\pm\frac{1}{2^{M}}Tr\bigg[\prod_{\forall \circledast, \forall\bigtriangleup } C_{\circledast\triangle\triangle}C_{\circledast\triangle}C_{\circledast\circledast\circledast}C_{\triangle\triangle}\bigg].
 \end{equation}
 We check the sign of each term on the diagonal by $(-1)^s$, where $s=\binom{\alpha}{1}\binom{\beta}{2}+\binom{\alpha}{3}+\binom{\beta}{2}+\binom{\alpha}{1}\binom{\beta}{1}$. For this we need to consider each value of $\alpha$ and $\beta$ separately.\\

\begin{tabular}{lllll}
\multirow{2}{*}{\textbf{1. }} & \multirow{2}{*}{$\alpha$ is even \& $\beta$ is even} & $\beta=0$ mod 4 $\Rightarrow$ $+1$ &  & \multirow{5}{*}{}\tabularnewline
 &  & $\beta=2$ mod 4 $\Rightarrow$ $-1$ &  & \tabularnewline
 &  &  &  & \tabularnewline
\multirow{2}{*}{\textbf{2.}} & \multirow{2}{*}{$\alpha$ is even \& $\beta$ is odd} & $\beta-1=0$ mod 4 $\Rightarrow$ $+1$ &  & \tabularnewline
 &  & $\beta-1=2$ mod 4 $\Rightarrow$ $-1$ &  & \tabularnewline
\end{tabular}\\
$\quad$\\

\begin{tabular}{lllll}
\multirow{2}{*}{\textbf{3. }} & \multirow{2}{*}{$\alpha$ is odd \& $\beta$ is even} &  & $\alpha-1=0$ mod 4 $\Rightarrow$ $+1$ & \multirow{4}{*}{$\Rightarrow$ These two give zero contribution together.}\tabularnewline
 &  &  & $\alpha-1=2$ mod 4 $\Rightarrow$ $-1$ & \tabularnewline
\multirow{2}{*}{\textbf{4.}} & \multirow{2}{*}{$\alpha$ is odd \& $\beta$ is odd} &  & $\alpha-1=0$ mod 4 $\Rightarrow$ $-1$ & \tabularnewline
 &  &  & $\alpha-1=2$ mod 4 $\Rightarrow$ $+1$ & \tabularnewline
\end{tabular} 

\begin{align}\label{part5_2}
\begin{split}
\langle G_{5}^-\rangle & = \pm\frac{1}{2^{M}}\sum_{\alpha\ even}^{m}\binom{m}{\alpha}\bigg[\sum_{\beta=0,4}^{M-m}\binom{M-m}{\beta}+\binom{M-m}{\beta+1}-\binom{M-m}{\beta+2}-\binom{M-m}{\beta+3}\bigg]\\
& = \pm\frac{1}{2^{M}} 2^{m-1}\bigg[Re(1+i)^{M-m}+Im(1+i)^{M-m}\bigg]=\pm\frac{2^{\frac{m}{2}-1}}{2^{M/2}}.
\end{split}
\end{align}

If $m-2=0$ mod 8,  real and imaginary part in  (\ref{part5_2}) has a positive sign and the global sign coming from  Table 6, $GL$ is positive. Note from equation (\ref{3unifGL}) that the three uniform gate moving does not introduce any global signs. And if $m-2=4$ mod 8,  real and imaginary part in  (\ref{part5_2}) has a negative sign and the global sign coming from  Table 6, $GL$ is negative.Therefore, 

\begin{equation}
\langle G_5^- \rangle =+\frac{2^{\frac{m}{2}-1}}{2^{M/2}}.
\end{equation}
Finally, we can put everything together. Since one can observe that $\langle G_5^- \rangle=-\langle G_5^+ \rangle$,
\begin{equation}
\ket{0}\bra{0}\langle G_5^+ \rangle-\ket{1}\bra{1}\langle G_5^- \rangle=\ket{0}\bra{0}\langle G_5^+ \rangle+\ket{1}\bra{1}\langle G_5^+\rangle =\mathbbm{1}\langle G_5^+\rangle= \langle \underset{m}{\underbrace{X\dots X}}\underset{N-m-1}{\underbrace{Z\dots Z}}\mathbbm1\rangle.
\end{equation}
This completes the proof of part 4 and entire lemma.
 
\end{proof}

\section{Appendix E: 
Bell inequality violations for
fully-connected four-uniform hypergraph states}

\begin{lemma}\label{theorem4unifviol}
An arbitrary four-uniform HG state violates the classical bound  by an amount that grows exponentially along with number of qubits.
\end{lemma}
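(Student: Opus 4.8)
Read literally, ``arbitrary'' cannot mean every four-uniform hypergraph: a hypergraph carrying only a single hyperedge $\{1,2,3,4\}$ on $N$ vertices entangles just four parties, so its violation of local realism is bounded by a constant. The statement must therefore be read for the \emph{fully-connected} four-uniform hypergraph state of Observation~6, with ``arbitrary'' referring to the number of qubits $N$; this is also the only situation covered by Lemma~\ref{lemma4full}, whose correlator formulas are derived assuming all four-edges are present. The content of ``arbitrary'' is then exactly that the exponential violation must hold uniformly in $N$, i.e.\ across every residue class $N \bmod 8$ appearing in Lemma~\ref{lemma4full}. The plan is to feed those correlators into a Mermin-type full-correlation Bell operator and to evaluate the quantum value and the classical bound in each residue class.

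First I would take $\BB_N$ to be the Mermin-like operator of Eq.~(\ref{bell2}), i.e.\ the full-correlation operator that collects the terms with an even number of one setting and alternating signs, $\BB_N \propto \mathrm{Re}\big[(A+iB)^{\otimes N}\big]$, with $A=Z$ and $B=X$ (interchanging $A \leftrightarrow B$ and flipping the overall sign in the residue classes where this is needed). Since the fully-connected hypergraph state is permutation symmetric, every term of $\BB_N$ with a fixed number $m$ of $X$ factors evaluates to the single correlator $\langle \underbrace{X\cdots X}_{m}Z\cdots Z\rangle$ with multiplicity $\binom{N}{m}$, so the quantum value collapses to a binomial sum $\langle\BB_N\rangle_Q = \sum_{m}(\pm)\binom{N}{m}\,\langle X^{m}Z^{N-m}\rangle$ over the relevant parity of $m$.

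Next I would resum this series using the correlators of Lemma~\ref{lemma4full}. They are of the two types $2^{-\lceil m/2\rceil}$ and $2^{\lfloor m/2\rfloor-\lfloor N/2\rfloor}$ (one or both occurring, depending on the class), and both resum at the \emph{same} exponential rate because $(1+\sqrt{2})/\sqrt{2}=1+1/\sqrt{2}$. Hence $\langle\BB_N\rangle_Q$ is controlled by $\sum_m \binom{N}{m}(1/\sqrt{2})^{m}$, which the real/imaginary-part identities~(\ref{identity1},\ref{identity2}) and Table~0 pin down to the closed form $(1+1/\sqrt{2})^{N-1}/\sqrt{2}^{\,N+3}$ times a constant. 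For the classical side I would invoke the Mermin bound $\langle\BB_N\rangle_C \le 2^{\lfloor N/2\rfloor}$: for odd $N$ the operator is the Mermin inequality, and for even $N$ the factorization $\BB_N = A\cdot\BB^{M}_{N-1} + B\cdot\tilde\BB^{M}_{N-1}$ used in Appendix~C reduces it to the odd case. Dividing then gives the asymptotic ratio $\sim 1.20711^{N}/(2\sqrt{2}+2)$, which diverges.

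The main obstacle is the case analysis forced by the $N \bmod 8$ dependence of Lemma~\ref{lemma4full}. In each residue class one must choose the assignment of $A,B$ and an overall sign so that the alternating Mermin signs $(-1)^{m/2}$ line up with the signs of $\langle X^{m}Z^{N-m}\rangle$ (which themselves alternate with $m$); only then does the binomial sum add \emph{constructively} and reach the $(1+1/\sqrt{2})^{N}$ rate, rather than suffering cancellations that would leave a subexponential value. Verifying this constructive-interference condition in all eight residue classes, together with the mild check that the classical bound survives the residue-dependent sign choices, is the crux; once it is in place the exponential separation holds uniformly in $N$, proving the lemma.
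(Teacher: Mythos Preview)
Your approach is essentially the paper's, and your diagnosis of the crux---constructive alignment of the alternating Mermin signs with the $m$-dependent signs of Lemma~\ref{lemma4full}---is correct. Two points of comparison are worth noting.

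First, the paper does not carry out the full eight-way case analysis you propose. Instead it observes that among all residue classes $N\bmod 8$ the weakest correlators are those of $N=8k+3$ (Lemma~\ref{lemma4full}, Part~4), computes the quantum value explicitly only for that worst case, and then remarks that every other class gives at least as large a violation. This reduces the work from eight resummations to one plus a monotonicity check on the correlator magnitudes.

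Second---and this is a detail your proposal does not anticipate---Lemma~\ref{lemma4full} supplies no direct correlators $\langle X^m Z^{N-m}\rangle$ for $N\equiv 3\pmod 8$ at all; Part~4 instead gives $\langle X^m Z^{N-m-1}\mathbbm{1}\rangle$, with identity on one party. Accordingly, in this residue class the Bell operator is taken on $M=N-1$ qubits rather than $N$, and this is not reached by merely swapping $A\leftrightarrow B$ or flipping a global sign. With that adjustment the resummation runs exactly as you describe, $\langle\BB_N\rangle_Q \geq \sum_{m\ \mathrm{even}}^{M}\binom{M}{m}(1/\sqrt{2})^{M-m+2}\sim \tfrac14(1+1/\sqrt{2})^{N-1}$, and against the classical bound $2^{(N-1)/2}$ this yields the ratio $\sim 1.20711^{N}/(2\sqrt{2}+2)$.
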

\begin{proof}
 To show this,  either  in Eq.~(\ref{bell2}) or in the original Mermin operator, $\langle \BB_N^M \rangle $,  we fix $A=Z$ and $B=X$. The choice of the Bell
 operator depends on the number of qubits: From Lemma~(\ref{lemma4full}) 
 for a given $N$ either the correlations for an even $m$ or an odd $m$ are
 given. If $m$ is even,  we choose Eq.~(\ref{bell2})  and 
 $\langle \BB_N^M \rangle $, otherwise.

 From Lemma \ref{lemma4full}, it is clear that we need to consider separate cases. However, we choose the one which encounters the smallest growth in violation and this is the $N=8k+3$ case. Other cases just encounter different factors in the growth or are greater.  For the $N=8k+3$, the strategy consists of measuring
 the Bell operator from Eq.~(\ref{bell2}) on $M=N-1$ qubits. Then we have: 
\begin{align}
\begin{split}
\langle \BB_N \rangle _Q & \geq\sum_{m=2,4\dots}^{M}\binom{M}{m}\Big(\frac{1}{\sqrt{2}}\Big)^{M-m+2} =  \frac{1}{2}\bigg[\sum_{m\ even}^{M}  \binom{M}{m} \Big(\frac{1}{\sqrt{2}}\Big)^{M-m}\bigg]-\Big(\frac{1}{\sqrt{2}}\Big)^{M+2}\\
& =\frac{1}{4}\bigg[ \Big(1+\frac{1}{\sqrt{2}}  \Big)^M  -\Big(1-\frac{1}{\sqrt{2}} \Big)^M\bigg]-\Big(\frac{1}{\sqrt{2}}\Big)^{M+2}.
\end{split}
\end{align}
Checking the ratio of the quantum  and classical values, we have that
\begin{align}
\begin{split}
\frac{\langle \BB_N \rangle _Q}{\langle \BB_N \rangle _C}
\stackrel{N\rightarrow \infty}{\sim}
\frac{\frac{1}{4}\Big(1+\frac{1}{\sqrt{2}}  \Big)^{N-1} }{2^{\frac{N-1}{2}}}
=  \frac{\Big(1+\frac{1}{\sqrt{2}}\Big)^{N-1} }{\sqrt{2}^{N+3}}
\approx  \frac{1.20711^{N}}{2\sqrt{2} +2}
.
\end{split}
\end{align}
Looking at the expectation values from Lemma \ref{lemma4full},  it is straightforward to see that in all other cases of $N$, correlations are 
stronger than in the $N=8k+3$ case, so the quantum violation increases.
\end{proof}

\section{Appendix F: Bell and separability inequality violations for
fully-connected four-uniform hypergraph states after loosing one qubit}
\begin{lemma}
The following statement holds for $N=8k+4$ qubit, four-uniform complete hypergraph states:
\begin{equation}\label{reducedlemma}
\langle G_6 \rangle = \langle \underset{m}{\underbrace{X\dots X}}\underset{N-m-1}{\underbrace{Z\dots Z}}\mathbbm1\rangle=\begin{cases}
\begin{array}[t]{cc}
-\Big(\frac{1}{\sqrt{2}}\Big)^{N-m+2} & \mbox{if \ensuremath{m=0\  mod\  4 }},\\
+\Big(\frac{1}{\sqrt{2}}\Big)^{N-m+2} & \mbox{if \ensuremath{m=2 \ mod\  4 }}.
\end{array}\end{cases}
\end{equation}
\end{lemma}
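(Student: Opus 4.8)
The plan is to read the trailing $\mathbbm{1}$ as a discarded (traced-out) qubit and to evaluate $\langle G_6\rangle=\bra{H_4^N}X^{\otimes m}Z^{\otimes (N-m-1)}\mathbbm{1}_N\ket{H_4^N}$ by resolving the identity on the last qubit as $\mathbbm{1}_N=\ket{0}\bra{0}_N+\ket{1}\bra{1}_N$. This splits the correlator into two pieces evaluated on the unnormalized states $\ket{\psi_\pm}$ obtained by contracting $\ket{H_4^N}$ with $\bra{0}_N$ and $\bra{1}_N$. As in Part~4 of Lemma~\ref{lemma4full}, the $\bra{0}_N$ contraction leaves the four-uniform complete hypergraph state on $M:=N-1=8k+3$ qubits, whereas the $\bra{1}_N$ contraction collapses every four-edge through qubit $N$ to its three-edge remainder and leaves the complete three- and four-uniform hypergraph state on the same $M$ qubits. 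Hence $\langle G_6\rangle=\langle\mathcal{O}\rangle_{+}+\langle\mathcal{O}\rangle_{-}$ with $\mathcal{O}=X^{\otimes m}Z^{\otimes (M-m)}$ and even $m$.

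Each of the two terms I would compute with the machinery of Appendix~B and Lemma~\ref{lemma4full}: push all phase gates to one side with the commutation identity~(\ref{identity}), drop every gate of even multiplicity since $C_e^2=\mathbbm{1}$, and rewrite the surviving diagonal operator as $\tfrac{1}{2^M}\operatorname{Tr}[\cdots]$. The trace is then evaluated by classifying the computational-basis rows through $\alpha$ (the number of $1$'s on the $m$ sites measured in $X$) and $\beta$ (the number of $1$'s on the sites measured in $Z$), reading off the row sign $(-1)^s$, and collapsing the resulting alternating binomial sums with the identities~(\ref{identity1}),~(\ref{identity2}) and the look-up Table~0. The four-uniform term is the gate count of Lemma~\ref{lemma4full} specialized to $N=8k+3$ and even $m$; the three- and four-uniform term is the analogue of the $\langle G_5^-\rangle$ calculation in Part~4, where the full family of three-edges additionally contributes gates such as $C_{\circledast\bigtriangleup\bigtriangleup}$ and $C_{\circledast\bigtriangleup}$ counted exactly as in Appendix~B. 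In both cases the sums collapse to the single power $(1/\sqrt2)^{N-m+2}$.

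The last step is to fix the overall sign of each piece from its global factor $C_{\{\}}$ (the $\binom{\cdot}{4}$ and $\binom{\cdot}{3}$ multiplicities) together with the sign of the relevant $\mathrm{Re}/\mathrm{Im}\,[(1+i)^{\cdot}]$ in Table~0, and to add the two contributions. I expect the four-uniform and three- and four-uniform pieces to carry equal magnitude and to reinforce, so that the residue of $m$ modulo $4$ selects the stated sign, giving $-(1/\sqrt2)^{N-m+2}$ for $m=0\bmod 4$ and $+(1/\sqrt2)^{N-m+2}$ for $m=2\bmod 4$. The main obstacle is exactly this sign bookkeeping: one must check, for $N=8k+4$ with $m=0\bmod 4$ and $m=2\bmod 4$ separately, that the global signs coming from the three- and four-edge multiplicities combine with the table entries so that the two projected-state correlators add constructively rather than cancel. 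No new idea beyond the established trace technique is needed for the magnitudes; getting every sign consistent across the cases is the delicate point.
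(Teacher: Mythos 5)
Your strategy is sound and it is genuinely different from the paper's proof of this lemma: the paper works directly on the $N$-qubit state, introducing a third site label $\diamondsuit$ (with occupation number $\gamma\in\{0,1\}$) for the qubit carrying $\mathbbm{1}$, carries the extra gates $C_{\circledast\bigtriangleup\diamondsuit}$, $C_{\circledast\diamondsuit}$, $C_{\bigtriangleup\diamondsuit}$ through the same trace computation, and sums over $\gamma$ at the end. Your route instead resolves $\mathbbm{1}_N=\ket{0}\bra{0}_N+\ket{1}\bra{1}_N$ and reduces everything to two $X/Z$ correlators on $(N-1)$-qubit states; this is exactly the device the paper itself uses in Part~4 of Lemma~\ref{lemma4full} for $N=8k+3$, so it is a legitimate alternative that trades one three-index sign table for two two-index ones. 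Note, however, that neither correlator you need is literally tabulated in Lemma~\ref{lemma4full} (its Part~3 covers $N\equiv 2,4 \bmod 8$, not $M=N-1=8k+3$), so both must be recomputed from the Table~6 gate counts, as you acknowledge.

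The one concrete error is your expectation that the two pieces ``carry equal magnitude and reinforce.'' They do not: for $M=8k+3$ and even $m$ the quantity $M-m$ is odd, and the combination $\mathrm{Re}\big[(1+i)^{M-m}\big]\pm\mathrm{Im}\big[(1+i)^{M-m}\big]$ that controls the purely four-uniform correlator (see the formulas in Part~3(ii) of the proof of Lemma~\ref{lemma4full}) vanishes for every residue of $M-m$ modulo $8$ that actually occurs, by the look-up Table~0. Hence the $\ket{0}$-projected (four-uniform) piece is identically zero, and the entire answer $\pm(1/\sqrt{2})^{N-m+2}$ comes from the $\ket{1}$-projected three- and four-uniform piece alone, weighted by the projection probability $1/2$. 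A minimal check: for $N=4$, $m=2$ one finds $\langle XXZ\rangle=0$ on $\braket{0_4}{H_4^4}\propto\ket{+}^{\otimes 3}$ and $\langle XXZ\rangle=1/2$ on $\braket{1_4}{H_4^4}\propto C_{123}\ket{+}^{\otimes 3}$, giving $0+\tfrac{1}{2}\cdot\tfrac{1}{2}=\tfrac{1}{4}=(1/\sqrt{2})^{N-m+2}$ as claimed. Your plan still yields the right result once this is worked out, but the advertised cancellation structure is not the one that occurs, and the sign bookkeeping you rightly flag as the delicate point is concentrated entirely in the mixed three- and four-uniform term.
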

\begin{proof}
The derivation of this result is very similar to the combinatorial calculations in the appendices D and E. Since $m$ is even, we refer to the Table 6 to see what gates remain after regrouping hyperedges on the right hand side of the expression (\ref{reducedlemma}):

\textbf{Case \# 1:} If  $m=0$ mod  4:\\
\begin{align}\label{part4_1}
\begin{split}
\langle G_6 \rangle  & =\pm\bra{+}^{\otimes N}   \underset{m}{\underbrace{X\dots X}}\underset{N-m-1}{\underbrace{Z\dots Z}}\mathbbm1 \prod_{\forall \circledast,\bigtriangleup,\diamondsuit } C_{\circledast \bigtriangleup \bigtriangleup}C_{\circledast\bigtriangleup} C_{\circledast \bigtriangleup \diamondsuit}C_{\circledast\diamondsuit}  C_{\circledast \circledast \circledast } C_{\circledast \circledast }C_{\circledast} \ket{+}^{\otimes N}\\
& =\pm \frac{1}{2^N} Tr\Big[\prod_{\forall \circledast,\bigtriangleup,\diamondsuit }  C_{\circledast \bigtriangleup \bigtriangleup}C_{\circledast\bigtriangleup} C_{\circledast \bigtriangleup \diamondsuit}C_{\circledast\diamondsuit}  C_{\circledast \circledast \circledast }C_{\circledast \circledast }C_{\circledast}C_{\bigtriangleup}  \Big].
\end{split}
\end{align}

Here $\circledast$ again refers to X operator, $\bigtriangleup$ to Z and $\diamondsuit$ to  $\mathbbm1$ and  it is denoted as $\gamma$. The strategy now is similar to the previous proofs: count the number  of +1's and -1's on the diagonal and then their difference divided by $2^N$ gives the trace.

We use $(-1)^s$ to define the sign of the diagonal element and $s=\binom{\alpha}{1}\binom{\beta}{2}+\binom{\alpha}{1}\binom{\beta}{1}+\binom{\alpha}{1}\binom{\beta}{1}\binom{\gamma}{1}+\binom{\alpha}{1}\binom{\gamma}{1}+\binom{\alpha}{3}+\binom{\alpha}{2}+\binom{\alpha}{1}+\binom{\beta}{1}$.  If $s$ is even, the value on the diagonal is $+1$ and $-1$, otherwise. We consider all possible values of $\alpha$,  $\beta$, and $\gamma$:\\

a) If $\gamma$ is even (that is $\gamma=0$):\\

\begin{tabular}{lllc}
\textbf{1. } & $\alpha$ is even \& $\beta$ is even & \multicolumn{2}{l}{if $\alpha=0$ mod 4 $\Rightarrow$ $(-1)^{s}=+1$}\tabularnewline
 &  & \multicolumn{2}{l}{if $\alpha=2$ mod 4 $\Rightarrow$ $(-1)^{s}=-1$}\tabularnewline
 &  &  & \tabularnewline
\textbf{2. } & $\alpha$ is even \& $\beta$ is odd:  & \multicolumn{2}{l}{if $\alpha=0$ mod 4 $\Rightarrow$ $(-1)^{s}=-1$}\tabularnewline
 &  & \multicolumn{2}{l}{if $\alpha=2$ mod 4 $\Rightarrow$ $(-1)^{s}=+1$}\tabularnewline
\end{tabular}\\

From here one can easily spot that for even $\alpha$, there is equal number of $+1$ and $-	1$ on the diagonal. So, they do not contribute in the calculations. We now consider the odd $\alpha$:\\

\begin{tabular}{lllc}
\textbf{3. } & $\alpha$ is odd \& $\beta$ is even & \multicolumn{2}{l}{if $\beta=0$ mod 4 $\Rightarrow$ $(-1)^{s}=-1$}\tabularnewline
 &  & \multicolumn{2}{l}{if $\beta=2$ mod 4 $\Rightarrow$ $(-1)^{s}=+1$}\tabularnewline
 &  &  & \tabularnewline
\textbf{4. } & $\alpha$ is odd \& $\beta$ is odd:  & \multicolumn{2}{l}{if $(\beta-1)=0$ mod 4 $\Rightarrow$ $(-1)^{s}=-1$}\tabularnewline
 &  & \multicolumn{2}{l}{if $(\beta-1)=2$ mod 4 $\Rightarrow$ $(-1)^{s}=+1$}\tabularnewline
\end{tabular}\\

b) If $\gamma$ is odd (that is $\gamma=1$):\\

The cases \textbf{1} and  \textbf{2} don't change. Therefore, they sum up to 0. \\

\begin{tabular}{lllc}
\textbf{3. } & $\alpha$ is odd \& $\beta$ is even & \multicolumn{2}{l}{if $\beta=0$ mod 4 $\Rightarrow$ $(-1)^{s}=+1$}\tabularnewline
 &  & \multicolumn{2}{l}{if $\beta=2$ mod 4 $\Rightarrow$ $(-1)^{s}=-1$}\tabularnewline
 &  &  & \tabularnewline
\end{tabular}\\

\textbf{4. } Stays the same as in the previous case.

Therefore the result is:
\begin{align}\label{F_1}
\begin{split}
\langle G_{6}\rangle & = \pm\frac{1}{2^{N}}\Big(\sum_{\gamma\; even}\binom{\gamma}{0}\Bigg[\sum_{\alpha\ odd}^{m}\binom{m}{\alpha}\bigg[\sum_{\beta=0,4\dots}^{N-m-1}-\binom{N-m-1}{\beta}-\binom{N-m-1}{\beta+1}+\binom{N-m-1}{\beta+2}+\binom{N-m-1}{\beta+3}\bigg]\Bigg]\\
&+\sum_{\gamma\; odd}\binom{\gamma}{1}\Bigg[\sum_{\alpha\ odd}^{m}\binom{m}{\alpha}\bigg[\sum_{\beta=0,4\dots}^{N-m-1}\binom{N-m-1}{\beta}-\binom{N-m-1}{\beta+1}-\binom{N-m-1}{\beta+2}+\binom{N-m-1}{\beta+3}\bigg]\Bigg]\Bigg)\\
&=\pm \frac{2^m}{2^N}\cdot (-Im[(1+i)^{N-m-1}])= \mp \bigg(\frac{1}{\sqrt{2}}\bigg)^{N-m+2}
\end{split}
\end{align}

It is time to fix a sign. One needs to keep in mind that the sign of  the Eq. (\ref{F_1}) is negative: if $m=4$ mod $8$, the global sign from the Table 6 is negative, and  $Im[(1+i)^{N-m-1}]=-2^\frac{N-m-2}{2}$. Therefore, an overall sign in negative. If $m=0$ mod $8$,  global  sign is positive and  $Im[(1+i)^{N-m-1}]=2^\frac{N-m-2}{2}$. Therefore, an overall sign in negative. 

\pagebreak
\textbf{Case \# 2:} If  $m=2$ mod  4:\\
\begin{align}\label{part2ii}
\begin{split}
\langle G_6\rangle & =\pm\bra{+}^{\otimes N}  \underset{m}{\underbrace{X\dots X}}\underset{N-m-1}{\underbrace{Z\dots Z}}\mathbbm1 \prod_{\forall \circledast,\bigtriangleup,\diamondsuit } C_{\circledast \bigtriangleup \bigtriangleup} C_{\circledast \circledast \circledast } C_{\bigtriangleup \bigtriangleup } C_{\circledast \bigtriangleup \diamondsuit}  C_{\bigtriangleup \diamondsuit }\ket{+}^{\otimes N}\\
& =\pm \frac{1}{2^N} Tr( \prod_{\forall \circledast,\bigtriangleup,\diamondsuit } C_{\circledast \bigtriangleup \bigtriangleup} C_{\circledast \circledast \circledast } C_{\bigtriangleup \bigtriangleup } C_{\circledast \bigtriangleup \diamondsuit}  C_{\bigtriangleup \diamondsuit } C_{ \bigtriangleup }   ).
\end{split}
\end{align}\\

We use $(-1)^s$ to define the sign of the diagonal element and $s=\binom{\alpha}{1}\binom{\beta}{2}+\binom{\alpha}{1}\binom{\beta}{1}\binom{\gamma}{1}+\binom{\alpha}{3}+\binom{\beta}{2}+\binom{\beta}{1}\binom{\gamma}{1}+\binom{\beta}{1}$.  If $s$ is even, the value on the diagonal is $+1$ and $-1$, otherwise. We consider all possible values of $\alpha$, $\beta$, and $\gamma$:\\

a) If $\gamma$ is even  (that is $\gamma=0$):

Considering the terms from even $\alpha$:\\

\begin{tabular}{lllc}
\textbf{1. } & $\alpha$ is even \& $\beta$ is even & \multicolumn{2}{l}{if $\beta=0$ mod 4 $\Rightarrow$ $(-1)^{s}=+1$}\tabularnewline
 &  & \multicolumn{2}{l}{if $\beta=2$ mod 4 $\Rightarrow$ $(-1)^{s}=-1$}\tabularnewline
 &  &  & \tabularnewline
\textbf{2. } & $\alpha$ is even \& $\beta$ is odd:  & \multicolumn{2}{l}{if $(\beta-1)=0$ mod 4 $\Rightarrow$ $(-1)^{s}=-1$}\tabularnewline
 &  & \multicolumn{2}{l}{if $(\beta-1)=2$ mod 4 $\Rightarrow$ $(-1)^{s}=+1$}\tabularnewline
\end{tabular}

Considering odd $\alpha$:

\begin{tabular}{lllc}
\textbf{3. } & $\alpha$ is odd \& $\beta$ is even & \multicolumn{2}{l}{if $(\alpha-1)=0$ mod 4 $\Rightarrow$ $(-1)^{s}=+1$}\tabularnewline
 &  & \multicolumn{2}{l}{if $(\alpha-1)=2$ mod 4 $\Rightarrow$ $(-1)^{s}=-1$}\tabularnewline
 &  &  & \tabularnewline
\textbf{4. } & $\alpha$ is odd \& $\beta$ is odd:  & \multicolumn{2}{l}{if $(\alpha-1)=0$ mod 4 $\Rightarrow$ $(-1)^{s}=-1$}\tabularnewline
 &  & \multicolumn{2}{l}{if $(\alpha-1)=2$ mod 4 $\Rightarrow$ $(-1)^{s}=+1$}\tabularnewline
\end{tabular}\\
It is easy to see that cases \textbf{3} and \textbf{4} adds up to $0$. \\

b) If $\gamma$ is odd  (that is $\gamma=1$):\\

\textbf{1.}  Stays the same as in the previous case. \\

\textbf{2.} Gets an opposite sign, therefore, will cancel out with the a) case \textbf{2} in the sum.  \\

\textbf{3. } Stays the same as in the previous case.\\

\textbf{4. } Stays the same as in the previous case.

Therefore the result is:
\begin{align}\label{part5_2}
\begin{split}
\langle G_{6}\rangle & = \pm\frac{1}{2^{N}} 2^m\bigg[\sum_{\beta=0,4..}^{N-m-1}\binom{N-m-1}{\beta}-\binom{N-m-1}{\beta+2}\bigg]=\pm\frac{2^m}{2^N}Re[(1+i)^{N-m-1}].
\end{split}
\end{align}

It is time to fix a sign: if $m=2$ mod $8$, the global sign is positive from the Table 6, and  $Re[(1+i)^{N-m-1}]=2^\frac{N-m-2}{2}$. Therefore, an overall sign is positive. If $m=6$ mod $8$, overall sign in negative and  $Re[(1+i)^{N-m-1}]=-2^\frac{N-m-2}{2}$. Therefore, an overall sign in positive.
\end{proof} 

\begin{lemma}\label{theorem4unifviol}
A $N$-qubit ($N=8k+4$) four-uniform HG state even after tracing out one party violates the classical bound  by an amount that grows exponentially  with number of qubits. Moreover, the violation only decreases with the constant factor. 
\end{lemma}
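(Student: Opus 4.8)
The plan is to feed the reduced-state correlations established in the preceding lemma into a Mermin-type Bell operator on the $M=N-1$ surviving qubits. The crucial point is that tracing out one party and subsequently measuring $X$ or $Z$ on the remaining qubits yields exactly the expectation value $\langle G_6\rangle=\langle\underset{m}{\underbrace{X\dots X}}\underset{N-m-1}{\underbrace{Z\dots Z}}\mathbbm{1}\rangle$ of the full state with the identity sitting on the traced-out qubit. Hence the entire full-correlation table of the reduced state for $X/Z$ settings is precisely the one computed above, namely $\mp(1/\sqrt2)^{N-m+2}$ for $m\equiv 0,2 \pmod 4$.

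First I would fix the Bell operator to be $\BB_M$ from Eq.~(\ref{bell2}) with $A=Z$, $B=X$ and $M=N-1$. Since $M=8k+3$ is odd, $\BB_M$ coincides with the ordinary Mermin operator, whose local-hidden-variable bound $\langle\BB_M\rangle_C\le 2^{\lfloor M/2\rfloor}=2^{(N-2)/2}$ I would simply quote as in Appendix C. The nontrivial input is the sign structure: $\BB_M$ attaches to the term carrying $m$ factors of $B$ the coefficient $(-1)^{m/2+1}$, while the reduced-state correlation $\langle G_6\rangle$ carries exactly the same sign $(-1)^{m/2+1}$ for every even $m$. The two sign patterns therefore lock together, all contributions add constructively, and the quantum value reduces to the single positive binomial sum $\langle\BB_M\rangle_Q=\sum_{m\ \mathrm{even}}\binom{M}{m}(1/\sqrt2)^{N-m+2}$.

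Next I would evaluate this sum with the binomial identity already used in Appendix E: writing $x=1/\sqrt2$ and using $\sum_{m\ \mathrm{even}}\binom{M}{m}x^{M-m}=\tfrac12[(1+x)^M+(x-1)^M]$, the dominant term gives $\langle\BB_M\rangle_Q\sim\tfrac12(1/\sqrt2)^3(1+1/\sqrt2)^{N-1}$. Using $1+1/\sqrt2=(\sqrt2+1)/\sqrt2$ this is $\sim(\sqrt2+1)^{N-1}/2^{(N+4)/2}$, so dividing by the classical bound $2^{(N-2)/2}$ produces a ratio scaling like $[(\sqrt2+1)/2]^{N}$ up to a constant prefactor. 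Since $(\sqrt2+1)/2\approx1.207>1$, the violation of the reduced state grows exponentially in $N$, which is the first assertion.

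For the constant-factor statement I would compare $\langle\BB_{N-1}\rangle_Q$ with the quantum value $\langle\BB_N\rangle_Q$ of the full $N=8k+4$ state, the latter obtained from the even-$m$ correlations $\pm 2^{m/2-1}/2^{N/2}$ of Lemma~\ref{lemma4full} by the same binomial manipulation, giving $\langle\BB_N\rangle_Q\sim(\sqrt2+1)^N/2^{(N+4)/2}$. The two leading exponentials carry the identical prefactor, so the ratio collapses to $\langle\BB_{N-1}\rangle_Q/\langle\BB_N\rangle_Q\to 1/(\sqrt2+1)$, exactly the constant decrease claimed in Observation~7. The main obstacle I anticipate is the sign bookkeeping: one has to verify carefully that the alternating Mermin signs really do match the $m\bmod 4$ sign pattern of $\langle G_6\rangle$ so that no cancellation spoils the constructive sum, and to check that the boundary terms — the negligible $m=0$ contribution and the absence of an $m=M$ term, since $M$ is odd — do not affect the leading asymptotics.
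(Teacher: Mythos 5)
Your proposal is correct and follows essentially the same route as the paper: plug the reduced-state correlations $\langle X^{\otimes m}Z^{\otimes N-m-1}\mathbbm{1}\rangle=\pm(1/\sqrt{2})^{N-m+2}$ into the even-$B$ Mermin-type operator $\BB_{N-1}$, evaluate the resulting binomial sum as $\tfrac{1}{4\sqrt{2}}[(1+1/\sqrt{2})^{N-1}-(1-1/\sqrt{2})^{N-1}]$, divide by the classical bound $2^{(N-2)/2}$ to get the $\sim 1.2071^{N}$ scaling, and compare with the full-state value to extract the constant factor $1/(\sqrt{2}+1)$. Your explicit check that the $(-1)^{m/2+1}$ Mermin signs lock with the $m\bmod 4$ sign pattern of the correlations is a detail the paper leaves implicit, but the argument is the same.
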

\begin{proof}

Denote $M\equiv N-1$. Then
\begin{align}
\begin{split}
\langle \BB_N \rangle _Q & =\sum_{m=2,4\dots}^{M}\binom{M}{m}\Big(\frac{1}{\sqrt{2}}\Big)^{M-m+3}=  \frac{1}{2\sqrt{2}}\bigg[\sum_{m\ even}^{M}  \binom{M}{m} \Big(\frac{1}{\sqrt{2}}\Big)^{M-m}\bigg]-\Big(\frac{1}{\sqrt{2}}\Big)^{M+3}\\
& =\frac{1}{4\sqrt{2}}\bigg[ \Big(1+\frac{1}{\sqrt{2}}  \Big)^M  -\Big(1-\frac{1}{\sqrt{2}} \Big)^M\bigg]-\Big(\frac{1}{\sqrt{2}}\Big)^{M+3}
\end{split}
\end{align}

Checking the ratio of the quantum  and classical values, we have that
\begin{align}
\begin{split}
\frac{\langle \BB_{N-1} \rangle _Q}{\langle \BB_{N-1} \rangle _C}
\stackrel{N\rightarrow \infty}{\sim}
\frac{\frac{1}{4\sqrt{2}}\Big(1+\frac{1}{\sqrt{2}}  \Big)^{N-1} }{2^{\frac{N-2}{2}}}
=  \frac{\Big(1+\frac{1}{\sqrt{2}}\Big)^{N-1} }{\sqrt{2}^{N+3}}
\approx  \frac{1.20711^{N}}{2\sqrt{2} +2}
.
\end{split}
\end{align}
While the same value for the $N=8k+4$ qubit four-uniform complete HG state is $\frac{\langle \BB_{N} \rangle _Q}{\langle \BB_{N} \rangle _C}
\stackrel{N\rightarrow \infty}{\sim}\approx\frac{1.20711^{N}}{4}.$ Therefore, after tracing out a single qubit, the local realism violation decreases with the small constant factor.
\end{proof}

It is important to note that the similar violation is maintained after tracing out more than one qubit. For example, numerical evidence confirms for $N=12$, that if one takes a Bell inequality with the odd number of $X$ measurements, instead of the even ones as we have chosen in the proof, an exponential violation is maintained after tracing out 2 qubits. But even if 5 qubit is traced out, the state is stilled entangled and this can be verified using the separability inequality \cite{Roy}.
Exact violation is given in the following table:  

\begin{center}
\begin{tabular}{|c|c|c|c|c|}

\hline 
\#k & Quantum Value  & Classical Bound & Separability Bound & $\approx$Ratio\tabularnewline
\hline 
\hline 
\textbf{0} & \textbf{153.141} & \textbf{64} &  & \textbf{2.39283}\tabularnewline
\hline 
\textcolor{red}{$1$} & \textcolor{red}{89.7188} & \textcolor{red}{32} &  & \textcolor{red}{2.78125}\tabularnewline
\hline 
$2$ & 37.1563 & 32 &  & 1.16113\tabularnewline
\hline 
\multicolumn{1}{c}{} & \multicolumn{1}{c}{} & \multicolumn{1}{c}{} & \multicolumn{1}{c}{} & \multicolumn{1}{c}{}\tabularnewline
\hline 
$3$ & 15.4219 & - & $\sqrt{2}$ & 10.9049\tabularnewline
\hline 
$4$ & 6.375 & - & $\sqrt{2}$ & 4.50781\tabularnewline
\hline 
$5$ & 2.70313 & - & $\sqrt{2}$ & 1.9114\tabularnewline
\hline 
\end{tabular}\\
$\quad$\\
\textbf{Table 7.} Violation of Bell (for odd $m$) and Separability inequalities in $N=12$ qubit 4-uniform HG state. Here k is the number of traced out qubits. Red line represents that when k = 1, or equivalently one qubit is traced out, the violation of Bell inequalities increases. This is caused by decrease in the classical bound \cite{Mermin}. 
\end{center}




\section{Appendix G: Separability  inequality violations for
fully-connected three-uniform hypergraph states after loosing one qubit}
\begin{lemma}
The following statements holds three-uniform complete hypergraph states:\\
(i) For $N=8k+5$, $N=8k+6$, or $N=8k+7$: 
\begin{equation}\label{G_1}
\langle G_7 \rangle =\langle\underset{m}{\underbrace{X\dots X}}\underset{N-m-1}{\underbrace{Z\dots Z}}\mathbbm1\rangle=\begin{cases}
\begin{array}[t]{cc}
-\frac{1}{2^{\left\lfloor\frac{N-1}{2}\right\rfloor}} & \mbox{if \ensuremath{(m-1)=0\  mod\  4 }},\\
+\frac{1}{2^{\left\lfloor\frac{N-1}{2}\right\rfloor}} & \mbox{if \ensuremath{(m-1)=2 \ mod\  4 }}.
\end{array}\end{cases}
\end{equation}
(ii) For $N=8k+1$, $N=8k+2$, or $N=8k+3$:
\begin{equation}\label{G_2}
\langle G_7 \rangle =\langle\underset{m}{\underbrace{X\dots X}}\underset{N-m-1}{\underbrace{Z\dots Z}}\mathbbm1\rangle=\begin{cases}
\begin{array}[t]{cc}
+\frac{1}{2^{\left\lfloor\frac{N-1}{2}\right\rfloor}} & \mbox{if \ensuremath{(m-1)=0\  mod\  4 }},\\
-\frac{1}{2^{\left\lfloor\frac{N-1}{2}\right\rfloor}} & \mbox{if \ensuremath{(m-1)=2 \ mod\  4 }}.
\end{array}\end{cases}
\end{equation}
(iii) For $N=4k$:
\begin{equation}\label{G_3}
\langle G_7 \rangle =\langle\underset{m}{\underbrace{X\dots X}}\underset{N-m-1}{\underbrace{Z\dots Z}}\mathbbm1\rangle=0
\end{equation}
\end{lemma}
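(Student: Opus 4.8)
The plan is to evaluate $\langle G_7\rangle$ directly, in the spirit of Lemma~\ref{lemma3full} and the reduced four-uniform computation of Appendix~F. I would insert $\ket{H}=\prod_{e\in E}C_e\ket{+}^{\otimes N}$ on both sides and push the left-hand product of controlled-phase gates through the Pauli string $\underset{m}{\underbrace{X\cdots X}}\,\underset{N-m-1}{\underbrace{Z\cdots Z}}\,\mathbbm{1}$ with the commutation identity~(\ref{identity}), keeping only the gates that survive with odd multiplicity. Labelling the sites by $\circledast$ for the $m$ qubits carrying $X$, by $\bigtriangleup$ for the $N-m-1$ qubits carrying $Z$, and by $\diamondsuit$ for the single traced-out qubit carrying $\mathbbm{1}$, I record the diagonal of the resulting product through the weights $\alpha,\beta$ and $\gamma\in\{0,1\}$ in the three blocks (the $\diamondsuit$-block has one site, hence $\gamma\in\{0,1\}$). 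Since the state is three-uniform only the $X$-sites generate new gates, every original $3$-edge returns once and cancels against its right-hand partner, and the survivors are the reduced $C_{e\backslash\{\dots\}}$ of cardinality $2$, $1$ and $0$.

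The essential point, and the reason the classification runs over $(m-1)\bmod 4$ (that is, over odd $m$), is that at odd $m$ the two-site gate $C_{\bigtriangleup\bigtriangleup}$ survives with multiplicity $\binom{m}{1}=m$, contributing a term $\binom{\beta}{2}$ to the diagonal parity $s(\alpha,\beta,\gamma)$, while $C_{\circledast\circledast}$ likewise survives and contributes $\binom{\alpha}{2}$. I would assemble the complete survivor list into a table in the style of Tables~1 and~6, read off $s$ as a short sum of products $\binom{\alpha}{i}\binom{\beta}{j}\binom{\gamma}{\ell}$ (the $\diamondsuit$-site entering only through $\binom{\gamma}{1}$), reduce the correlator to $\pm 2^{-N}\,\TR[\cdots]$ of the diagonal product, and then tabulate $(-1)^{s}$ against the residues of $\alpha$ and $\beta$ separately for $\gamma=0$ and $\gamma=1$, carrying out the $\alpha$- and $\beta$-sums with the identities~(\ref{identity1}),~(\ref{identity2}).

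The presence of \emph{both} $C_{\circledast\circledast}$ and $C_{\bigtriangleup\bigtriangleup}$ makes the $\alpha$- and $\beta$-sums each collapse to a real or imaginary part of $(1+i)^{m}$ and $(1+i)^{N-m-1}$; their product is then $m$-independent in magnitude and, after inserting Table~0, equals $2^{-\lfloor(N-1)/2\rfloor}$. The residue $N\bmod 8$ selects the regime. For $N=8k+1,2,3$ and $N=8k+5,6,7$ the relevant Table-0 entry is nonzero and one gets $\pm 2^{-\lfloor(N-1)/2\rfloor}$, with the overall sign flipping between the two families precisely because Table~0 changes sign under $n\mapsto n+4$ (this produces parts~(ii) and~(i)); for $4\mid N$ the combined $\beta,\gamma$-sum is proportional to a vanishing entry of Table~0 (for example $\mathrm{Re}[(1+i)^{N-m-1}]=0$ when $N-m-1\equiv 2\bmod 4$), which forces $\langle G_7\rangle=0$ and gives part~(iii). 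The residual $\pm1$ in each regime is the product of the sign of the binomial sum and the global sign generated by the $C_{\{\}}=-\mathbbm{1}$ survivors, and I would fix it on each bucket $(N\bmod 8,\,m\bmod 8)$ exactly as the $GL$ factors are fixed in Appendices~B, D and~F.

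As an independent check I would also run the projection argument of Appendix~D (Part~4): measuring the traced-out qubit in the $Z$ basis writes $\ket{H}$ as a superposition of a three-uniform $M$-qubit state (outcome $+$) and a two-and-three-uniform $M$-qubit state (outcome $-$), $M=N-1$, so that $\langle G_7\rangle$ is, up to normalization, the sum of their $X^{\otimes m}Z^{\otimes(M-m)}$ correlators; for $4\mid N$ these two contributions are equal and opposite, making part~(iii) transparent. I expect the main obstacle to be bookkeeping rather than any new idea: getting the survivor parities right at odd $m$, propagating the global $C_{\{\}}$ sign, and combining the $\gamma=0$ and $\gamma=1$ rows must be done coherently across the full product of residue classes $N\bmod 8$ and $m\bmod 4$. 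The delicate point specific to this lemma is the $\diamondsuit$–$\bigtriangleup$ coupling: the cancellation producing the zero for $4\mid N$ (and, dually, the reinforcement producing $2^{-\lfloor(N-1)/2\rfloor}$ otherwise) hinges on the two $\gamma$-contributions landing on the correct entries of Table~0.
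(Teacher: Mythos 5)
Your proposal follows essentially the same route as the paper's Appendix~G proof: pushing the phase gates through the Pauli string via the commutation identity~(\ref{identity}), tabulating the surviving reduced gates and their parities in $\alpha,\beta,\gamma$ with $\gamma\in\{0,1\}$ for the traced-out site, splitting the $\gamma=0$ and $\gamma=1$ contributions, collapsing the binomial sums to $\mathrm{Re}/\mathrm{Im}$ of $(1+i)^{m}$ and $(1+i)^{N-m-1}$ via Table~0, and fixing the residual sign through the $C_{\{\}}$ count. The mechanisms you identify for the vanishing at $4\mid N$ (a zero entry of Table~0 for $(1+i)^{N-m-1}$) and for the sign flip between the two residue families (the sign change of Table~0 under $n\mapsto n+4$) are exactly those used in the paper, so the plan is sound and essentially identical to the published argument.
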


\begin{proof}
As a starting point, we derive the remaining gates on the right hand side of the equations (\ref{G_1}), the same derivation turns out to be working for Eq. (\ref{G_2}) and (\ref{G_3}). This approach is analogous to the Appendix B, but now, $m$ is odd.

\begin{tabular}{lrl}
 &  & \tabularnewline
$1.$ & $\# \circledast\circledast=$ & $\binom{m-2}{1}$ is odd $\Rightarrow \;C_{\circledast\circledast}$ remains.\tabularnewline
 &  & \tabularnewline
$2.$ & $\#\circledast=$ & $\binom{m-1}{2}=\frac{(m-1)(m-2)}{2}$ is $\begin{cases}
\begin{array}[t]{cc}
\mbox{even,  if (\ensuremath{m-1)=0}  mod 4 } &\Rightarrow C_{\circledast}\mbox{ cancels.}\\
\mbox{odd,  if (\ensuremath{m-1)=2}  mod 4 } & \Rightarrow C_{\circledast}\mbox{ remains.}
\end{array}\end{cases}$ \tabularnewline
 &  & \tabularnewline
$3.$  & $ \#\{\}=$ & $\binom{m}{3}=\frac{m(m-1)(m-2)}{2\cdot3}$ is $\begin{cases}
\begin{array}[t]{cc}
\mbox{even,  if (\ensuremath{m-1)=0}  mod 4 } &\Rightarrow\mbox{ gives a positive sign.}\\
\mbox{odd,  if (\ensuremath{m-1)=2}  mod 4 } & \Rightarrow\mbox{ gives a negative sign.}
\end{array}\end{cases}$ \tabularnewline
 &  & \tabularnewline
$4.$ & $\#\bigtriangleup\bigtriangleup=$ &  $\binom{m}{1}$ is odd $\Rightarrow \; C_{\bigtriangleup\bigtriangleup}$ remains.  \tabularnewline
$5.$ & $\#\circledast\bigtriangleup=$ &  $\binom{m-1}{1}$ is even $\Rightarrow \; C_{\circledast\bigtriangleup}$ cancels.  \tabularnewline
$6.$  & $ \# \bigtriangleup=$ & $\binom{m}{2}=\frac{m(m-1)}{2\cdot3}$ is $\begin{cases}
\begin{array}[t]{cc}
\mbox{even,  if (\ensuremath{m-1)=0}  mod 4 } &\Rightarrow C_{\bigtriangleup}\mbox{ cancels.}\\
\mbox{odd,  if (\ensuremath{m-1)=2}  mod 4 } & \Rightarrow C_{\bigtriangleup}\mbox{ remains.}
\end{array}\end{cases}$ \tabularnewline
 &  & \tabularnewline

\end{tabular}
\begin{center}
\textbf{Table 8.} Counting phase gates for a three-uniform HG when $m$ ($m$ is odd) systems are measured in $X$ direction.
\end{center}

Consider two cases:\\

\textbf{1.} If $(m-1)=0$ mod $4$:
\begin{align}\label{part4_1}
\begin{split}
\langle G_7 \rangle  & =\pm\bra{+}^{\otimes N}   \underset{m}{\underbrace{X\dots X}}\underset{N-m-1}{\underbrace{Z\dots Z}}\mathbbm1 \prod_{\forall \circledast,\bigtriangleup,\diamondsuit } C_{\circledast\circledast}C_{\bigtriangleup\bigtriangleup} C_{\bigtriangleup\diamondsuit}   \ket{+}^{\otimes N}\\
& =\pm \frac{1}{2^N} Tr\Big[C_{\circledast\circledast}C_{\bigtriangleup\bigtriangleup} C_{\bigtriangleup\diamondsuit}  C_{\bigtriangleup} \Big].
\end{split}
\end{align}

Here $\circledast$ again refers to X operator, $\bigtriangleup$ to Z and $\diamondsuit$ to  $\mathbbm1$ and is denoted by $\gamma$. The strategy is similar to the previous case: count the number  of +1's and -1's on the diagonal. Their difference divided by $2^N$, gives the trace. \\

We use $(-1)^s$ to define the sign of the diagonal element and $s=\binom{\alpha}{2}+\binom{\beta}{2}+\binom{\beta}{1}\binom{\gamma}{1}+\binom{\beta}{1}$.  If $s$ is even, the value on the diagonal is $+1$ and $-1$, otherwise. We consider all possible values of $\alpha$, $\beta$  and $\gamma$:\\

a) If $\gamma$ is even (that is $\gamma=0$ ):\\

Considering the terms from even $\beta$:\\

\begin{tabular}{lllc}
\textbf{1. } & $\beta$ is even \& $\alpha$ is even & \multicolumn{2}{l}{if $\alpha=0$ mod 4 and  $\beta=0$ mod 4  $\Rightarrow$ $(-1)^{s}=+1$}\tabularnewline
 &  & \multicolumn{2}{l}{if $\alpha=0$ mod 4 and  $\beta=2$ mod 4  $\Rightarrow$ $(-1)^{s}=-1$}\tabularnewline
 &  & \multicolumn{2}{l}{if $\alpha=2$ mod 4 and  $\beta=0$ mod 4  $\Rightarrow$ $(-1)^{s}=-1$}\tabularnewline
 &  & \multicolumn{2}{l}{if $\alpha=2$ mod 4 and  $\beta=2$ mod 4  $\Rightarrow$ $(-1)^{s}=+1$}\tabularnewline
 &  &  & \tabularnewline
\textbf{2. } & $\beta$ is even \& $\alpha$ is odd & \multicolumn{2}{l}{if $(\alpha-1)=0$ mod 4 and  $\beta=0$ mod 4  $\Rightarrow$ $(-1)^{s}=+1$}\tabularnewline
 &  & \multicolumn{2}{l}{if $(\alpha-1)=0$ mod 4 and  $\beta=2$ mod 4  $\Rightarrow$ $(-1)^{s}=-1$}\tabularnewline
 &  & \multicolumn{2}{l}{if $(\alpha-1)=2$ mod 4 and  $\beta=0$ mod 4  $\Rightarrow$ $(-1)^{s}=-1$}\tabularnewline
 &  & \multicolumn{2}{l}{if $(\alpha-1)=2$ mod 4 and  $\beta=2$ mod 4  $\Rightarrow$ $(-1)^{s}=+1$}\tabularnewline
 &  &  & \tabularnewline
\end{tabular}\\

Considering odd $\beta$:\\

\begin{tabular}{lllc}
\textbf{3. } & $\beta$ is odd \& $\alpha$ is even & \multicolumn{2}{l}{if $\alpha=0$ mod 4 and  $(\beta-1)=0$ mod 4  $\Rightarrow$ $(-1)^{s}=-1$}\tabularnewline
 &  & \multicolumn{2}{l}{if $\alpha=0$ mod 4 and  $(\beta-1)=2$ mod 4  $\Rightarrow$ $(-1)^{s}=+1$}\tabularnewline
 &  & \multicolumn{2}{l}{if $\alpha=2$ mod 4 and  $(\beta-1)=0$ mod 4  $\Rightarrow$ $(-1)^{s}=+1$}\tabularnewline
 &  & \multicolumn{2}{l}{if $\alpha=2$ mod 4 and  $(\beta-1)=2$ mod 4  $\Rightarrow$ $(-1)^{s}=-1$}\tabularnewline
 &  &  & \tabularnewline
\textbf{4. } & $\beta$ is odd \& $\alpha$ is odd & \multicolumn{2}{l}{if $(\alpha-1)=0$ mod 4 and  $(\beta-1)=0$ mod 4  $\Rightarrow$ $(-1)^{s}=-1$}\tabularnewline
 &  & \multicolumn{2}{l}{if $(\alpha-1)=0$ mod 4 and  $(\beta-1)=2$ mod 4  $\Rightarrow$ $(-1)^{s}=+1$}\tabularnewline
 &  & \multicolumn{2}{l}{if $(\alpha-1)=2$ mod 4 and  $(\beta-1)=0$ mod 4  $\Rightarrow$ $(-1)^{s}=+1$}\tabularnewline
 &  & \multicolumn{2}{l}{if $(\alpha-1)=2$ mod 4 and  $(\beta-1)=2$ mod 4  $\Rightarrow$ $(-1)^{s}=-1$}\tabularnewline
 &  &  & \tabularnewline
\end{tabular}

b) If $\gamma$ is odd (that is $\gamma=1$):\\

Considering the terms from even $\beta$:\\

\textbf{1.} and  \textbf{2.}  Nothing  changes  in comparison  to a) \textbf{1.} and \textbf{2.}\\

\textbf{3.} and  \textbf{4.}  These two terms have opposite sign from a) \textbf{3.} and \textbf{4.} Therefore, in the sum they cancel ($\gamma$ is always  1.)\\

Therefore,
\begin{align}
\begin{split}
\langle G_{7}\rangle & = \frac{1}{2^{N-1}}\bigg[\sum_{\beta=0,4..}^{N-m-1}\binom{N-m-1}{\beta}-\binom{N-m-1}{\beta+2}\bigg]\bigg[\sum_{\alpha=0,4..}^{m}\binom{m}{\alpha}+\binom{m}{\alpha+1}-\binom{m}{\alpha+2}-\binom{m}{\alpha+3}\bigg]\\
&= \frac{1}{2^{N-1}}Re[(1+i)^{N-m-1}]\cdot\Big(Re[(1+i)^m]+Im[(1+i)^m]\Big).
\end{split}
\end{align}

Now we can consider each cases separately, for this we use the lookup \textbf{table 0}:\\

$(i)$ If $N=8k+5$ and if $(m-1)=4$ mod $8$,  $Re[(1+i)^{N-m-1}]=-2^\frac{N-m-2}{2}$ and $Re[(1+i)^m]+Im[(1+i)^m]=2^\frac{m+1}{2}$. Therefore, 
\begin{equation}
\langle G_{7}\rangle =-\frac{2^\frac{N-m-2}{2}\cdot 2^\frac{m+1}{2} }{2^{N-1}}=-\Big(\frac{1}{2}\Big)^{\frac{N-1}{2}}.
\end{equation}
And, if  $(m-1)=0$ mod $8$,  $Re[(1+i)^{N-m-1}]=+2^\frac{N-m-2}{2}$ and $Re[(1+i)^m]+Im[(1+i)^m]=-2^\frac{m+1}{2}$. Therefore, 
\begin{equation}
\langle G_{7}\rangle =-\frac{2^\frac{N-m-2}{2}\cdot 2^\frac{m+1}{2} }{2^{N-1}}=-\Big(\frac{1}{2}\Big)^{\frac{N-1}{2}}.
\end{equation}
Exactly the same is true for  $N=8k+7$. But for $N=8k+6$ and if $(m-1)=4$ mod $8$,  $Re[(1+i)^{N-m-1}]=-2^\frac{N-m-1}{2}$ and $Re[(1+i)^m]+Im[(1+i)^m]=2^\frac{m+1}{2}$. Therefore, 
\begin{equation}
\langle G_{7}\rangle =-\frac{2^\frac{N-m-1}{2}\cdot 2^\frac{m+1}{2} }{2^{N-1}}=-\Big(\frac{1}{2}\Big)^{\left\lfloor\frac{N-1}{2}\right\rfloor}.
\end{equation}
And,  if $(m-1)=0$ mod $8$,  $Re[(1+i)^{N-m-1}]=2^\frac{N-m-1}{2}$ and $Re[(1+i)^m]+Im[(1+i)^m]=-2^\frac{m+1}{2}$. Therefore, 
\begin{equation}
\langle G_{7}\rangle =-\frac{2^\frac{N-m-1}{2}\cdot 2^\frac{m+1}{2} }{2^{N-1}}=-\Big(\frac{1}{2}\Big)^{\left\lfloor\frac{N-1}{2}\right\rfloor}.
\end{equation}\\

$(ii)$   If $N=8k+1$ and if $(m-1)=4$ mod $8$,  $Re[(1+i)^{N-m-1}]=-2^\frac{N-m-2}{2}$ and $Re[(1+i)^m]+Im[(1+i)^m]=-2^\frac{m+1}{2}$. Therefore, 
\begin{equation}
\langle G_{7}\rangle =\frac{2^\frac{N-m-2}{2}\cdot 2^\frac{m+1}{2} }{2^{N-1}}=\Big(\frac{1}{2}\Big)^{\frac{N-1}{2}}.
\end{equation}
And,  if $(m-1)=0$ mod $8$,  $Re[(1+i)^{N-m-1}]=2^\frac{N-m-1}{2}$ and $Re[(1+i)^m]+Im[(1+i)^m]=2^\frac{m+1}{2}$. Therefore, 
\begin{equation}
\langle G_{7}\rangle =\frac{2^\frac{N-m-1}{2}\cdot 2^\frac{m+1}{2} }{2^{N-1}}=\Big(\frac{1}{2}\Big)^{\left\lfloor\frac{N-1}{2}\right\rfloor}.
\end{equation}
It it analogous for other two cases as well.\\

$(iii)$ For $N=4k$,   $Re[(1+i)^{N-m-1}]=0$. Therefore, $\langle G_{7}\rangle =0.$ \\


\textbf{2.} If $(m-1)=2$ mod $4$:
\begin{align}\label{part4_1}
\begin{split}
\langle G_7 \rangle  & =\bra{+}^{\otimes N}   \underset{m}{\underbrace{X\dots X}}\underset{N-m-1}{\underbrace{Z\dots Z}}\mathbbm1 \prod_{\forall \circledast,\bigtriangleup,\diamondsuit } C_{\circledast\circledast}C_{\circledast}C_{\bigtriangleup\bigtriangleup} C_{\bigtriangleup\diamondsuit}C_{\bigtriangleup}C_{\diamondsuit}    \ket{+}^{\otimes N}\\
& = \frac{1}{2^N} Tr\Big[C_{\circledast\circledast}C_{\circledast}C_{\bigtriangleup\bigtriangleup} C_{\bigtriangleup\diamondsuit}C_{\diamondsuit}    \Big].
\end{split}
\end{align}

Here $\circledast$ again refers to X operator, $\bigtriangleup$ to Z and $\diamondsuit$ to  $\mathbbm1$ and is denoted by $\gamma$. The strategy is similar to the previous case: count the number  of +1's and -1's on the diagonal and their difference divided by $2^N$, gives the trace. \\

We use $(-1)^s$ to define the sign of the diagonal element and $s=\binom{\alpha}{2}+\binom{\alpha}{1}+\binom{\beta}{2}+\binom{\beta}{1}\binom{\gamma}{1}+\binom{\gamma}{1}$.  If $s$ is even, the value on the diagonal is $+1$ and $-1$, otherwise. We consider all possible values of $\alpha$, $\beta$ and $\gamma$:\\

a) If $\gamma$ is even (that is $\gamma=0$):\\

Considering the terms from even $\beta$:\\

\begin{tabular}{lllc}
\textbf{1. } & $\beta$ is even \& $\alpha$ is even & \multicolumn{2}{l}{if $\alpha=0$ mod 4 and  $\beta=0$ mod 4  $\Rightarrow$ $(-1)^{s}=+1$}\tabularnewline
 &  & \multicolumn{2}{l}{if $\alpha=0$ mod 4 and  $\beta=2$ mod 4  $\Rightarrow$ $(-1)^{s}=-1$}\tabularnewline
 &  & \multicolumn{2}{l}{if $\alpha=2$ mod 4 and  $\beta=0$ mod 4  $\Rightarrow$ $(-1)^{s}=-1$}\tabularnewline
 &  & \multicolumn{2}{l}{if $\alpha=2$ mod 4 and  $\beta=2$ mod 4  $\Rightarrow$ $(-1)^{s}=+1$}\tabularnewline
 &  &  & \tabularnewline
\textbf{2. } & $\beta$ is even \& $\alpha$ is odd & \multicolumn{2}{l}{if $(\alpha-1)=0$ mod 4 and  $\beta=0$ mod 4  $\Rightarrow$ $(-1)^{s}=-1$}\tabularnewline
 &  & \multicolumn{2}{l}{if $(\alpha-1)=0$ mod 4 and  $\beta=2$ mod 4  $\Rightarrow$ $(-1)^{s}=+1$}\tabularnewline
 &  & \multicolumn{2}{l}{if $(\alpha-1)=2$ mod 4 and  $\beta=0$ mod 4  $\Rightarrow$ $(-1)^{s}=+1$}\tabularnewline
 &  & \multicolumn{2}{l}{if $(\alpha-1)=2$ mod 4 and  $\beta=2$ mod 4  $\Rightarrow$ $(-1)^{s}=-1$}\tabularnewline
 &  &  & \tabularnewline
\end{tabular}\\

Considering odd $\beta$:\\

\begin{tabular}{lllc}
\textbf{3. } & $\beta$ is odd \& $\alpha$ is even & \multicolumn{2}{l}{if $\alpha=0$ mod 4 and  $(\beta-1)=0$ mod 4  $\Rightarrow$ $(-1)^{s}=+1$}\tabularnewline
 &  & \multicolumn{2}{l}{if $\alpha=0$ mod 4 and  $(\beta-1)=2$ mod 4  $\Rightarrow$ $(-1)^{s}=-1$}\tabularnewline
 &  & \multicolumn{2}{l}{if $\alpha=2$ mod 4 and  $(\beta-1)=0$ mod 4  $\Rightarrow$ $(-1)^{s}=-1$}\tabularnewline
 &  & \multicolumn{2}{l}{if $\alpha=2$ mod 4 and  $(\beta-1)=2$ mod 4  $\Rightarrow$ $(-1)^{s}=+1$}\tabularnewline
 &  &  & \tabularnewline
\textbf{4. } & $\beta$ is odd \& $\alpha$ is odd & \multicolumn{2}{l}{if $(\alpha-1)=0$ mod 4 and  $(\beta-1)=0$ mod 4  $\Rightarrow$ $(-1)^{s}=-1$}\tabularnewline
 &  & \multicolumn{2}{l}{if $(\alpha-1)=0$ mod 4 and  $(\beta-1)=2$ mod 4  $\Rightarrow$ $(-1)^{s}=+1$}\tabularnewline
 &  & \multicolumn{2}{l}{if $(\alpha-1)=2$ mod 4 and  $(\beta-1)=0$ mod 4  $\Rightarrow$ $(-1)^{s}=+1$}\tabularnewline
 &  & \multicolumn{2}{l}{if $(\alpha-1)=2$ mod 4 and  $(\beta-1)=2$ mod 4  $\Rightarrow$ $(-1)^{s}=-1$}\tabularnewline
 &  &  & \tabularnewline
\end{tabular}

b) If $\gamma$ is odd  (that is $\gamma=1$)::\\

\textbf{1.} and  \textbf{2.}  These two terms have opposite sign from a) \textbf{1.} and \textbf{2.} Therefore, in the sum they cancel ($\gamma$ is always 1.)\\

\textbf{3.} and  \textbf{4.}  Nothing  changes  in comparison  to a) \textbf{3.} and \textbf{4.}\\

Therefore,
\begin{align}
\begin{split}
\langle G_{7}\rangle & = \frac{1}{2^{N-1}}\bigg[\sum_{\beta=0,4..}^{N-m-1}\binom{N-m-1}{\beta+1}-\binom{N-m-1}{\beta+3}\bigg]\bigg[\sum_{\alpha=0,4..}^{m}\binom{m}{\alpha}-\binom{m}{\alpha+1}-\binom{m}{\alpha+2}+\binom{m}{\alpha+3}\bigg]\\
&= \frac{1}{2^{N-1}}Im[(1+i)^{N-m-1}]\cdot\Big(Re[(1+i)^m]-Im[(1+i)^m]\Big).
\end{split}
\end{align}
Now we can consider each cases separately, for this we use the lookup \textbf{table 0}:\\

$(i)$ If $N=8k+5$ and if $(m-1)=4$ mod $8$,  $Im[(1+i)^{N-m-1}]=2^\frac{N-m-2}{2}$ and $Re[(1+i)^m]-Im[(1+i)^m]=-2^\frac{m+1}{2}$. And the overall sign from the Table 8 is negative.  Therefore, 
\begin{equation}
\langle G_{7}\rangle =-\frac{2^\frac{N-m-2}{2}\cdot (-2^\frac{m+1}{2}) }{2^{N-1}}=\Big(\frac{1}{2}\Big)^{\frac{N-1}{2}}.
\end{equation}
And, if  $(m-1)=0$ mod $8$,  $Im[(1+i)^{N-m-1}]=-2^\frac{N-m-2}{2}$ and $Re[(1+i)^m]-Im[(1+i)^m]=2^\frac{m+1}{2}$. Overall sign in negative. Therefore, 
\begin{equation}
\langle G_{7}\rangle =-\frac{-2^\frac{N-m-2}{2}\cdot 2^\frac{m+1}{2} }{2^{N-1}}=\Big(\frac{1}{2}\Big)^{\frac{N-1}{2}}.
\end{equation}
The same holds for  $N=8k+6$ and $N=8k+7$.  Besides,  $(ii)$ differs with the sign flip and is trivial to check.It is also trivial to prove $(iii)$,  as when $N=4k$, $Im[(1+i)^{N-m-1}]=0$.
\end{proof}

\begin{lemma}\label{theorem3unifviol}
A $N$-qubit (except when $N=4k$) three-uniform HG state violates the separability inequality exponentially after tracing out a single qubit.
\end{lemma}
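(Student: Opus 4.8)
The plan is to reduce the claim to correlations that are already available and then feed them into a Mermin-type Bell operator, comparing its quantum value against the separability bound of Ref.~\cite{Roy}. Tracing out one party leaves a mixed $(N-1)$-qubit state whose full correlations are exactly the expectation values $\langle \underset{m}{\underbrace{X\dots X}}\underset{N-m-1}{\underbrace{Z\dots Z}}\mathbbm1\rangle$ computed in the previous lemma: by Eqs.~\eqref{G_1} and \eqref{G_2} they equal $\pm 1/2^{\lfloor (N-1)/2\rfloor}$ for odd $m$ whenever $N\neq 4k$, while by Eq.~\eqref{G_3} they vanish identically when $N=4k$. This immediately explains the exclusion of $N=4k$ and, crucially, tells us that in all remaining residue classes every relevant correlation has the \emph{same} modulus $1/2^{\lfloor (N-1)/2\rfloor}$.

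First I would build on the surviving $N-1$ qubits a Mermin-type Bell operator $\BB_{N-1}$, exactly as in Appendices~C and E, fixing $A=Z$ and $B=X$ and retaining all full-correlation terms that carry an \emph{odd} number of $X$ factors; since $m$ is odd throughout Eqs.~\eqref{G_1}--\eqref{G_2}, each retained term is matched by a nonzero correlation. With the signs arranged so that the contributions add constructively, inserting the correlations gives
\begin{equation}
\langle \BB_{N-1}\rangle_Q \;\sim\; \frac{1}{2^{\lfloor (N-1)/2\rfloor}}\sum_{m\ \mathrm{odd}}\binom{N-1}{m}
=\frac{2^{N-2}}{2^{\lfloor (N-1)/2\rfloor}}\;\sim\; 2^{N/2},
\end{equation}
which grows exponentially in $N$ (edge terms such as $m=1$ or $m=N-1$ are lower order and do not affect the scaling).

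Next I would invoke the separability inequality of Ref.~\cite{Roy}, for which every fully separable state obeys $\langle \BB_{N-1}\rangle_{\mathrm{sep}}\le\sqrt2$, a bound independent of the particle number; this is the constant $\sqrt2$ recorded in Table~7. Dividing the quantum value by $\sqrt2$ then leaves a ratio scaling like $2^{N/2}$, which is the asserted exponential violation of separability after the loss of a single particle, and hence shows that the reduced state remains highly entangled.

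The main obstacle will be the sign bookkeeping. The correlations carry signs depending both on $m \bmod 4$ and on the residue of $N \bmod 8$ (the two branches of Eqs.~\eqref{G_1} and \eqref{G_2}), while the Mermin-type coefficients carry their own alternating signs. I would therefore have to verify, case by case in $N \bmod 8$, that these two sign patterns are compatible, i.e.\ that the chosen $\BB_{N-1}$ collects the correlations constructively rather than allowing cancellation. Once the signs are aligned the binomial summation above is routine, and the exponential separation between the quantum value $\sim 2^{N/2}$ and the constant separability bound $\sqrt2$ follows.
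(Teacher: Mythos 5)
Your proposal follows essentially the same route as the paper: both take the reduced-state correlations $\langle X\dots X Z\dots Z\,\mathbbm{1}\rangle=\pm 1/2^{\lfloor (N-1)/2\rfloor}$ for odd $m$ from the preceding lemma, sum them with binomial multiplicities in a Mermin-type operator on the remaining $M=N-1$ qubits to get $\sum_{m\ \mathrm{odd}}\binom{M}{m}2^{-M/2}=\sqrt{2}^{\,N-3}$, and compare against the $N$-independent separability bound $\sqrt{2}$ of Ref.~\cite{Roy}. Your attention to the sign alignment between the $m\bmod 4$ pattern of the correlations and the Mermin coefficients is the one point the paper glosses over (it treats a single residue class, $N=8k+5$, and declares the rest analogous), but it does not change the argument.
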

\begin{proof}
We consider only one case, $N=8k+5$, as others are analogous. Here $M=N-1$:\\
\begin{align}
\begin{split}
\langle \BB_N \rangle _Q & =\sum_{m \; odd}^{M}\binom{M}{m}\Big(\frac{1}{\sqrt{2}}\Big)^{M}=  2^{M-1}\cdot2^{-M/2}=\sqrt{2}^{N-3}. 
\end{split}
\end{align}
Separability bound is $\sqrt{2}$ \cite{Roy} and it does not depend on the number of qubits.
\end{proof}

It is important to note that the similar violation is maintained after tracing out more than one qubit. Numerical evidence  for $N=11$ is presented below.\\

\begin{center}
\begin{tabular}{|c|c|c|c|}
\hline 
\#k & Quantum Value  & Separability Bound & $\approx$Ratio\tabularnewline
\hline 
\hline 
\textbf{0} & 511.5 & $\sqrt{2}$ & 361.69\tabularnewline
\hline 
{$1$} & 16 & $\sqrt{2}$ & 11.31\tabularnewline
\hline 
$2$ & 8 & $\sqrt{2}$ & 5.66\tabularnewline
\hline 
$3$ & 4 & $\sqrt{2}$ & 2.83\tabularnewline
\hline 
$4$ & 2 & $\sqrt{2}$ & 1.414\tabularnewline
\hline 
\end{tabular}\\
$\quad$\\
\textbf{Table 9.} Violation of the separability inequalities \cite{Roy} in $N=11$-qubit 3-uniform HG states. Here $k$ is the number of traced out qubits. When $k =0$, the Mermin inequality is violated as expected.
\end{center}

\twocolumngrid


\begin{thebibliography}{99}

\bibitem{entangleduseful}
P. Hayden, D.W. Leung, and A. Winter,
Comm. Math. Phys. {\bf 265}, 95 (2006);
D. Gross, S. Flammia, and J. Eisert,
Phys. Rev. Lett. {\bf 102}, 190501 (2009);
M. J. Bremner, C. Mora, and A. Winter,
Phys. Rev. Lett. {\bf 102}, 190502 (2009).

\bibitem{hein}
M. Hein, W. D\"ur, J. Eisert, R. Raussendorf, M. Van den Nest, and H.-J. Briegel,
{\it Entanglement in Graph States and its Applications},
in {\em Quantum Computers, Algorithms and Chaos}, edited by G.
Casati, D.L. Shepelyansky, P. Zoller, and G. Benenti (IOS Press,
Amsterdam, 2006), quant-ph/0602096.


\bibitem{Mermin}
N. D. Mermin,
Phys. Rev. Lett. {\bf 65}, 1838 (1990).

\bibitem{gthb}
O. G\"uhne, G. T\'oth, P. Hyllus, and H. J. Briegel,
Phys. Rev. Lett. {\bf 95}, 120405 (2005); 
G. T\'oth, O. G\"uhne, H.~J. Briegel,
Phys. Rev. A {\bf 73}, 022303 (2006).


\bibitem{gtreview}
O. G\"uhne and G. T\'oth,
{Phys. Rep.} {\bf 474}, 1 (2009).

\bibitem{Kruszynska2009}
C. Kruszynska and B. Kraus,
Phys. Rev. A {\bf 79}, 052304 (2009).

\bibitem{Qu2013_encoding}
R.~Qu, J.~Wang, Z.~Li, and Y.~Bao,
Phys. Rev. A {\bf 87}, 022311 (2013).

\bibitem{Rossi2013}
M.~Rossi, M.~Huber, D.~Bru\ss, and C.~Macchiavello,
New J. Phys. {\bf 15}, 113022 (2013).

\bibitem{Otfried}
O. G\"uhne, M. Cuquet, F. E. S. Steinhoff, T. Moroder, M. Rossi,
D. Bru{\ss}, B. Kraus, and  C. Macchiavello,
J. Phys. A: Math. Theor. {\bf 47}, 335303 (2014).

\bibitem{chenlei}
X.-Y. Chen and L. Wang, 
J. Phys. A: Math. Theor. {\bf 47}, 415304 (2014).

\bibitem{lyons}
D. W. Lyons, D. J. Upchurch, S. N. Walck, and C. D. Yetter,
J. Phys. A: Math. Theor. {\bf 48}, 095301 (2015).

\bibitem{scripta}
M. Rossi, D. Bru{\ss}, and C. Macchiavello,
Phys. Scr. {\bf T160}, 014036 (2014). 

\bibitem{mora}
H. Buhrman, R. Cleve, J. Watrous, and R. de Wolf,
Phys. Rev. Lett. {\bf 87}, 167902 (2001),
C.E. Mora, H.J. Briegel, and B. Kraus, 
Int. J. Quantum Inform. {\bf 5}, 729 (2007). 

\bibitem{qma}
A. B. Grilo, I. Kerenidis, and J. Sikora,
arXiv:1410.2882.

\bibitem{Yoshida}
B. Yoshida, arXiv:1508.03468.

\bibitem{Akimasa}
J. Miller and A Miyake, arXiv:1508.02695.


\bibitem{Hardy92} 
L. Hardy, 
Phys. Rev. Lett. {\bf 71}, 1665 (1993).

\bibitem{Svetlichny87} 
G. Svetlichny, 
Phys. Rev. D {\bf 35}, 3066 (1987).

\bibitem{popescurohrlich}
S. Popescu and D. Rohrlich,
Phys. Lett. A {\bf 166}, 293 (1992).

\bibitem{wwzb}
R. F. Werner and M. M. Wolf, 
Phys. Rev. A {\bf 64}, 032112 (2001);
M. \.Zukowski and C. Brukner, 
Phys. Rev. Lett. {\bf 88}, 210401 (2002).


\bibitem{brunnerreview}
N. Brunner, D. Cavalcanti, S. Pironio, V. Scarani, and S. Wehner,
Rev. Mod. Phys. {\bf 86}, 419 (2014).

\bibitem{brukner}
C. Brukner, M. \.{Z}ukowski, J.-W. Pan, and A. Zeilinger,
Phys. Rev. Lett. {\bf 92}, 127901 (2004). 


\bibitem{Abramsky}
S. Abramsky and C. Constantin,
Electronic Proc. Theor. Comp. Science
{\bf 171}, 10 (2014), arXiv:1412.5213.

\bibitem{Wang}
Z. Wang and D. Markham,
Phys. Rev. Lett. {\bf 108}, 210407 (2012).

\bibitem{appremark}
The appendix can be found in the supplementary 
online material.

\bibitem{Roy} 
S. M. Roy,
Phys. Rev. Lett. {\bf 94}, 010402 (2005).

\bibitem{giovanetti}
V. Giovannetti, S. Lloyd, and L. Maccone, 
Science {\bf 306}, 1330 (2004).

\bibitem{weibo}
W.-B. Gao, C.-Y. Lu, X.-C. Yao, P. Xu, O. G\"uhne, 
A. Goebel, Y.-A. Chen, C.-Z. Peng, Z.-B. Chen, and J.-W. Pan,
Nature Phys. {\bf 6}, 331 (2010).

\bibitem{Hoban11}
 M. J. Hoban, E. T. Campbell, K. Loukopoulos, and D. E. Browne, 
 New J. Phys. {\bf 13}, 023014 (2011).

\bibitem{Collins02} 
D. Collins, N. Gisin, S. Popescu, D. Roberts, and V. Scarani, 
Phys. Rev. Lett. {\bf 88}, 170405 (2002).

\bibitem{Buzek}
M. Koashi ,V. Buzek, and N. Imoto,
Phys. Rev. A 62, 050302(R) (2000)


\end{thebibliography}
\end{document}